\documentclass[10pt,onecolumn,amsmath,amssymb,nofootinbib,superscriptaddress]{revtex4-2}

\usepackage[utf8]{inputenc}
\usepackage[margin=1in]{geometry}
\usepackage{amsmath, amsthm, amssymb,amscd, mathrsfs, amsfonts, mathtools,pgfplots}
\usepackage{appendix}
\usepackage{graphicx}
\usepackage{relsize}
\usepackage{mathtools}
\usepackage{dsfont}
\usepackage{float}
\usepackage{soul}
\usepackage{quantikz}
\usepackage{afterpage}

\usetikzlibrary{shapes, backgrounds}
\tikzset{
  qaoaBlock/.style={
    draw,
    rounded corners=2pt,
    minimum width=1.4cm,
    minimum height=0.8cm,
    align=center,
    font=\small,
  },
  problem/.style={qaoaBlock, fill=green!30},
  mixer/.style={qaoaBlock, fill=blue!30},
}
\usetikzlibrary{decorations.pathmorphing}
\usetikzlibrary{arrows.meta,calc,positioning}
\usetikzlibrary{patterns}
\definecolor{ink}{HTML}{152238}
\definecolor{muted}{HTML}{5D6A7D}
\definecolor{paperbg}{HTML}{F4F7FA}
\definecolor{linegray}{HTML}{D8E0E8}
\definecolor{blue}{HTML}{3478B8}
\definecolor{teal}{HTML}{169C94}
\definecolor{coral}{HTML}{DE735B}
\definecolor{violet}{HTML}{7656A5}
\definecolor{gold}{HTML}{C78A20}

\usepackage{chngcntr}
\usepackage[bookmarks=true,
bookmarksnumbered=true,
breaklinks=true,
pdfstartview=FitH,
hyperfigures=false,
plainpages=false,
naturalnames=true,
colorlinks=true,
linkcolor=blue,       
citecolor=blue,       
urlcolor=blue,        
pagebackref=true,
pdfpagelabels]{hyperref}
\usepackage{ORCIDinREVTeX}
\theoremstyle{definition}
\newtheorem{thm}{Theorem}[section]
\newtheorem{prop}[thm]{Proposition}

\newtheorem{defn}[thm]{Definition}
\newtheorem{cor}[thm]{Corollary}
\newtheorem{rmk}[thm]{Remark}

\newtheorem{ex}[thm]{Example}

\allowdisplaybreaks

\usepackage[draft, commentmarkup=todo,
  todonotes={textsize=tiny, textwidth=0.83in}]{changes}

\definechangesauthor[name={Boris}, color=red]{new}
\definechangesauthor[name={Ilya Safro}, color=yellow]{IS}

\definechangesauthor[name={Boris}, color=blue]{BT}

\definechangesauthor[name={Bao Bach}, color=teal]{BB}

\usepackage{bbm}

\def\a{\alpha}

\def\s{\sigma}

\hypersetup{
	colorlinks = true,
	urlcolor = blue,
	linkcolor = blue,
	citecolor = red,
	pdfpagemode = UseNone
}

\usepackage{youngtab}
\usepackage{ytableau}

\begin{document}
\title{The Expressive Power of Constrained QAOA: What You Might Have MISsed}
\author{Boris Tsvelikhovskiy} 
\orcid{0000-0003-0798-7218}
\email{tsvelibor@gmail.com}
\affiliation{Department of Mathematics, University of California, Riverside, CA 92521, USA} 

\author{Bao Bach} 
\orcid{0000-0001-6210-7725}
\affiliation{Department of Computer and Information Sciences, University of Delaware, Newark, DE 19716, USA} 

\author{Ilya Safro} 
\orcid{0000-0001-6284-7408}
\affiliation{Department of Computer and Information Sciences, University of Delaware, Newark, DE 19716, USA} 
\affiliation{Department of Physics and Astronomy, University of Delaware, Newark, DE 19716, USA}

\begin{abstract}
Feasibility-preserving mixer Hamiltonians offer an attractive alternative to penalty encodings for constrained quantum optimization, yet how enforcing constraints alters the state space reachable by the Quantum Approximate Optimization Algorithm (QAOA) remains largely unexplored. We address this question within the context of the Maximum Independent Set (MIS) problem. Alongside the standard controlled bit-flip mixer, we analyze the alternative \emph{Flip-or-Stay mixer}, whose restriction to the feasible subspace $W_{\mathcal{F}}$ corresponds precisely to the shifted negative Laplacian of the independent-set reconfiguration graph. Although both mixers induce identical transitions between independent sets, the diagonal modification changes the spectral structure and can substantially enlarge the set of reachable quantum states.

For every connected input graph with at least two vertices, we prove that independently parameterized local controls associated with either mixer generate the full unitary Lie algebra on the feasible subspace. Standard QAOA exhibits a more nuanced algebraic structure. We prove that the dynamical Lie algebra associated with MIS-QAOA employing the conventional mixer embeds into its Flip-or-Stay counterpart. Furthermore, we construct an infinite family of graphs for which the corresponding vacuum-state dynamical group orbits exhibit strictly distinct dimensions, establishing a qualitative expressivity separation between the two architectures. For either mixer, we show that standard QAOA initialized in the empty set can prepare a state supported entirely on maximum independent sets at finite depth. For the free architectures, we also derive an exact loss-variance formula in the unitary-design limit, expressed through the number of feasible independent sets and the variance of their cardinalities.
\end{abstract}

\maketitle

\section{Introduction}
\label{sec:introduction}

Many important optimization problems are difficult not only because the number of candidate solutions is large, but because a substantial fraction of candidates may be infeasible. Scheduling, allocation, packing, routing, and graph optimization typically come with hard constraints that must be satisfied exactly. For quantum optimization algorithms, this creates a fundamental design choice. One can allow the quantum state to explore invalid configurations and penalize them through the objective Hamiltonian \cite{perlin2024q,farhi2020quantum}, or one can engineer the dynamics so that the computation never leaves the feasible set \cite{hen2016quantum,hen2016driver,HWORVB}. The latter approach is attractive because the quantum evolution is devoted entirely to valid solutions, but it raises an equally fundamental concern: restricting the dynamics may also restrict what the algorithm is capable of reaching.

Can hard feasibility constraints be enforced throughout a variational quantum algorithm without sacrificing its ability to explore the feasible solution space and reach an optimum? This question extends beyond the performance of a specific operator in variational quantum algorithms (e.g., the Quantum Approximate Optimization Algorithm) or a single optimization problem. It concerns the broader relationship  between constraints, expressivity, controllability, and trainability in variational quantum algorithms. A feasibility-preserving ansatz can be useful only if the restricted dynamics remains sufficiently rich; yet increasing expressivity can in turn produce highly concentrated optimization landscapes especially at large depth. Understanding these competing effects is therefore important for the design of quantum algorithms for constrained optimization.

Here we study this question using the Maximum Independent Set (MIS) problem, one of the canonical constrained graph optimization problems. \emph{Our results show that imposing feasibility need not compromise the expressive power of quantum approximate optimization algorithm (QAOA)}.  When the local controls of constrained QAOA are allowed to vary independently, the resulting dynamics is fully state controllable within the feasible subspace: any feasible quantum state can, in principle, be transformed exactly into any other. Thus, the hard constraint removes invalid configurations from the dynamics without introducing an additional fundamental obstruction to controllability.

\emph{Our contribution further shows that this strong controllability result is not merely an asymptotic statement about a highly flexible ansatz.} For standard QAOA, in which each layer is controlled by only one mixer parameter and one problem parameter, we prove that an optimal MIS state can be reached exactly at a finite circuit depth starting from the empty independent set, a state that is trivial to prepare. The result is existential: it does not imply a polynomial depth, nor does it provide an efficient method for finding the required parameters. Nevertheless, it establishes that the constrained standard QAOA dynamics is not fundamentally prevented from reaching an exact optimum even when its full dynamical Lie algebra is smaller than the full unitary algebra of the feasible space.

A second conclusion is that preserving the same set of feasible moves does not imply equivalent quantum dynamics. We study a modified feasibility-preserving mixer, which we call the Flip-or-Stay mixer, and compare it with the conventional Flip-or-Void construction. Both mixers connect exactly the same pairs of feasible configurations. Nevertheless, they can generate markedly different reachable dynamics. For an explicit infinite family of graphs, the conventional mixer confines the evolution from the empty independent set to a fixed four-dimensional subspace, whereas the proposed mixer accesses a subspace whose dimension grows with the graph family and achieves complete controllability within that subspace. We identify the origin of this separation: the dynamical Lie algebra underlying QAOA with the Flip-or-Stay mixer contains an additional diagonal operator that records the degree of each configuration in the feasible reconfiguration graph, thereby resolving dynamical degeneracies that persist under the standard adjacency mixer.

\emph{This observation suggests a broader principle for constrained quantum optimization. The set of allowed transitions between feasible configurations does not by itself determine the expressive power of a quantum ansatz.} The diagonal structure associated with those configurations can be equally important because it changes how the allowed transitions interfere and how dynamical degeneracies are resolved. In this sense, the geometry of the feasible solution space and the Hamiltonian used to explore that geometry should be viewed as separate ingredients of algorithm design.

The Maximum Independent Set problem provides a particularly transparent setting in which to make this connection. Given a graph, an independent set is a collection of vertices containing no adjacent pair, and MIS asks for an independent set of maximum cardinality. The problem is NP-hard and arises naturally in scheduling, resource allocation, network design, and related applications. Classical exact algorithms have achieved progressively improved exponential running times \cite{XN}, while approximation algorithms and rigorous bounds are known for structured sparse graph families, including large-girth $3$- and $4$-regular graphs \cite{Csoka}. Quantum approaches include the original QAOA formulation \cite{QAOA}, problem-specific constrained-mixer constructions \cite{HHR, HWORVB, Saleem}, Rydberg-atom implementations of MIS optimization \cite{EKC}, and more general quantum Hamiltonian approaches \cite{ZGYYWW}. Rigorous finite-depth analyses of QAOA on structured graph families have also continued to develop, for example for high-girth $3$-regular graphs \cite{FGRV}. These works address complementary
questions concerning approximation quality, physical implementation, and
finite-depth performance. Our focus is instead on the operator-algebraic
structure of feasibility-preserving dynamics.

\subsection{Technical overview}

For constrained optimization, a central alternative to penalty-based formulations is to restrict the quantum dynamics to valid configurations from the outset \cite{hen2016quantum, hen2016driver, HWORVB}. After encoding vertex subsets as computational basis states, let $\mathcal F$ denote the set of independent-set bit strings and let
\[
W_{\mathcal F}=\operatorname{span}\{\ket{x}:x\in\mathcal F\}
\]
denote the corresponding feasible Hilbert space. The Quantum Alternating Operator Ansatz of Hadfield et al.~\cite{HWORVB} generalizes QAOA precisely in this direction: the initial state belongs to $W_{\mathcal F}$ and the mixer is constructed so that the dynamics remains inside this subspace. This avoids spending amplitude on infeasible configurations and eliminates the need for large penalty terms, which can themselves modify the spectral and optimization landscape \cite{perlin2024q, farhi2020quantum}.

In the original QAOA treatment of Maximum Independent Set, the mixer was defined on the feasible subspace by coupling two computational basis states whenever their corresponding independent sets differ by the addition or removal of a single vertex \cite{QAOA}. Subsequent work gave explicit realizations using controlled bit flips \cite{HHR, HWORVB}. A standard choice of initial state in these formulations is the computational basis state $\ket{0}^n$ representing the empty independent set. Although this state is feasible, it is generally not an eigenstate of the constrained mixer. Consequently, convergence arguments of the type used in \cite{BKZS}, which require initialization in an extremal eigenstate of the mixer, do not apply directly.

Building on this feasible-space adjacency viewpoint, we formulate the constrained dynamics in terms of a classical graph. Let
$\mathcal R(\Gamma)$ be the independent-set reconfiguration graph: its
vertices are the independent sets of $\Gamma$, and two independent sets are adjacent when one can be obtained from the other by adding or removing a single vertex. 
When restricted to the feasible subspace $W_{\mathcal{F}}$, the standard controlled bit-flip mixer equals the adjacency operator $A_{\mathcal{R}(\Gamma)}$ of the independent set reconfiguration graph $\mathcal{R}(\Gamma)$, thereby identifying the mixer dynamics as a continuous-time quantum walk on $\mathcal{R}(\Gamma)$ (see \cite[Section~VII]{QAOA} for details). At the same time, it explains the initialization problem: because $\mathcal R(\Gamma)$ is generally irregular, its uniform state need not be an eigenvector of $A_{\mathcal R(\Gamma)}$.

Motivated by this interpretation and by single-site update
dynamics of Glauber type \cite{glauber1963time}, we consider
a modified local rule: when a vertex can be flipped without
violating feasibility, the local operator flips it; otherwise,
it leaves the configuration unchanged. We refer to the resulting operator as the \emph{Flip-or-Stay mixer}, denoted $\widehat{H}_{\mathrm{CX}}$. As established in Proposition~\ref{prop:MixersAsReconfigurationOperators},  the restriction of this mixer to the feasible subspace $W_{\mathcal{F}}$ admits an explicit operator representation in terms of the shifted negative Laplacian of the feasible reconfiguration graph:
\[
    \widehat H_{CX}\big|_{W_{\mathcal F}}
    =nI-L_{\mathcal R(\Gamma)}.
\]
Since every independent set can be reduced to the empty set by removing its occupied vertices, $\mathcal R(\Gamma)$ is connected. It follows immediately that the uniform feasible state
\begin{equation}
\label{eq:uniform_feasible_state}
    \ket{\xi_{\mathcal F}}=\frac{1}{\sqrt{|\mathcal F|}}\sum_{x\in\mathcal F}\ket{x}
\end{equation}
    
is the unique largest-eigenvalue state of $\widehat H_{CX}$, or equivalently the unique ground state of $-\widehat H_{CX}$. Moreover, the mixer gap is equal to the algebraic connectivity of the reconfiguration graph; see Remark~\ref{rmk:ReconfigurationSpectralGap}. The restriction of \(\widehat H_{CX}\) to the feasible space \(W_{\mathcal F}\) is nonnegative and irreducible in the computational basis. Its Perron eigenvector is \(\ket{\xi_{\mathcal F}}\), so the convergence theorem of \cite{BKZS}  applies with this initialization. 

We next investigate which states in the feasible Hilbert space can be
reached by circuits generated by the mixer and problem Hamiltonians. We
study this question through the corresponding dynamical Lie algebra, whose associated connected Lie group determines the continuous family of unitary transformations generated by these controls. Two parameter-sharing architectures are considered. In the standard QAOA ans\"atze, the local terms comprising each Hamiltonian share a common parameter within every layer. In the free, or multi-angle, ans\"atze \cite{herrman2022multi}, the corresponding local terms can instead be controlled independently. 

Our principal result establishes that, for every connected graph $\Gamma$ with at least two vertices, the free dynamical Lie algebras generated by the local $Z$ controls together with, respectively, the Flip-or-Void and Flip-or-Stay local mixer terms both restrict to the full unitary Lie algebra on the feasible subspace; see Theorem~\ref{thm:ExtendedFreeLieAlgFullUnitary}. Hence, the free ans\"atze achieve pure-state controllability on the feasible subspace: any normalized feasible state can be transformed into any other by a corresponding multi-angle QAOA circuit of finite depth. Thus, although the hard constraint restricts the accessible Hilbert space, it does not impose an additional loss of state controllability when the local parameters can be varied independently. Full controllability also holds with only one mixer parameter
and one cost parameter per layer when the vertex weights are positive and pairwise distinct (see Proposition~\ref{prop:WeightedMISStandardDLA}).

For the standard QAOA ans\"atze, the underlying algebraic structure is more subtle. We prove that the dynamical Lie algebra of the Flip-or-Void mixer is contained in that of the Flip-or-Stay mixer. DLA dimension calculations for all connected asymmetric graphs on six and seven vertices show that this containment is frequently strict, indicating that the Laplacian mixer reaches full controllability on $W_{\mathcal{F}}$ more broadly (see Section \ref{subsec:numeric_DLA_dim}).

To complement this numerical evidence, in
Section~\ref{subsec:explicit_mis_separation} we construct an explicit
infinite family of graphs \(\{\Gamma_r\}\) for which the two standard
ans\"atze have sharply different reachable geometries. For the
Flip-or-Void mixer, the orbit of \(\ket{0}^n\) is confined to a
four-dimensional complex cyclic subspace and therefore has real
dimension at most \(7\), independently of \(r\). In contrast, the Flip-or-Stay DLA restricts to the full unitary Lie algebra on the vacuum-generated cyclic subspace of the ambient Hilbert space, with the corresponding reachable state orbit achieving a real dimension of $3r-2$. Consequently, incorporating the diagonal ``stay'' term induces an unbounded separation between the reachable state sets of two collective mixers sharing identical feasible transitions, cost Hamiltonian, and initial computational state.

Importantly, we show that algorithmically relevant state preparation
does not require full Lie-algebraic controllability. In
Theorem~\ref{thm:qaoa_perron_frobenius}, we prove that standard QAOA
initialized in the computational vacuum state $\ket{0}^n$, with either
mixer, prepares a ground state of the problem Hamiltonian \emph{exactly} at a
\emph{finite circuit depth}. As noted in
Remark~\ref{rmk:vacuum_init_practicality}, this initialization avoids
the potentially difficult task of determining and preparing the
corresponding mixer ground state. Section~VII of the original QAOA
paper already presented the underlying asymptotic intuition, using an
adiabatic interpolation and Trotterization to argue convergence to the
optimal objective value as \(p\to\infty\) \cite{QAOA}. Our result
strengthens this argument by characterizing the relevant dynamical
orbits and promoting asymptotic convergence to exact reachability at a
finite, though not explicitly bounded, circuit depth. To the best of
our knowledge, such a finite-depth exact preparation guarantee has not
previously been established for constrained MIS-QAOA
\cite{QAOA,HHR,HWORVB,Saleem}.


It is important to note that full controllability is a statement of
expressivity, not by itself an efficiency or trainability guarantee. When sufficiently deep circuit ensembles approach unitary $2$-designs on the relevant subspace, concentration of measure can arise in their loss landscapes. To quantify this landscape geometry, we leverage the
Lie-algebraic theory of deep parameterized quantum circuits
\cite{RBSKMLC} to analyze the loss variance over the dynamical Lie group $G$; see Theorem~\ref{thm:BP_mitigation}.

Rather than being governed directly by the full ambient Hilbert-space
dimension, our universal lower bound is of order \(|\mathcal F|^{-2}\), which is inverse polynomial when \(|\mathcal F|\) is polynomial in $n$ (see Theorem \ref{thm:BP_mitigation}). Thus, the relevant dimensional scale in this bound is the cardinality of the feasible set rather than the ambient dimension $2^n$. In regimes where the constraints substantially reduce the accessible state space, this gives a correspondingly stronger guarantee against loss concentration. For the Flip-or-Stay multi-angle ansatz, this analysis also yields a barren-plateau criterion. Under the unitary $2$-design assumption and independent uniform sampling of the variational parameters, an exponentially large feasible set implies
exponentially small variances of all loss-function partial derivatives (Corollary~\ref{cor:BP_consequences}). Thus, feasibility-preserving dynamics can remain susceptible to barren plateaus even though the evolution is restricted entirely to valid solutions.

We emphasize this subspace-restricted behavior in the deep-circuit limit to decouple group-level expressivity from finite-depth trainability, which ultimately depends on the convergence rate of the circuit ensemble toward the Haar measure on the dynamical Lie group $G=e^{\mathfrak{g}}$. This distinction is vital for shallow QAOA architectures, where the dynamical Lie algebra can significantly overestimate the set of unitaries attainable at the specified depth.

A concrete manifestation of this overestimation phenomenon was
recently demonstrated by Copp et al.~\cite{Copp2026}. They studied
QAOA for Maximum Independent Set using the unconstrained
edge-penalty Hamiltonian 
\begin{equation}
\label{eq:mis_qubo_hamiltonian}
    H_P:=-\sum_{v\in V}(I-Z_v)
    +\lambda\sum_{\{a,b\}\in E}(I-Z_a)(I-Z_b)-cI=
    \sum_{v\in V}\bigl(1-\lambda\deg(v)\bigr)Z_v +
    \lambda\sum_{\{a,b\}\in E}Z_aZ_b,
\end{equation}
where \(c=-|V|+\lambda|E|\), so that the scalar component of the
original penalty Hamiltonian has been removed. We pair this
Hamiltonian with the transverse-field mixer
\(H_M=-\sum\limits_{v\in V}X_v\). The resulting dynamical Lie algebra $\mathfrak{g}$ contains the MaxCut DLA as a subalgebra (which follows directly from the proof technique of \cite[Lemma~A.5]{TBFS}) and is thus widely conjectured to scale exponentially in dimension with $n$ for generic connected graphs (see~\cite{MYAZ}). Across a large ensemble of shallow QAOA instances, they observed highly localized \textit{cragged-terrain} landscapes rather than the asymptotic barren-plateau behavior suggested by deep-circuit representations. Their findings illustrate that group-theoretic asymptotic predictions need not capture landscape geometry at shallow circuit depths. In contrast, the present work focuses on feasibility-preserving mixers, for which the relevant dynamical Lie algebras and loss-variance bounds are naturally restricted to the feasible subspace $W_{\mathcal{F}}$.

Finally, we compare the proposed local mixers with Grover-mixer QAOA (GM-QAOA) \cite{BE,TNB}. Initialized in the uniform superposition of feasible states $\ket{\xi_{\mathcal F}}$, GM-QAOA employs a global rank-one projector mixer whose evolution is confined to an $m$-dimensional subspace spanned by uniform superpositions over the distinct objective-value levels within $\mathcal F$, where $m$ denotes the number of such levels. Consequently, the dynamical Lie algebra of GM-QAOA on this subspace is isomorphic to $\mathfrak{u}(m)$ and has dimension $m^2$; see Theorem~III.1 of \cite{TNB}. In contrast, the free Flip-or-Stay ans\"atze achieve full controllability on the $|\mathcal F|$-dimensional feasible subspace $W_{\mathcal F}$, with dynamical Lie algebra $\mathfrak{u}(W_{\mathcal F})$, enabling it to distinguish and resolve individual states within the same objective level. This comparison highlights a fundamental hierarchy between nonlocal symmetry reduction and local controllability: while the Grover mixer achieves a drastic reduction of the effective DLA to $\mathfrak{u}(m)$, the local Flip-or-Stay mixer preserves the complete internal combinatorial structure of the feasible configuration space.


The main theoretical results and structural insights of this manuscript are summarized in Figure~\ref{fig:qaoa_mis_overview}.



\begin{figure}
    \centering
      \resizebox{\linewidth}{!}{
    \begin{tikzpicture}[
    x=1cm,
    y=1cm,
    every node/.style={font=\sffamily, text=ink},
    card/.style={
        draw=linegray,
        fill=white,
        rounded corners=5pt,
        line width=0.8pt
    },
    flow/.style={
        -{Latex[length=3.2mm,width=2.2mm]},
        line width=1.4pt,
        draw=muted!75
    },
    graphnode/.style={
        circle,
        draw=ink!85,
        fill=blue!10,
        line width=0.65pt,
        minimum size=0.39cm,
        inner sep=0pt
    },
    rnode/.style={
        circle,
        draw=ink!75,
        fill=teal!13,
        line width=0.55pt,
        minimum size=0.27cm,
        inner sep=0pt
    },
    badge/.style={
        circle,
        fill=#1,
        text=white,
        minimum size=0.58cm,
        inner sep=0pt,
        font=\sffamily\bfseries\large
    }
]

\path[fill=paperbg,rounded corners=8pt]
    (-0.25,0.45) rectangle (18.25,-12.15);

\node[
    anchor=west,
    font=\sffamily\bfseries\LARGE,
    text=ink
] at (0.3,0)
{QAOA for MIS: structural properties \& theoretical guarantees};


\draw[line width=2.2pt,blue]
    (0,-0.95)--(18,-0.95);


\path[card] (0,-1.28) rectangle (5.45,-6.45);
\draw[line width=3pt,blue]
    (0.18,-1.28)--(5.27,-1.28);

\node[badge=blue] at (0.52,-1.75) {1};

\node[
    anchor=west,
    font=\sffamily\bfseries\normalsize
] at (0.92,-1.75)
{Feasible geometry};

\node[
    align=center,
    text width=4.85cm,
    font=\sffamily\scriptsize,
    text=muted
] at (2.725,-2.24)
{Independent sets form a connected configuration graph};

\node[graphnode] (g1) at (0.78,-3.22) {};
\node[graphnode] (g2) at (1.45,-2.72) {};
\node[graphnode] (g3) at (2.12,-3.22) {};
\node[graphnode] (g4) at (1.45,-3.72) {};

\draw[ink!72,line width=0.65pt]
    (g1)--(g2)--(g3)--(g4)--(g1);

\node[font=\sffamily\small,text=muted] at (1.45,-4.08)
{$\Gamma$};

\draw[-{Latex[length=2.5mm]},line width=1pt,draw=blue!75]
    (2.38,-3.22)--(2.85,-3.22);

\node[rnode] (r0) at (3.08,-3.22) {};

\node[rnode] (r1a) at (3.78,-2.72) {}; 
\node[rnode] (r1b) at (3.78,-3.05) {}; 
\node[rnode] (r1c) at (3.78,-3.39) {}; 
\node[rnode] (r1d) at (3.78,-3.72) {}; 

\node[rnode] (r2a) at (4.55,-2.89) {}; 
\node[rnode] (r2b) at (4.55,-3.55) {}; 

\foreach \v in {r1a,r1b,r1c,r1d}{
    \draw[ink!55,line width=0.45pt] (r0)--(\v);
}

\foreach \a in {r1a,r1b}{
    \draw[ink!55,line width=0.45pt] (\a)--(r2a);
}

\foreach \a in {r1c,r1d}{
    \draw[ink!55,line width=0.45pt] (\a)--(r2b);
}

\node[font=\sffamily\small,text=muted] at (3.85,-4.08)
{$\mathcal R(\Gamma)$};

\node[
    align=center,
    text width=4.70cm,
    font=\sffamily\small
] at (2.725,-4.6)
{$\text{Vertices of } \mathcal{R}(\Gamma) \overset{1:1}{\longleftrightarrow} \mathcal{F}$  and 
$\{x,y\}\in E(\mathcal{R}(\Gamma))\Longleftrightarrow y=x\oplus e_v$};

\path[
    draw=blue!35,
    fill=blue!6,
    rounded corners=4pt,
    line width=0.7pt
] (0.35,-5.1) rectangle (5.30,-6.33);

\node[
    font=\sffamily\bfseries\large,
    text=blue!80!black
] at (2.725,-5.45)
{$\displaystyle H_{CX}\big|_{W_{\mathcal F}}=A_{\mathcal R(\Gamma)}$};

\node[
    align=center,
    text width=4.45cm,
    font=\sffamily\scriptsize,
    text=muted
] at (2.725,-6.03)
{The mixer generates a continuous-time quantum walk on
$\mathcal R(\Gamma)$};

\path[card] (5.75,-1.28) rectangle (11.55,-6.45);
\draw[line width=3pt,teal]
    (5.93,-1.28)--(11.37,-1.28);

\node[badge=teal] at (6.27,-1.78) {2};
\node[
    anchor=west,
    font=\sffamily\bfseries\normalsize
] at (6.67,-1.78)
{Laplacian mixer};


\node[
    draw=teal!35,
    fill=teal!5,
    rounded corners=8pt,
    minimum width=2.10cm,
    minimum height=0.68cm,
    font=\sffamily\scriptsize
] at (7.30,-2.50)
{Allowed: $\lvert x\rangle\mapsto\lvert x^{(v)}\rangle$};

\node[
    draw=teal!35,
    fill=teal!5,
    rounded corners=8pt,
    minimum width=2.10cm,
    minimum height=0.68cm,
    font=\sffamily\scriptsize
] at (10.05,-2.50)
{Blocked: $\ket{x}\mapsto\ket{x}$};

\node[
    font=\sffamily\bfseries\large,
    text=teal!72!black
] at (8.65,-3.35)
{$\displaystyle
\widehat H_{CX}\big|_{W_{\mathcal F}}
=nI-L_{\mathcal R(\Gamma)}$};


\node[
    align=center,
    font=\sffamily\small
] at (8.65,-4.20)
{$\displaystyle
\ket{\xi_{\mathcal F}}
=\frac{1}{\sqrt{|\mathcal F|}}\!\sum_{x\in\mathcal F}\ket{x}$};

\node[
    align=center,
    font=\sffamily\scriptsize,
    text=muted
] at (8.65,-4.74)
{Unique ground state of
$-\widehat H_{CX}$};

\path[
    draw=teal!38,
    fill=teal!7,
    rounded corners=4pt,
    line width=0.7pt
] (6.15,-5.07) rectangle (11.15,-6.35);

\node[font=\sffamily\bfseries\normalsize,text=teal!75!black]
    at (8.65,-5.24)
{Spectral gap};
\node[font=\sffamily\small,text=muted]
    at (8.65,-5.7)
{$\displaystyle
\Delta\!\left(-\widehat H_{CX}\big|_{W_{\mathcal F}}\right)
=\lambda_2(L_{\mathcal R(\Gamma)})$};
\node[
    align=center,
    font=\sffamily\scriptsize,
    text=muted
] at (8.65,-6.17)
{Algebraic connectivity of $\mathcal{R}_\Gamma$};


\path[card] (11.85,-1.28) rectangle (18,-6.45);
\draw[line width=3pt,coral]
    (12.03,-1.28)--(17.82,-1.28);

\node[badge=coral] at (12.37,-1.78) {3};
\node[
    anchor=west,
    font=\sffamily\bfseries\normalsize
] at (12.77,-1.78)
{Universal control};

\node[
    anchor=east,
    rounded corners=7pt,
    fill=coral!13,
    text=coral!70!black,
    inner xsep=5pt,
    inner ysep=3pt,
    font=\sffamily\bfseries\scriptsize
] at (17.65,-1.78)
{THEOREM};

\node[
    align=center,
    font=\sffamily\small,
    text=muted
] at (14.925,-2.40)
{For every connected $\Gamma$ with $|V|\ge 2$};

\path[
    draw=coral!42,
    fill=coral!6,
    rounded corners=4pt,
    line width=0.9pt
] (12.25,-2.72) rectangle (17.60,-4.10);

\node[
    align=center,
    text width=5.90cm,
    font=\sffamily\bfseries\large,
    text=coral!70!black
] at (14.925,-3.40)
{$\displaystyle
\mathfrak g_{\Gamma,\mathrm{free},\mathcal F}
\cong\widehat{\mathfrak g}_{\Gamma,\mathrm{free},\mathcal F}
\cong
\mathfrak u(W_{\mathcal F})$};

\node[
    circle,
    draw=coral!65,
    fill=coral!9,
    minimum size=0.80cm,
    font=\sffamily\bfseries
] (h) at (13.15,-5.82)
{$\ket{h}$};

\node[
    rounded corners=4pt,
    fill=coral,
    text=white,
    minimum width=1.15cm,
    minimum height=0.62cm,
    font=\sffamily\bfseries
] (u) at (14.925,-5.82)
{$U_p(\boldsymbol{\theta})$};

\node[
    circle,
    draw=coral!65,
    fill=coral!9,
    minimum size=0.80cm,
    font=\sffamily\bfseries
] (t) at (16.70,-5.82)
{$\ket{t}$};

\draw[-{Latex[length=2.5mm]},coral,line width=1.2pt]
    (h)--(u);
\draw[-{Latex[length=2.5mm]},coral,line width=1.2pt]
    (u)--(t);

\node[
    align=center,
    text width=5.25cm,
    font=\sffamily\bfseries\small
] at (14.925,-4.73)
{Full controllability and finite-depth exact synthesis on $W_{\mathcal F}$};



\path[
    draw=violet!35,
    fill=violet!5,
    rounded corners=5pt,
    line width=0.8pt
] (0,-6.83) rectangle (9.05,-9.22);

\node[
    anchor=west,
    font=\sffamily\bfseries\small,
    text=violet!78!black
] at (-0.05,-7.22)
{Standard QAOA: exact solution preparation at finite depth};

\node[
    anchor=west,
    font=\sffamily\bfseries\normalsize
] at (0.23,-7.82)
{$\displaystyle
\exists\,p,\widehat{p} \in\mathbb{Z}_{>0}:
\ U_{p}\!\left( \boldsymbol{\beta}, \boldsymbol{\gamma} \right) \ket{0}^n,\widehat{U}_{\widehat{p}}\!\left( \widehat{\boldsymbol{\beta}}, \widehat{\boldsymbol{\gamma}} \right) \ket{0}^n\in W_{P,\min}$};

\node[
    anchor=west,
    align=left,
    text width=8.3cm,
    font=\sffamily\scriptsize
] at (0.9,-8.57)
{For either mixer, vacuum initialization permits \textbf{exact} state preparation within the optimal-solution subspace $W_{P,\min}$ at \textbf{finite depth} $p$, rather than asymptotically as $p \to \infty$.};

\path[
    draw=gold!42,
    fill=gold!6,
    rounded corners=5pt,
    line width=0.8pt
] (9.35,-6.83) rectangle (18,-9.22);

\node[
    anchor=west,
    font=\sffamily\bfseries\small,
    text=gold!75!black
] at (10.7,-7.22)
{ma-QAOA deep circuit landscape scale};

\node[
    align=center,
    text width=8.0cm,
    font=\sffamily\small
] at (13.5,-7.85)
{In the deep-circuit regime, loss function variance\\ of ma-QAOA scales with $|\mathcal{F}|$ rather than $2^n$:};

\node[
    align=center,
    text width=7.9cm,
    font=\sffamily\scriptsize
] at (13.675,-8.63)
{$\displaystyle
\operatorname{Var}_{\boldsymbol{\theta}}
    \left[
        \ell_{\boldsymbol{\theta}}
        \left(
            \rho,
            H_P\big|_{W_{\mathcal{F}}}
        \right)
    \right]
    \in
    \Omega\left(\frac{1}{|\mathcal{F}|^2}\right)
    \cap
    \mathcal{O}\left(
        \frac{\alpha^2(\Gamma)}{|\mathcal{F}|}
    \right).$};

\path[
    draw=blue!38,
    fill=blue!5,
    rounded corners=5pt,
    line width=0.8pt
] (0,-9.52) rectangle (18,-12.03);

\node[
    anchor=west,
    font=\sffamily\bfseries\small,
    text=blue!78!black
] at (2.85,-9.85)
{Same problem Hamiltonian, sharp expressivity gap between QAOA ansätze};

\node[
    align=center,
    text width=16.8cm,
    font=\sffamily\small
] at (9,-10.63)
{The infinite graph family $\{\Gamma_r\}_{r \ge 5,\, r \text{ odd}}$ demonstrates a sharp separation in the expressivity of \textbf{adjacency-based and Laplacian-based QAOA ansätze} when initialized in the computational vacuum state $\ket{0}^n$:};

\node[
    align=center,
    text width=12.5cm,
    font=\sffamily\bfseries\normalsize,
    text=blue!80!black
] at (9,-11.25)
{$\displaystyle
\dim_{\mathbb R}\!\left(G_{\Gamma_r}\cdot\ket{0}^n\right)\leq 7
\qquad\text{whereas}\qquad
\dim_{\mathbb R}\!\left(\widehat G_{\Gamma_r}\cdot\ket{0}^n\right)=3r-2
$};

\node[
    align=center,
    text width=16.5cm,
    font=\sffamily\scriptsize,
    text=muted
] at (9,-11.73)
{The adjacency-based ansatz remains confined to a real orbit of dimension at most $7$, whereas the orbit of the Laplacian-based ansatz expands linearly with $r$.};
\end{tikzpicture}
}

  \caption{\textbf{Overview of main results for constrained MIS-QAOA.}
    \textbf{(1)} Restricted to the feasible subspace $W_{\mathcal{F}}$, the standard Flip-or-Void mixer $H_{\mathrm{CX}}\big|_{W_{\mathcal{F}}}$ equals the adjacency matrix $A_{\mathcal{R}}$ of the independent set reconfiguration graph $\mathcal{R}(\Gamma)$, thereby inducing a continuous-time quantum walk on $\mathcal{R}(\Gamma)$ initialized at the vertex corresponding to the initial computational state \cite[Section~VII]{QAOA}.
    \textbf{(2)} The proposed Flip-or-Stay mixer $\widehat{H}_{CX}$ restricts to the shifted negative graph Laplacian $nI - L_{\mathcal{R}}$, whose spectral gap $\Delta\big(\!-\widehat{H}_{CX}\big|_{W_{\mathcal{F}}}\!\big) = \lambda_2(L_{\mathcal{R}})$ equals the algebraic connectivity of $\mathcal{R}(\Gamma)$, with unique ground state given by the uniform feasible superposition $\ket{\xi_{\mathcal{F}}}$.
    \textbf{(3)} For any connected graph $\Gamma$ with $|V| \ge 2$, free multi-angle QAOA ansatz (for either mixer) yields the maximal dynamical Lie algebra $\mathfrak{g}_{\Gamma,\mathrm{free},\mathcal{F}} \cong \widehat{\mathfrak{g}}_{\Gamma,\mathrm{free},\mathcal{F}} \cong \mathfrak{u}(W_{\mathcal{F}})$ (Theorem~\ref{thm:ExtendedFreeLieAlgFullUnitary}), ensuring universal state controllability and exact state synthesis on the feasible subspace $W_{\mathcal{F}}$.
    \textbf{Bottom Left:} initialized in the computational vacuum state $\ket{0}^n$, standard QAOA driven by either mixer achieves exact ground-state preparation at a finite circuit depth (Theorem~\ref{thm:qaoa_perron_frobenius}).
    \textbf{Bottom Right:} in the deep-circuit regime, the loss function variance for ma-QAOA is governed by the feasible sector dimension $|\mathcal{F}|$, satisfying $\operatorname{Var}_{\boldsymbol{\theta}}
    \left[
        \ell_{\boldsymbol{\theta}}
        \left(
            \rho,
            H_P\big|_{W_{\mathcal{F}}}
        \right)
    \right]   \in
    \Omega\left(\frac{1}{|\mathcal{F}|^2}\right)
    \cap
    \mathcal{O}\left(
        \frac{\alpha^2(\Gamma)}{|\mathcal{F}|}
    \right)$ rather than scaling with the full $2^n$-dimensional Hilbert space.
    \textbf{Bottom:} initialized in the computational vacuum state $\ket{0}^n$, parameterized QAOA circuits for MIS on the infinite graph family $\Gamma_r$ exhibit a sharp expressivity gap between mixer architectures (Theorem~\ref{thm:explicit_family_orbit_separation}): states reachable by the adjacency-based QAOA are \textbf{confined to} a manifold of dimension at most $7$ (independent of $r$), whereas the Laplacian-based QAOA generates a linearly expanding $(3r - 2)$-dimensional state orbit with complete controllability on the corresponding cyclic subspace generated by $\ket{0}^n$.}
    \label{fig:qaoa_mis_overview}
\end{figure}

\section{Background}
\textbf{Setup.} Let $\Gamma=(V,E)$ be a graph on $n$ vertices. Recall that an independent set in $\Gamma$ is a subset of vertices containing no two adjacent vertices. We identify subsets of $V(\Gamma)$ with binary strings in $\mathbb{B}^n=\{0,1\}^n$, where $x_v=1$ corresponds to including the vertex $v$ in the subset and $x_v=0$ to omitting it. The objective function for the Maximum Independent Set problem is then
\begin{equation}
    \label{eq:objective_function}
    F(x)=\sum\limits_{v\in V(\Gamma)} x_v.
\end{equation}

A convenient choice of the corresponding problem Hamiltonian, representing this objective function up to an additive constant and an overall scaling factor, is 

\begin{equation}
    \label{eq:MIS_problem_hamiltonian}
    H_P=\sum\limits_{v\in V(\Gamma)} Z_v .
\end{equation}

Note that 
\[
H_P\ket{x}=(n-2F(x))\ket{x}.
\]
Thus minimizing \(H_P\) on \(W_{\mathcal F}\) is equivalent to maximizing independent-set cardinality with the optimal energy being \(n-2\alpha(\Gamma)\).

The independent set constraint requires that no edge in $\Gamma = (V,E)$ has both endpoints selected. Accordingly, we denote the family of independent vertex subsets by $\mathcal{I}(\Gamma)$ and define the corresponding feasible set of binary strings $\mathcal{F} \subseteq \{0,1\}^n$ as:
\begin{equation}
    \label{eq:MIS_feasible_set}
    \begin{aligned}
        \mathcal{I}(\Gamma) &:= \bigl\{ S \subseteq V \;\big|\; \{u,v\} \notin E \text{ for all } u,v \in S \bigr\}, \\
        \mathcal{F} &:= \bigl\{ x \in \{0,1\}^n \;\big|\; x_u x_v = 0 \text{ for all } \{u,v\} \in E \bigr\}.
    \end{aligned}
\end{equation}
For each integer $j \ge 0$, we partition $\mathcal{F}$ into level sets of fixed size:
\begin{equation}
    \label{eq:MIS_feasible_set_j}
    \mathcal{F}_j := \bigl\{ x \in \mathcal{F} \;\big|\; \operatorname{Ham}(x) = j \bigr\},
\end{equation}
where $\operatorname{Ham}(x) := \sum\limits_{v \in V} x_v$ denotes the Hamming weight of $x$ (i.e., the number of entries in $x$ equal to $1$, which equals the cardinality of the independent set encoded by $x$). The independence number $\alpha(\Gamma)$ of $\Gamma$ is then the maximum weight among all feasible configurations:
\begin{equation}
    \label{eq:independence_number}
    \alpha(\Gamma) := \max \bigl\{ j \;\big|\; \mathcal{F}_j \neq \varnothing \bigr\}.
\end{equation}

Since $\mathcal{F}_j = \varnothing$ for all $j > \alpha(\Gamma)$ by definition of $\alpha(\Gamma)$, and $\mathcal{F}_0 = \{0^n\}$ corresponds to the empty subset of vertices, the feasible set $\mathcal{F}$ can be partitioned into its non-empty, disjoint weight layers. This yields the following decomposition:
\begin{equation}
    \label{eq:feasible_set_disjoint_union}
    \mathcal{F}=\bigsqcup_{j=0}^{\alpha(\Gamma)}\mathcal{F}_j.
\end{equation}

\textbf{QAOA and feasibility-preserving mixers.}
The Quantum Approximate Optimization Algorithm (QAOA) is a variational quantum algorithm designed to solve classical optimization problems encoded into a diagonal problem Hamiltonian $H_P$ \cite{QAOA}. The ansatz alternates unitary evolutions generated by $H_P$ and a mixer Hamiltonian $H_M$, where the latter drives transitions between computational basis states to explore the search space. At depth $p$, the algorithm prepares the parameterized state
\begin{equation}
\label{eq:qaoa_ansatz_compact}
\ket{\psi_p(\boldsymbol{\beta},\boldsymbol{\gamma})} = U_p(\boldsymbol{\beta},\boldsymbol{\gamma})\ket{\psi_0},
\end{equation}
where the depth-$p$ unitary evolution operator is defined by
\begin{equation}
\label{eq:qaoa_chain}
U_p(\boldsymbol{\beta},\boldsymbol{\gamma}) := U_M(\beta_p) U_P(\gamma_p) \cdots U_M(\beta_1) U_P(\gamma_1) = e^{-i \beta_p H_M} e^{-i \gamma_p H_P} \cdots e^{-i \beta_1 H_M} e^{-i \gamma_1 H_P},
\end{equation}
with variational parameters $\boldsymbol{\beta} = (\beta_1, \dots, \beta_p) \in \mathbb{R}^p$ and $\boldsymbol{\gamma} = (\gamma_1, \dots, \gamma_p) \in \mathbb{R}^p$. 

\begin{rmk}
\label{rmk:simultaneous_mixer_evolution}
When the mixer Hamiltonian $H_M = \sum\limits_{j} H_{M,j}$ is comprised of a sum of local generator terms, our standard ans\"atze employ simultaneous Hamiltonian evolution $U_M(\beta) \coloneqq e^{-i\beta H_M}$ parameterized by a single variational angle $\beta$ per layer. If the constituent terms commute (e.g., the transverse-field mixer $H_M = \sum\limits_j X_j$), this factors exactly into a product of local unitaries $\prod_j e^{-i\beta H_{M,j}}$. 
\end{rmk}

These parameters are optimized classically using measurement-based estimates of the expected objective value
\begin{equation}
\label{eq:qaoa_objective_compact}
\mathbb{E}(\boldsymbol{\beta},\boldsymbol{\gamma}) := \bra{\psi(\boldsymbol{\beta},\boldsymbol{\gamma})} H_P \ket{\psi(\boldsymbol{\beta},\boldsymbol{\gamma})}.
\end{equation}

An appealing structural feature of the mixer arises from the adiabatic foundation of QAOA. If there exists a constant $c \in \mathbb{R}$ such that $cI - H_M$ is entrywise nonnegative and irreducible in the computational basis of the relevant Hilbert space $W$, the Perron--Frobenius theorem guarantees that $H_M$ possesses a unique, nondegenerate ground state $\ket{\xi}$. Initializing QAOA in $\ket{\xi}$ allows the alternating circuit structure in \eqref{eq:qaoa_chain} to be interpreted as a digitized, variational analog of an adiabatic path from $H_M$ to $H_P$ \cite{farhi2000quantum,QAOA}.

For constrained optimization problems, Hadfield \emph{et al.}~\cite{HWORVB} generalized QAOA to the Quantum Alternating Operator Ansatz, replacing the standard unconstrained transverse-field mixer with a feasibility-preserving operator. Let $W := (\mathbb{C}^2)^{\otimes n}$ denote the Hilbert space of $n$ qubits.

Given a feasible configuration set $\mathcal{F} \subseteq \{0,1\}^n$, the corresponding feasible subspace $W_{\mathcal{F}}$ is defined as
\begin{equation}
    \label{eq:MIS_feasible_subspace}
    W_{\mathcal{F}} := \operatorname{span} \left\{ \ket{x} \;\middle|\; x \in \mathcal{F} \right\} \subseteq (\mathbb{C}^2)^{\otimes n}.
\end{equation}
To ensure system dynamics remain within the valid search space, one selects an initial state $\ket{\psi_0} \in W_{\mathcal{F}}$ and a mixer Hamiltonian $H_M$ whose generated unitaries $U_M(\beta) = e^{-i \beta H_M}$ preserve $W_{\mathcal{F}}$:
\begin{equation}
    \label{eq:feasible_preserving_general}
    U_M(\beta) W_{\mathcal{F}} \subseteq W_{\mathcal{F}} \qquad \text{for all } \beta \in \mathbb{R}.
\end{equation}
This condition restricts the state evolution strictly to $W_{\mathcal{F}}$, enforcing hard constraints natively at the circuit level.


\begin{rmk}
    \label{rmk:QAOA_Adiabatic_Difference}
    While it is conventional to start with a lowest energy state of the mixer Hamiltonian, the operational mechanism of QAOA is fundamentally different from that of the quantum adiabatic algorithm.  In particular, successful optimization can be achieved even if the system is not initiated in a ground state of the mixer Hamiltonian. In some cases, theoretical convergence guarantees can be established even when starting from an arbitrary state within a certain invariant subspace of the Hilbert space. Specifically, this will be shown to be the case for the QAOA applied to the MIS problem discussed later in this work.

\end{rmk}

\section{Constraint-preserving mixer for MIS}
\textbf{The Two Mixers for MIS.} Having reviewed the general structure of the QAOA algorithm, we now specialize this framework to the Maximum Independent Set problem. 

Since the objective is to maximize the size of the independent set while strictly avoiding invalid configurations, it is natural to seek a mixer Hamiltonian that preserves the subspace of feasible states throughout the QAOA alternating layers. Notice that the standard choice of a mixer, given by the sum of single-qubit Pauli $X$-gates, does not preserve feasibility, as individual bit flips can drive the state out of the valid subspace $W_\mathcal{F}$. 

\textbf{The Flip-or-Void Mixer.} A canonical approach to resolve this was introduced by Farhi et al.~\cite{QAOA} (Section VII); for a more precise formulation, which we reproduce below, see also Section~4.1.2 of~\cite{HWORVB}. For each vertex $v \in V(\Gamma)$, let $\mathcal{N}_v$ denote its neighborhood. The partial mixer Hamiltonian $H_{CX,v}$ and the total \textit{Flip-or-Void mixer Hamiltonian} $H_{CX}$ are defined by
\begin{equation}
    \label{eq:CX_mixer}
    \begin{aligned}
    \mathcal{P}_v:=\prod\limits_{w \in \mathcal{N}_v} \frac{I+Z_w}{2},\qquad
     H_{CX,v}:=X_v \cdot \mathcal{P}_v, \qquad
    H_{CX} :=\sum\limits_{v \in V(\Gamma)}H_{CX,v}. 
    \end{aligned}
\end{equation}
The feasible-subspace preservation property follows directly from the relation $\frac{I + Z_w}{2}\ket{1}=0$. Indeed, let $\ket{x}\in W_{\mathcal{F}}$ and let
\begin{equation}
    \label{eq:v-bit-flip}
    x^{(v)}:=x\oplus e_v
\end{equation}
denote the binary string obtained from \(x\) by flipping the \(v\)-th bit. Then the action of the local term \(H_{CX,v}\) on a computational basis state is given by
\begin{equation}
    \label{eq:HCXv_action}
    H_{CX,v}\ket{x}
    =
    \begin{cases}
        0,
        &
        x_w=1
        \text{ for some }
        w\in \mathcal{N}_v,
        \\[1ex]
        \ket{x^{(v)}},
        &
        x_w=0
        \text{ for all }
        w\in \mathcal{N}_v.
    \end{cases}
\end{equation}
Therefore, the operator $X_v$ may act nontrivially only when all neighbors of $v$ are unoccupied, in which case flipping the bit at $v$ preserves feasibility. Consequently,
\begin{equation}
    \label{eq:mixerPreservesW_F}
    \begin{aligned}
        H_{CX,v}(W_{\mathcal{F}})&\subseteq W_{\mathcal{F}},\\
        H_{CX}(W_{\mathcal{F}})&\subseteq W_{\mathcal{F}}.
    \end{aligned}
\end{equation}

\begin{rmk}
\label{rmk:ground_state_drawback} 
A notable limitation of the $H_{CX}$ mixer is that while its transitions preserve the feasible subspace $W_{\mathcal{F}}$, the Perron eigenstate of $H_{CX}\big|_{W_{\mathcal{F}}}$, equivalently, the ground state of $-H_{CX}\big|_{W_{\mathcal{F}}}$, has no general, efficient preparation procedure. Consequently, the standard choice for the initial state is typically the vacuum state
$\ket{0}^{\otimes n}$,
which physically corresponds to the empty independent set~\cite{QAOA, HHR, Saleem}. Although $\ket{0}^{\otimes n}$ is guaranteed to be feasible for any graph $\Gamma$, it is not an eigenstate of $H_{CX}$ restricted to $W_{\mathcal{F}}$. Therefore, the adiabatic convergence theorem of~\cite{BKZS} does not apply directly with this initialization.
\end{rmk}

\begin{rmk}
    \label{rmk:MatricesOfMixers}

    For each $v\in V$, let
\[
    N_v:=\frac{I-Z_v}{2}
\]
denote the occupation-number projector at vertex $v$. The orthogonal
projector onto the feasible subspace can be realized as
\begin{equation}
\label{eq:FeasibleProjector}
    \mathcal{P}_{\mathcal F}:=\prod_{(v,w)\in E(\Gamma)}
    \bigl(I-N_vN_w\bigr)=\sum_{x\in\mathcal F}\ket{x}\!\bra{x}.
\end{equation}
    The restrictions of these mixers to the feasible subspace can be related back to the unconstrained transverse-field mixer $B = \sum\limits_v X_v$ via the following proposition.
\end{rmk}

\begin{prop}
\label{prop:mixer_restriction_properties}
The restriction of the multi-controlled mixer components to the feasible subspace matches the projection of the standard unconstrained bit-flip terms: 
\begin{equation}
\label{eq:constrained_hamiltonians}
\begin{aligned}
H_{CX,v}\big|_{\mathcal{W}_\mathcal{F}} &= \mathcal{P}_{\mathcal{F}} X_v \mathcal{P}_{\mathcal{F}}\big|_{\mathcal{W}_\mathcal{F}}, \\
H_{CX}\big|_{\mathcal{W}_\mathcal{F}}   &= \mathcal{P}_{\mathcal{F}} B \mathcal{P}_{\mathcal{F}}\big|_{\mathcal{W}_\mathcal{F}}.
\end{aligned}
\end{equation}
\end{prop}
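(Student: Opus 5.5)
We verify the first identity on the computational basis of $W_{\mathcal F}$; the second then follows by linearity and summation over $v$. Both sides are linear operators on $W_{\mathcal F}$. By \eqref{eq:mixerPreservesW_F}, $H_{CX,v}$ maps $W_{\mathcal F}$ into itself, and the right-hand side does so trivially because of the outer projector. So it suffices to show, for every $x\in\mathcal F$, that
\[
H_{CX,v}\ket{x}=\mathcal P_{\mathcal F}X_v\mathcal P_{\mathcal F}\ket{x}.
\]

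Fix $x\in\mathcal F$. By \eqref{eq:FeasibleProjector}, $\mathcal P_{\mathcal F}=\sum_{y\in\mathcal F}\ket{y}\!\bra{y}$, so $\mathcal P_{\mathcal F}\ket{x}=\ket{x}$. Next, $X_v\ket{x}=\ket{x^{(v)}}$, and $\mathcal P_{\mathcal F}\ket{x^{(v)}}$ equals $\ket{x^{(v)}}$ if $x^{(v)}\in\mathcal F$ and $0$ otherwise. The key step is therefore a combinatorial check that, for $x\in\mathcal F$, we have $x^{(v)}\in\mathcal F$ if and only if $x_w=0$ for all $w\in\mathcal N_v$. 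We split into two cases.

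\emph{Case $x_v=1$.} Flipping $v$ removes a vertex from an independent set, so $x^{(v)}\in\mathcal F$. Also, since $x$ is independent and $x_v=1$, every neighbor $w$ has $x_w=0$. Both conditions hold.

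\emph{Case $x_v=0$.} Adding $v$ keeps the set independent exactly when no neighbor of $v$ is occupied. Every edge not incident to $v$ already satisfies the independence constraint, because $x\in\mathcal F$.

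Comparing with \eqref{eq:HCXv_action}, both sides give $\ket{x^{(v)}}$ when all neighbors of $v$ are unoccupied and $0$ otherwise. This proves the first identity.

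For the second identity, sum the first over $v\in V(\Gamma)$. Restriction to $W_{\mathcal F}$ and the map $A\mapsto\mathcal P_{\mathcal F}A\mathcal P_{\mathcal F}$ are both linear, and $B=\sum_v X_v$, so
\[
H_{CX}\big|_{W_{\mathcal F}}=\sum_v H_{CX,v}\big|_{W_{\mathcal F}}=\mathcal P_{\mathcal F}B\,\mathcal P_{\mathcal F}\big|_{W_{\mathcal F}}.
\]

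The only real content is the equivalence used in the case analysis. The one subtlety is the case $x_v=1$, where the control condition on the neighbors must be seen to hold automatically by independence; otherwise the projector and the control would disagree. No further obstacle is expected.
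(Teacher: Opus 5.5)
Your proposal is correct and follows essentially the same route as the paper. Both arguments check the identity on the computational basis of $W_{\mathcal F}$, using $\mathcal P_{\mathcal F}\ket{x}=\ket{x}$ and the fact that, for feasible $x$, the flipped string $x^{(v)}$ is feasible exactly when no neighbor of $v$ is occupied, and then sum over $v$. The paper phrases this through matrix elements $\bra{y}\cdot\ket{x}$ with $x,y\in\mathcal F$, whereas your case split on $x_v$ states the underlying equivalence more explicitly.
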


In addition, note that the matrix representation of a partial mixer $H_{CX,v}$ in the computational basis of $W_\mathcal{F}$ has entries of $1$ indexed by pairs of feasible strings $x,y\in\mathcal{F}$ that differ exclusively at vertex $v$:
\begin{equation}
    x_w = y_w \quad \forall w \neq v, \text{ and } x_v = 1 - y_v,
\end{equation}
and $0$ entries elsewhere.

\textbf{The Flip-or-Stay Mixer.}
To address the initialization issue discussed in
Remark~\ref{rmk:ground_state_drawback}, we consider a closely related
mixer whose restriction to the feasible subspace has an explicitly
known extremal eigenstate. We call it the \textit{Flip-or-Stay} mixer
and define its local and global terms by
\begin{equation}
    \label{eq:Bao_mixer}
    \begin{aligned}
        \widehat{H}_{CX,v}
        &:= X_v\mathcal{P}_v+(I-\mathcal{P}_v)
        =H_{CX,v}+(I-\mathcal{P}_v),\\
        \widehat{H}_{CX}
        &:=\sum_{v\in V(\Gamma)}\widehat{H}_{CX,v}.
    \end{aligned}
\end{equation}
The local term flips the bit at $v$ when all its neighbors are
unoccupied and otherwise leaves the basis state unchanged:
\begin{equation}
    \label{eq:HCXvHat_action}
    \widehat{H}_{CX,v}\ket{x}
    =
    \begin{cases}
        \ket{x},
        &
        x_w=1
        \text{ for some }w\in\mathcal{N}_v,
        \\[1ex]
        \ket{x^{(v)}},
        &
        x_w=0
        \text{ for all }w\in\mathcal{N}_v.
    \end{cases}
\end{equation}

The Flip-or-Stay mixer is related directly to the independent-set
sampling Hamiltonian of Wild et al.\
\cite[Eq.~(4)]{wild2021quantum}, constructed from single-spin
Metropolis--Hastings updates. At inverse temperature $\beta=0$, the coefficients in their Hamiltonian satisfy $V_{e,v}(0)=V_{g,v}(0)=\Omega_v(0)=1$. Consequently, on the feasible subspace,
\[
H_q(0)
=
\sum_{v\in V(\Gamma)}
\mathcal P_v(I-X_v)\big|_{W_{\mathcal F}}
=
(nI-\widehat H_{CX})\big|_{W_{\mathcal F}}.
\]
Thus, up to an additive scalar and an overall sign, the
Flip-or-Stay mixer coincides with the zero-inverse-temperature
specialization of their independent-set parent Hamiltonian.

We study its use in constrained QAOA and compare its dynamical
Lie algebra and reachable states with those of the Flip-or-Void
mixer. As shown below, the uniform feasible state is the unique
ground state of $-\widehat{H}_{CX}|_{W_{\mathcal F}}$; this explicit
characterization does not by itself imply efficient state preparation.

\begin{rmk}
    \label{rmk:favorable_properties_of_new_mixer}
    To highlight the operational distinction between the two local mixers, consider a computational basis state $\ket{x}$ that violates the neighborhood condition at vertex $v$. Under this condition, the Flip-or-Stay mixer $\widehat{H}_{CX,v}$ acts as the identity, leaving the state invariant ($\widehat{H}_{CX,v}\ket{x} = \ket{x}$), whereas the Flip-or-Void mixer $H_{CX,v}$ annihilates it, i.e. $H_{CX,v}\ket{x} = 0$ (see Figure \ref{fig:flip_or_void_stay}). 
\end{rmk}

\begin{figure}[h]
\centering
\begin{tikzpicture}[
    >=Stealth,
    every node/.style={font=\small},
    node distance=9mm
]

\node at (-3.5,2.0) {\textbf{Flip-or-Void}};
\node (x1) at (-3.5,1.2) {$\ket{x}$};
\node (c1) at (-3.5,0)
{$\exists\,w\in\mathcal N_v:\;x_w=1?$};

\node (z) at (-5.3,-1.8) {$0$};
\node (f1) at (-1.7,-1.8) {$\ket{x^{(v)}}$};

\draw[->] (x1)--(c1);
\draw[->] (c1)--node[left]{yes}(z);
\draw[->] (c1)--node[right]{no}(f1);

\node at (4.0,2.0) {\textbf{Flip-or-Stay}};
\node (x2) at (4.0,1.2) {$\ket{x}$};
\node (c2) at (4.0,0)
{$\exists\,w\in\mathcal N_v:\;x_w=1?$};

\node (s) at (2.2,-1.8) {$\ket{x}$};
\node (f2) at (5.8,-1.8) {$\ket{x^{(v)}}$};

\draw[->] (x2)--(c2);
\draw[->] (c2)--node[left]{yes}(s);
\draw[->] (c2)--node[right]{no}(f2);

\end{tikzpicture}
\caption{Comparison of the local mixer terms acting on a computational basis state \(\ket{x}\).  The local Flip-or-Void mixer \(H_{CX,v}\) annihilates configurations in which at least one neighbor of \(v\) is occupied, whereas the Flip-or-Stay mixer \(\widehat{H}_{CX,v}\) leaves such configurations unchanged. When all neighbors of \(v\) are unoccupied, both mixers flip the \(v\)-th bit.}
\label{fig:flip_or_void_stay}
\end{figure}

\subsection{The Independent-Set Reconfiguration Graph}
\label{subsec:IndependentSetReconfigurationGraph}

The two constrained mixers admit a natural interpretation in terms of the
reconfiguration graph of the feasible set.

\begin{defn}
\label{def:IndependentSetReconfigurationGraph}
Let \(\Gamma=(V,E)\) be a graph and let \(\mathcal F\) be its set of
independent-set bit strings. The \emph{independent-set reconfiguration graph}
\(\mathcal R(\Gamma)\) is the graph with vertex set
\[
V\bigl(\mathcal R(\Gamma)\bigr)=\mathcal F,
\]
in which two feasible strings \(x,y\in\mathcal F\) are adjacent if and only if
they differ at exactly one vertex. Equivalently,
\[
\{x,y\}\in E\bigl(\mathcal R(\Gamma)\bigr)
\quad\Longleftrightarrow\quad
y=x^{(v)}
\]
for some \(v\in V(\Gamma)\), where \(x^{(v)}=x\oplus e_v\).
\end{defn}

Recall that for a simple graph $H = (V, E)$, its \textit{simplex graph} $\kappa(H)$ is defined as the graph whose vertex set $V(\kappa(H))$ consists of all cliques of $H$ (including the empty set $\varnothing$), where two cliques $C_1, C_2$ are adjacent if and only if their symmetric difference is of cardinality one: $|C_1 \Delta C_2| = 1$.

As independent sets of $\Gamma$ correspond bijectively to cliques in the complement graph $\bar{\Gamma}$, single-vertex bit flips in $\mathcal{R}(\Gamma)$ mirror single-vertex additions or deletions in $\kappa(\bar{\Gamma})$. This combinatorial duality establishes a canonical isomorphism between the two state-space topologies.

\begin{prop}
\label{prop:reconfig_simplex_isomorphism}
The \emph{independent-set reconfiguration graph} $\mathcal{R}(\Gamma)$ of a graph $\Gamma = (V, E)$ under single-vertex addition and deletion is canonically isomorphic to the simplex graph of its complement graph $\bar{\Gamma}$:
\begin{equation}
\label{eq:reconfig_simplex_iso}
\mathcal{R}(\Gamma) \cong \kappa(\bar{\Gamma}).
\end{equation}
\end{prop}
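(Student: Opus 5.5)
The plan is to write down an explicit bijection between the vertex sets and then check that it preserves adjacency in both directions. Define $\phi:\mathcal F\to V(\kappa(\bar\Gamma))$ by sending a feasible string $x$ to its support $S_x:=\{v\in V: x_v=1\}$. Since $\bar\Gamma$ has the same vertex set $V$, we can regard $S_x$ as a vertex subset of $\bar\Gamma$.

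\emph{Step 1: $\phi$ is well defined and bijective.} A subset $S\subseteq V$ is independent in $\Gamma$ iff no pair $u,v\in S$ satisfies $\{u,v\}\in E$. By definition of the complement, this holds iff every pair of distinct vertices of $S$ is adjacent in $\bar\Gamma$, i.e. iff $S$ is a clique of $\bar\Gamma$. This equivalence covers the degenerate cases: $\varnothing$ and singletons are vacuously both independent and cliques. The support map $x\mapsto S_x$ is already a bijection between $\{0,1\}^n$ and $2^V$, and by \eqref{eq:MIS_feasible_set} it carries $\mathcal F$ onto $\mathcal I(\Gamma)$. Therefore $\phi$ is a bijection from $\mathcal F$ onto the set of cliques of $\bar\Gamma$, with the empty set included as required by the definition of $\kappa$.

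\emph{Step 2: adjacency is preserved.} For $x,y\in\{0,1\}^n$ we have $S_x\,\Delta\,S_y=\{v: x_v\neq y_v\}$. Hence $|S_x\Delta S_y|=1$ iff $x$ and $y$ differ in exactly one coordinate, i.e. iff $y=x^{(v)}$ for some $v$. By Definition~\ref{def:IndependentSetReconfigurationGraph}, this is exactly the adjacency relation of $\mathcal R(\Gamma)$ restricted to $\mathcal F$. So $\{x,y\}\in E(\mathcal R(\Gamma))$ iff $\{\phi(x),\phi(y)\}\in E(\kappa(\bar\Gamma))$. Together with Step~1, this shows $\phi$ is a graph isomorphism.

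\emph{Step 3: canonicity.} The map $\phi$ uses no choices beyond the identity on $V$. It is therefore natural with respect to graph isomorphisms $\Gamma\to\Gamma'$, since these induce isomorphisms $\bar\Gamma\to\bar\Gamma'$ on the same underlying vertex bijection.

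No step is a genuine obstacle. The only point requiring care is the convention check in Step~1: both $\mathcal R(\Gamma)$ and $\kappa(\bar\Gamma)$ must include the empty set and singletons, and complementation must be taken on the same vertex set $V$.
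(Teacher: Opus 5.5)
Your proposal is correct and is essentially the paper's own argument. The paper offers no separate proof; its only justification is the remark preceding the proposition, that independent sets of $\Gamma$ are exactly the cliques of $\bar{\Gamma}$ and that single-vertex bit flips correspond to symmetric differences of cardinality one, and your Steps~1 and~2 spell out that support-map bijection and adjacency check in full (the naturality remark in Step~3 is a harmless addition).
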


Let \(A_{\mathcal R(\Gamma)}\) and \(D_{\mathcal R(\Gamma)}\) denote the adjacency and  degree matrices of \(\mathcal R(\Gamma)\), regarded as operators on \(W_{\mathcal F}\). The combinatorial Laplacian of \(\mathcal R(\Gamma)\) is defined as
\begin{equation}
\label{eq:Laplacian_of_R_Gamma}
    L_{\mathcal R(\Gamma)} := D_{\mathcal R(\Gamma)} - A_{\mathcal R(\Gamma)}.
\end{equation}

The following proposition provides a direct graph-theoretic characterization of the restricted mixer operators in terms of the state-space reconfiguration topology.

\begin{prop}
\label{prop:MixersAsReconfigurationOperators}
The restrictions of the Flip-or-Void and Flip-or-Stay mixers to the feasible subspace $W_{\mathcal{F}}$ satisfy
\begin{equation}
\label{eq:FlipOrStayAsLaplacian}
\begin{aligned}
    H_{CX}\big|_{W_{\mathcal{F}}} &= A_{\mathcal{R}}, \\
    \widehat{H}_{CX}\big|_{W_{\mathcal{F}}} &= nI - L_{\mathcal{R}}.
\end{aligned}
\end{equation}
\end{prop}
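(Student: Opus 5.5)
The plan is to compare matrix entries in the computational basis $\{\ket{x}:x\in\mathcal F\}$ of $W_{\mathcal F}$, using only the local actions \eqref{eq:HCXv_action} and \eqref{eq:HCXvHat_action} together with the invariance \eqref{eq:mixerPreservesW_F}. The argument is organised around the \emph{flippability} of a vertex. Fix $x\in\mathcal F$ and call $v$ flippable at $x$ if $x_w=0$ for all $w\in\mathcal N_v$.

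The key combinatorial observation is that, for $x\in\mathcal F$, the string $x^{(v)}$ lies in $\mathcal F$ if and only if $v$ is flippable at $x$.
\begin{itemize}
\item If $x_v=1$, then $x^{(v)}$ removes a vertex, so it is always feasible. Since $x$ is independent and $v$ is occupied, no neighbour of $v$ is occupied, so $v$ is flippable.
\item If $x_v=0$, then adding $v$ preserves independence exactly when no neighbour of $v$ is occupied.
\end{itemize}
Hence the neighbours of $x$ in $\mathcal R(\Gamma)$ are precisely the strings $x^{(v)}$ with $v$ flippable at $x$. These are pairwise distinct, since they differ from $x$ at distinct coordinates. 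Consequently $\deg_{\mathcal R}(x)=\#\{v:\ v \text{ flippable at } x\}$.

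For the first identity, sum \eqref{eq:HCXv_action} over $v$:
\[
H_{CX}\ket{x}=\sum_{v\ \text{flippable at}\ x}\ket{x^{(v)}}=\sum_{y\sim x}\ket{y}=A_{\mathcal R}\ket{x}.
\]
Since both operators are linear and agree on a basis of $W_{\mathcal F}$, they coincide there.

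For the second identity, \eqref{eq:HCXvHat_action} gives a flip term for each flippable $v$ and a contribution $\ket{x}$ for each non-flippable $v$. Therefore
\[
\widehat H_{CX}\ket{x}=A_{\mathcal R}\ket{x}+(n-\deg_{\mathcal R}(x))\ket{x}=(nI-D_{\mathcal R}+A_{\mathcal R})\ket{x}=(nI-L_{\mathcal R})\ket{x},
\]
which establishes the claim.

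The only delicate step is the degree count. It relies on the characterisation of flippability, in particular on the case $x_v=1$, which needs independence of $x$ to guarantee that the $v$-neighbourhood is empty. I will state this case explicitly. Everything else is bookkeeping.
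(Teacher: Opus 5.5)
Your proposal is correct and follows essentially the paper's argument: the paper likewise reads off $H_{CX}\big|_{W_{\mathcal F}}=A_{\mathcal R}$ from the local flip action. It then obtains the second identity from $\widehat H_{CX}=H_{CX}+\sum_v(I-\mathcal P_v)$ together with $\sum_v (I-\mathcal P_v)\big|_{W_{\mathcal F}}=nI-D_{\mathcal R}$, which is exactly your count of non-flippable vertices. Your explicit check that $x^{(v)}\in\mathcal F$ if and only if $v$ is flippable at $x$, including the case $x_v=1$, is something the paper uses without spelling it out, so including it is a welcome addition.
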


\begin{cor}
\label{cor:UniformStateFromReconfigurationLaplacian}
As every independent set can be reduced to the empty set $\varnothing$ by sequentially deleting occupied vertices, the reconfiguration graph $\mathcal{R}(\Gamma)$ is connected. Consequently, the kernel of its combinatorial Laplacian $L_{\mathcal{R}}$ is one-dimensional and spanned by the uniform feasible state
\begin{equation}
\label{eq:UniformFeasibleStateReconfiguration}
    \ket{\xi_{\mathcal{F}}} := \frac{1}{\sqrt{|\mathcal{F}|}} \sum_{x \in \mathcal{F}} \ket{x}.
\end{equation}
Equivalently, by identity~\eqref{eq:FlipOrStayAsLaplacian}, $n$ is the simple largest eigenvalue of $\widehat{H}_{CX}\big|_{W_{\mathcal{F}}}$ with
\begin{equation}
    \widehat{H}_{CX} \ket{\xi_{\mathcal{F}}} = n \ket{\xi_{\mathcal{F}}},
\end{equation}
rendering $\ket{\xi_{\mathcal{F}}}$ the unique ground state of the restricted mixer Hamiltonian $-\widehat{H}_{CX}\big|_{W_{\mathcal{F}}}$.
\end{cor}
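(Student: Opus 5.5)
The plan is to prove the three assertions of Corollary~\ref{cor:UniformStateFromReconfigurationLaplacian} in order: connectivity of $\mathcal R(\Gamma)$, one-dimensionality of $\ker L_{\mathcal R}$, and the spectral statement for $\widehat H_{CX}$. Throughout we use the identity $\widehat H_{CX}\big|_{W_{\mathcal F}} = nI - L_{\mathcal R}$ from Proposition~\ref{prop:MixersAsReconfigurationOperators}.

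\textbf{Connectivity.} Take $x\in\mathcal F$ encoding an independent set $S$ with $|S|=k$. Any subset of an independent set is independent, so deleting the vertices of $S$ one at a time gives a sequence $x=x_0,x_1,\dots,x_k=0^n$ of feasible strings. Consecutive strings differ in exactly one bit, so this sequence is a path in $\mathcal R(\Gamma)$ from $x$ to the vacuum. Hence every vertex of $\mathcal R(\Gamma)$ lies in the component of $0^n$, and $\mathcal R(\Gamma)$ is connected.

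\textbf{Kernel of the Laplacian.} We use the standard quadratic form: for $\psi=\sum_x \psi_x\ket{x}$,
\[
\langle\psi|L_{\mathcal R}|\psi\rangle=\sum_{\{x,y\}\in E(\mathcal R)}|\psi_x-\psi_y|^2 .
\]
This shows $L_{\mathcal R}\succeq 0$. If $L_{\mathcal R}\psi=0$, then $\psi_x=\psi_y$ along every edge, so by connectivity $\psi$ is constant. Conversely, the constant vector is annihilated because the row sums of $D-A$ vanish. Therefore $\ker L_{\mathcal R}=\mathbb C\ket{\xi_{\mathcal F}}$.

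\textbf{Spectral statement for $\widehat H_{CX}$.} By the identity above, $\widehat H_{CX}\ket{\xi_{\mathcal F}}=n\ket{\xi_{\mathcal F}}$. Every other eigenvalue has the form $n-\lambda$ with $\lambda>0$ an eigenvalue of $L_{\mathcal R}$. So $n$ is the simple largest eigenvalue of $\widehat H_{CX}\big|_{W_{\mathcal F}}$, and equivalently $\ket{\xi_{\mathcal F}}$ is the unique ground state of $-\widehat H_{CX}\big|_{W_{\mathcal F}}$. Since $W_{\mathcal F}$ is invariant by \eqref{eq:mixerPreservesW_F}, the restriction is a well-defined Hermitian operator, and the statement is meaningful.

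There is no real obstacle here. The only point needing care is the positivity argument for the kernel. One could instead invoke Perron--Frobenius for the nonnegative irreducible matrix $nI-L_{\mathcal R}$, but the quadratic-form argument is more direct.
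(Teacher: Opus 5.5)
Your proposal is correct and takes essentially the same approach as the paper, which gives its justification inline in the statement: the vertex-deletion path shows $\mathcal R(\Gamma)$ is connected, and the standard fact that a connected graph's Laplacian has kernel spanned by the constant vector gives $\ker L_{\mathcal R}=\mathbb C\ket{\xi_{\mathcal F}}$ (you verify this fact directly through the quadratic form $\sum_{\{x,y\}\in E(\mathcal R)}|\psi_x-\psi_y|^2$); the identity $\widehat H_{CX}\big|_{W_{\mathcal F}}=nI-L_{\mathcal R}$ then transfers this to the spectral claim. One minor citation point: \eqref{eq:mixerPreservesW_F} is stated for $H_{CX}$, but $\widehat H_{CX}$ also preserves $W_{\mathcal F}$, because the added term $\sum_v(I-\mathcal P_v)$ is diagonal in the computational basis.
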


\begin{rmk}
\label{rmk:ReconfigurationSpectralGap}
Let $0 = \lambda_1(L_{\mathcal{R}}) < \lambda_2(L_{\mathcal{R}}) \le \dots \le \lambda_{|\mathcal{F}|}(L_{\mathcal{R}})$ denote the eigenvalues of the combinatorial Laplacian $L_{\mathcal{R}}$. By identity~\eqref{eq:FlipOrStayAsLaplacian}, the restricted mixer Hamiltonian satisfies
\[
    -\widehat{H}_{CX}\big|_{W_{\mathcal{F}}} = L_{\mathcal{R}} - nI.
\]
Consequently, the eigenvalues of $-\widehat{H}_{CX}\big|_{W_{\mathcal{F}}}$, arranged in increasing order, are
\[
    -n < \lambda_2(L_{\mathcal{R}}) - n \le \dots \le \lambda_{|\mathcal{F}|}(L_{\mathcal{R}}) - n.
\]
The ground-state energy is thus $-n$, and the spectral gap of the Flip-or-Stay mixer simplifies to
\begin{equation}
\label{eq:spectral_gap_Bao_mixer}
    \Delta\!\left(-\widehat{H}_{CX}\big|_{W_{\mathcal{F}}}\right) = \bigl(\lambda_2(L_{\mathcal{R}}) - n\bigr) - (-n) = \lambda_2(L_{\mathcal{R}}),
\end{equation}
which is precisely the algebraic connectivity (Fiedler value) \cite{fiedler1973algebraic} of the state-space reconfiguration graph $\mathcal{R}(\Gamma)$.
\end{rmk}

We next establish the key structural properties of \(\widehat{H}_{CX}\).

\begin{rmk}
    \label{rmk:symmetry_properties}
    Both operators $H_{CX}$ and $\widehat{H}_{CX}$ inherit the structural symmetries of the underlying graph. Specifically, they are invariant under the action of the graph automorphism group $\text{Aut}(\Gamma)$:
    \begin{equation}
        \label{eq:automorphism_invariance}
        \begin{aligned}
        \varphi(g) H_{CX} \varphi(g)^{-1} &= H_{CX} \quad \forall g \in \text{Aut}(\Gamma),\\ 
        \varphi(g) \widehat{H}_{CX} \varphi(g)^{-1} &= \widehat{H}_{CX} \quad \forall g \in \text{Aut}(\Gamma),
        \end{aligned}
    \end{equation}
    where 
    \begin{equation}
    \label{eq:Aut_group_repn}
    \begin{aligned}
        \varphi:& \text{Aut}(\Gamma) \rightarrow \text{U}(W)\\
        \varphi(g)\ket{x}=\ket{g\cdot x}
    \end{aligned}
    \end{equation} is the representation capturing the action of the automorphism group on the Hilbert space $W$.
\end{rmk}

\textbf{The Action of the Flip-or-Stay Mixer Group.} We consider the subgroup $\mathcal{H} \subseteq \text{Sym}(\mathbb{B}^n)$ of permutations over the set of $n$-bit binary strings $\mathbb{B}^n$, generated by the family of local mixers $\{\widehat{H}_{CX,v}\}_{v\in V}$. It follows immediately from \eqref{eq:HCXvHat_action} that each local mixer acts as an involution:
\begin{equation}
    \label{eq:mixer_involution}
    \widehat{H}^2_{CX,v} = I.
\end{equation}

The action of the group $\mathcal{H}$ naturally partitions the configuration space into a disjoint union of orbits. Because the set of feasible states $\mathcal{F}$ itself constitutes a single $\mathcal{H}$-orbit, we can explicitly isolate it from the remaining non-feasible orbits $\{\mathcal{O}_j\}_{j=1}^k$:
\begin{equation}
    \label{eq:orbit_decomposition}
    \mathbb{B}^n = \mathcal{F} \sqcup \bigsqcup_{j=1}^k \mathcal{O}_j.
\end{equation}

\begin{rmk}
    \label{rmk:invariant_strings}
    A configuration $x \in \mathbb{B}^n$ forms a singleton orbit (i.e., remains invariant under the action of the entire group $\mathcal{H}$) if and only if for each vertex $v \in V$, there exists at least one neighbor $w \in \mathcal{N}_v$ such that $x_w = 1$. In graph theory, the support of such strings corresponds to the total dominating sets of the graph.
\end{rmk}

\begin{ex}
    Let us illustrate this decomposition using a specific family of graphs.

    For the complete graph $K_n$ on $n$ vertices, the feasible set $\mathcal{F}$ consists of all configurations containing at most one active vertex (i.e., at most one $1$). Explicitly, this is given by the all-zeros string and the standard basis vectors:
    \[
    \mathcal{F} = \{0^n\} \cup \{e_v \mid v \in V\}.
    \]
    Conversely, all remaining configurations are frozen; that is, they form singleton orbits invariant under $\mathcal{H}$.
\end{ex}

\begin{rmk}
\label{rmk:toggle_group_connection}
The permutation group induced by $\mathcal{H}$ on the feasible set $\mathcal{F}$ is the independent set toggle group studied in~\cite[Section~3.6]{striker2018rowmotion}, where local generators act as elementary toggles on independent set configurations.
\end{rmk}

\section{Free QAOA Ansätze}

The free, or multi-angle, QAOA ans\"atze generalize the standard QAOA framework by removing the constraint that all local terms within a given layer share the same variational parameter. In the standard formulation, a single parameter is assigned to all mixer terms and another to all problem Hamiltonian terms in each layer, thereby restricting the variational search space. In contrast, the free ans\"atze assign an independent variational parameter to each local operator within every layer, substantially increasing the expressive power of the circuit.

\subsection{Free Dynamical Lie Algebras}

To analyze the expressive capabilities of these ansätze, we study their associated dynamical Lie algebras. These Lie algebras characterize the set of unitary transformations generated by the available local controls and therefore play a central role in our controllability analysis.

The free dynamical Lie algebras are obtained by treating the local constituents of the mixer and problem Hamiltonians as independent generators. Specifically, for the constrained mixer \(H_{CX,v}\) defined in \eqref{eq:CX_mixer} and the Flip-or-Stay \(\widehat{H}_{CX,v}\) defined in \eqref{eq:Bao_mixer}, we define

\begin{equation}
    \label{eq:freeDLAdefn}
    \mathfrak{g}_{\Gamma,\mathrm{free}}
    :=
    \left\langle
    \{\,iH_{CX,v},\,iZ_v \mid v\in V(\Gamma)\,\}
    \right\rangle_{\mathrm{Lie}},
    \qquad
    \widehat{\mathfrak{g}}_{\Gamma,\mathrm{free}}
    :=
    \left\langle
    \{\,i\widehat{H}_{CX,v},\,iZ_v \mid v\in V(\Gamma)\,\}
    \right\rangle_{\mathrm{Lie}}.
\end{equation}

Evaluating the double commutator of the local longitudinal operator $iZ_v$ with each local mixer term isolates its off-diagonal component, yielding the structural identities:
\begin{equation}
\label{eq:double_commutators}
\begin{aligned}
    -\frac{1}{4}\left[iZ_v, \left[iZ_v, iH_{CX,v}\right]\right] &= iH_{CX,v}, \\
    -\frac{1}{4}\left[iZ_v, \left[iZ_v, i\widehat{H}_{CX,v}\right]\right] &= iH_{CX,v}.
\end{aligned}
\end{equation}
Since $iZ_v \in \widehat{\mathfrak{g}}_{\Gamma,\mathrm{free}}$ and $i\widehat{H}_{CX,v} \in \widehat{\mathfrak{g}}_{\Gamma,\mathrm{free}}$ for every $v \in V$, the identity in \eqref{eq:double_commutators} directly yields the local operator containment:
\begin{equation}
\label{eq:local_containment}
    iH_{CX,v} \in \widehat{\mathfrak{g}}_{\Gamma,\mathrm{free}} \quad \text{for all } v \in V.
\end{equation}
Consequently, the extended free Lie algebra $\widehat{\mathfrak{g}}_{\Gamma,\mathrm{free}}$ contains the complete generating set of the Flip-or-Void free Lie algebra, establishing the algebra inclusion:
\begin{equation}
\label{eq:free_dla_containment}
    \mathfrak{g}_{\Gamma,\mathrm{free}} \subseteq \widehat{\mathfrak{g}}_{\Gamma,\mathrm{free}}.
\end{equation}

We now turn to characterizing the global structural properties of the free dynamical Lie algebras $\mathfrak{g}_{\Gamma,\mathrm{free}}$ and $\widehat{\mathfrak{g}}_{\Gamma,\mathrm{free}}$. By analyzing the common invariant subspaces under the action of the generators in Eq.~\eqref{eq:freeDLAdefn}, we obtain a fundamental block-diagonal embedding.

\begin{prop}
\label{prop:free_DLA_block_embedding}
    Let $\Gamma = (V,E)$ be a graph. Then the free dynamical Lie algebras $\mathfrak{g}_{\Gamma,\mathrm{free}},\widehat{\mathfrak{g}}_{\Gamma,\mathrm{free}}$ admit a block-diagonal embedding:
    \begin{equation}
        \label{eq:lie_algebra_decomposition}
      \mathfrak{g}_{\Gamma,\mathrm{free}} \subseteq   \widehat{\mathfrak{g}}_{\Gamma,\mathrm{free}} \subseteq \mathfrak{u}(W_\mathcal{F}) \oplus \bigoplus_{j=1}^k \mathfrak{u}(W_{\mathcal{O}_j}),
    \end{equation}
    where each $W_{\mathcal{O}_j}$ is the orbit Hilbert subspace defined by
    \begin{equation}
        \label{eq:orbit_subspace}
        W_{\mathcal{O}_j} := \operatorname{span} \{ \ket{x} \mid x \in \mathcal{O}_j \}
    \end{equation}
    (see \eqref{eq:orbit_decomposition}).
\end{prop}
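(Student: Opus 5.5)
The first inclusion $\mathfrak{g}_{\Gamma,\mathrm{free}} \subseteq \widehat{\mathfrak{g}}_{\Gamma,\mathrm{free}}$ is already established in \eqref{eq:free_dla_containment}. That identity rests on the double-commutator relations \eqref{eq:double_commutators}, which give $iH_{CX,v}\in\widehat{\mathfrak{g}}_{\Gamma,\mathrm{free}}$. So it remains only to show that $\widehat{\mathfrak{g}}_{\Gamma,\mathrm{free}}$ lies in the block-diagonal algebra on the right.

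The plan is the standard invariant-subspace argument. The set of operators preserving every subspace in a fixed orthogonal decomposition is closed under commutators. Its skew-Hermitian part is therefore a Lie subalgebra, namely $\mathfrak{u}(W_{\mathcal F})\oplus\bigoplus_j\mathfrak{u}(W_{\mathcal O_j})$. It therefore suffices to check that each generator $i\widehat H_{CX,v}$ and $iZ_v$ is skew-Hermitian and preserves each summand of
\[
W=W_{\mathcal F}\oplus\bigoplus_{j=1}^k W_{\mathcal O_j}.
\]
This decomposition is orthogonal because \eqref{eq:orbit_decomposition} partitions the computational basis.

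\textbf{Step 1: the generators are skew-Hermitian.} $Z_v$ is Hermitian. $\widehat H_{CX,v}$ is Hermitian because $\mathcal P_v$ is a Hermitian projector built from $Z_w$ with $w\ne v$, so it commutes with $X_v$. Hence $X_v\mathcal P_v$ is Hermitian, and $I-\mathcal P_v$ is Hermitian as well.

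\textbf{Step 2: the generators preserve each summand.} The operator $Z_v$ is diagonal in the computational basis, so it preserves the span of any set of basis vectors. For $\widehat H_{CX,v}$, formula \eqref{eq:HCXvHat_action} shows that it sends each basis vector $\ket{x}$ to the basis vector $\ket{h_v\cdot x}$, where $h_v\in\mathcal H$ is the corresponding involution. By the definition of $\mathcal H$-orbits, $h_v\cdot x$ lies in the same orbit as $x$. Thus $\widehat H_{CX,v}$ maps $W_{\mathcal F}$ into itself and each $W_{\mathcal O_j}$ into itself.

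\textbf{Step 3: pass to the whole Lie algebra.} If $A$ and $B$ both preserve every summand, then so do their real linear combinations and the commutator $[A,B]=AB-BA$. Induction on nested commutators then gives the containment of $\widehat{\mathfrak g}_{\Gamma,\mathrm{free}}$, which is the real span of such brackets, in the block algebra.

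The only point needing care is the claim, used in \eqref{eq:orbit_decomposition}, that $\mathcal F$ is exactly one $\mathcal H$-orbit. For the proposition itself, however, it is enough that $\mathcal F$ is $\mathcal H$-invariant. This holds because $\widehat H_{CX,v}$ flips $x_v$ only when all neighbours of $v$ are unoccupied, so feasibility is preserved. Invariance of $\mathcal F$ makes it a union of orbits, and the refined block structure still contains $\widehat{\mathfrak g}_{\Gamma,\mathrm{free}}$. I therefore expect no genuine obstacle; the main work is bookkeeping that the decomposition into orbit subspaces is compatible with both types of generators.
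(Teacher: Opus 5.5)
Your proposal is correct and follows essentially the same route as the paper. In both, each generator $iZ_v$ (diagonal) and $i\widehat H_{CX,v}$ (a basis permutation within $\mathcal H$-orbits) preserves every orbit subspace, so all real linear combinations and nested commutators are block-diagonal, and the first inclusion is inherited from the double-commutator identity. Your checks that the generators are skew-Hermitian, and that only $\mathcal H$-invariance of $\mathcal F$ is needed rather than $\mathcal F$ being a single orbit, are refinements the paper leaves implicit.
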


The Lie algebra embedding in \eqref{eq:lie_algebra_decomposition} establishes an upper bound on the reachable subspace, confirming that the free DLA dynamics preserves the invariant subspace decomposition and leaves the feasible subspace $W_{\mathcal{F}}$ strictly decoupled from the individual orbit sectors $W_{\mathcal{O}_j}$.

From a quantum control perspective, the primary interest lies in the system's dynamics restricted to the invariant feasible subspace $W_{\mathcal{F}}$. 

Formally, let $\mathfrak{g}_{\Gamma,\mathrm{free}, \mathcal{F}}$ and $\widehat{\mathfrak{g}}_{\Gamma,\mathrm{free}, \mathcal{F}}$ denote the restrictions of the free dynamical Lie algebra $\mathfrak{g}_{\Gamma,\mathrm{free}}$ and the extended free dynamical Lie algebra $\widehat{\mathfrak{g}}_{\Gamma,\mathrm{free}}$ to $W_{\mathcal{F}}$, respectively:
\begin{equation}
\label{eq:dla_subspace_restrictions}
    \mathfrak{g}_{\Gamma,\mathrm{free}, \mathcal{F}} := \mathfrak{g}_{\Gamma,\mathrm{free}}\big|_{W_{\mathcal{F}}}, 
    \qquad 
    \widehat{\mathfrak{g}}_{\Gamma,\mathrm{free}, \mathcal{F}} := \widehat{\mathfrak{g}}_{\Gamma,\mathrm{free}}\big|_{W_{\mathcal{F}}}.
\end{equation}
The global containment in \eqref{eq:free_dla_containment} naturally induces a corresponding inclusion between the restricted algebras:
\begin{equation}
\label{eq:dla_restriction_containment}
    \mathfrak{g}_{\Gamma,\mathrm{free}, \mathcal{F}} \subseteq \widehat{\mathfrak{g}}_{\Gamma,\mathrm{free}, \mathcal{F}}.
\end{equation}

Our main result demonstrates that each of the restricted ansätze generates the complete unitary Lie algebra on $W_{\mathcal{F}}$, guaranteeing full quantum controllability on the feasible subspace.

\begin{thm}
\label{thm:ExtendedFreeLieAlgFullUnitary}
Let $\Gamma = (V, E)$ be a connected simple graph with $|V| \ge 2$. Then the restricted free dynamical Lie algebra $\mathfrak{g}_{\Gamma,\mathrm{free}, \mathcal{F}}$ and the extended free dynamical Lie algebra $\widehat{\mathfrak{g}}_{\Gamma,\mathrm{free}, \mathcal{F}}$ are both isomorphic to the full unitary Lie algebra on the feasible subspace $W_{\mathcal{F}}$:
\begin{equation}
\label{eq:algebra_isomorphism}
    \mathfrak{g}_{\Gamma,\mathrm{free}, \mathcal{F}}=\widehat{\mathfrak{g}}_{\Gamma,\mathrm{free}, \mathcal{F}} \cong \mathfrak{u}(W_{\mathcal{F}}).
\end{equation}
\end{thm}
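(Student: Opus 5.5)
Since Eq.~\eqref{eq:dla_restriction_containment} gives $\mathfrak{g}_{\Gamma,\mathrm{free},\mathcal F}\subseteq\widehat{\mathfrak{g}}_{\Gamma,\mathrm{free},\mathcal F}$, and Proposition~\ref{prop:free_DLA_block_embedding} gives $\widehat{\mathfrak{g}}_{\Gamma,\mathrm{free},\mathcal F}\subseteq\mathfrak u(W_{\mathcal F})$, the plan is to prove only the reverse inclusion $\mathfrak u(W_{\mathcal F})\subseteq\mathfrak g_{\Gamma,\mathrm{free},\mathcal F}$ for the smaller (Flip-or-Void) algebra. I would work entirely in the computational basis of $W_{\mathcal F}$. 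By Proposition~\ref{prop:MixersAsReconfigurationOperators} this means working with the reconfiguration graph $\mathcal R(\Gamma)$, whose edges carry labels $v\in V$. The restriction $H_{CX,v}|_{W_{\mathcal F}}=\sum_{e}(E_{xy}+E_{yx})$ is the sum over $v$-labelled edges $e=\{x,y\}$ of $\mathcal R(\Gamma)$. These edges are indexed by independent sets $S$ of $\Gamma\setminus N[v]$, where $N[v]$ is the closed neighbourhood of $v$.

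\textbf{Step 1 (standard generation lemma).} I would first record the standard fact: if a Lie algebra on $\mathbb C^{N}$ contains, for every edge $\{x,y\}$ of a connected graph on the basis, both $i(E_{xy}+E_{yx})$ and $E_{xy}-E_{yx}$, then it contains $\mathfrak{su}(N)$. The reason is that brackets along a spanning tree propagate these elementary generators to all pairs, and brackets of the two types produce the diagonal differences $i(E_{xx}-E_{yy})$. Since $\mathcal R(\Gamma)$ is connected (Corollary~\ref{cor:UniformStateFromReconfigurationLaplacian}), it remains to produce these single-edge generators. To pass from $\mathfrak{su}$ to $\mathfrak u$, I would add $iZ_v|_{W_{\mathcal F}}$, which has nonzero trace whenever $|\mathcal F_{\text{with }v}|\neq|\mathcal F_{\text{without }v}|$. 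For a vertex $v$ of positive degree the configurations containing $v$ inject into those not containing it, and the injection misses $e_w$ for a neighbour $w$, so the trace is nonzero. This is where connectivity and $|V|\ge2$ enter.

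\textbf{Step 2 (local operators).} From $iZ_v$ and $iH_{CX,v}$ I get $[iZ_v,iH_{CX,v}]\propto iY_v\mathcal P_v$. Then $[iH_{CX,v},iY_v\mathcal P_v]\propto iZ_v\mathcal P_v$. On $W_{\mathcal F}$ this triple spans an $\mathfrak{su}(2)$ acting simultaneously on all $v$-edges. The task is to split this simultaneous action into the individual edges labelled by $S$.

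\textbf{Step 3 (main obstacle: isolating edges).} For each $v$ I would consider the set $\mathcal A_v$ of diagonal functions $f$, depending only on bits outside $N[v]$, such that $if\,X_v\mathcal P_v$ and $if\,Y_v\mathcal P_v$ both lie in $\mathfrak g_{\Gamma,\mathrm{free},\mathcal F}$. I would show that $\mathcal A_v$ is closed under multiplication by the occupation projectors $N_u$ and $I-N_u$ for every $u\notin N[v]$. Once that holds, $\mathcal A_v$ contains the indicator of each single $S$, which isolates individual edges and finishes the proof via Step~1.

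For $u$ adjacent to some $w\in N(v)$, I would compute the nested commutators $[\,iZ_w\mathcal P_w,\,i f X_v\mathcal P_v]$ and $[\,iH_{CX,w},[\,iH_{CX,w},\,i f X_v\mathcal P_v]]$. Because $\mathcal P_w$ contains the factor $(I+Z_v)/2$ and the factors $(I+Z_{u'})/2$ for $u'\in N(w)$, these commutators should multiply $fX_v\mathcal P_v$ by projectors on $u$'s occupation. Then I would induct on the distance from $v$ to $u$ in the connected graph $\Gamma$, transporting the multiplication through the intermediate local $\mathfrak{su}(2)$'s. Vertices with no path of this form do not exist by connectivity.

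I expect this bookkeeping, ensuring that commutators with neighbouring mixer terms really produce clean projector multiplications rather than mixtures of $X_v$- and $X_w$-type terms, to be the hardest step. If it does not close directly, my first fallback is a commutant argument: show that the only diagonal operators commuting with all generated operators are scalars, and that the generated algebra acts irreducibly on $W_{\mathcal F}$. Irreducibility follows from connectivity of $\mathcal R(\Gamma)$ together with the $Z_v$ separating levels. I would then invoke the classification that an irreducible Lie algebra containing a full Cartan of $\mathfrak u(W_{\mathcal F})$ with connected root graph is all of $\mathfrak u(W_{\mathcal F})$.
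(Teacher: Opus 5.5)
\textbf{Verdict: there is a genuine gap in Step~3.} Your reduction to the single inclusion $\mathfrak u(W_{\mathcal F})\subseteq\mathfrak g_{\Gamma,\mathrm{free},\mathcal F}$ is correct, and so are the local $\mathfrak{su}(2)$ of Step~2, the spanning-tree lemma of Step~1, and the trace argument for $iZ_v$. Step~3, however, carries the entire content of the theorem, and it is only asserted. Moreover, the brackets you point to do not produce what you claim.

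Write $Q_C:=\prod_{u\in C}\frac{I+Z_u}{2}$, so that for $w\in\mathcal N_v$ we have $\mathcal P_w=\frac{I+Z_v}{2}\,Q_{\mathcal N_w\setminus\{v\}}$. A direct computation gives
\[
\bigl[iZ_w\mathcal P_w,\ ifX_v\mathcal P_v\bigr]=-i\,f\,Q_{\mathcal N_w\setminus\{v\}}\,Y_v\mathcal P_v,
\qquad
\bigl[iH_{CX,w},\bigl[iH_{CX,w},\ ifX_v\mathcal P_v\bigr]\bigr]=-i\,f\,Q_{\mathcal N_w\setminus\{v\}}\,X_v\mathcal P_v .
\]
Both brackets multiply $f$ by the \emph{joint} vacancy projector of all outer neighbours of $w$. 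They never multiply by $N_u$ or $I-N_u$ for a single $u$.

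More generally, bracketing $ifX_v\mathcal P_v$ with a diagonal $id\in\mathfrak g$ rescales the edge $\{S,S\cup\{v\}\}$ by $d(S)-d(S\cup\{v\})$ and turns $X_v$ into $Y_v$. Brackets with the local generators at a vertex $u$ outside the closed neighbourhood $N[v]$ vanish unless $f$ already depends on $x_u$, and even then they act only on that dependence. So your ``induction on distance'' would have to show that iterated products of such vacancy projectors (and the $Z_u$ factors they produce), taken in linear combination, span all functions of $S$. That is precisely the localization problem, and the proposal does not solve it.

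The fallback does not help either. Irreducibility alone does not force $\mathfrak u(W_{\mathcal F})$. Placing a full Cartan subalgebra ($|\mathcal F|$ independent diagonal directions) inside the algebra is the same localization problem in another form.

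\textbf{What the paper does instead.} The missing idea is to propagate a \emph{diagonal} operator through the graph, rather than multipliers attached to a fixed $v$. Set $D(r,C):=iZ_rQ_C$. For $\{r,w\}\in E$ and $w\in C$,
\[
\tfrac12\bigl[iH_{CX,w},\bigl[D(r,C),iH_{CX,w}\bigr]\bigr]=D\bigl(w,(C\setminus\{w\})\cup\mathcal N_w\bigr).
\]
Start from $iZ_{r_0}Q_{\mathcal N_{r_0}}$ and follow a walk that visits every vertex. The projector grows to $Q_{V\setminus\{v\}}$, giving $iZ_vQ_{V\setminus\{v\}}\big|_{W_{\mathcal F}}=i(E_{0^n,0^n}-E_{e_v,e_v})$. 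In your language, its $v$-difference is twice the indicator of $S=\varnothing$, so it isolates the single edge $\{0^n,e_v\}$.

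One isolated edge suffices, so you do not need $\mathcal A_v$ to be closed under every single-site projector. The paper first spreads $A_{0^n,e_v},B_{0^n,e_v}$ to all singletons along a BFS tree, using that $e_p$ blocks $H_{CX,w}$ when $\{p,w\}\in E$. It then climbs in Hamming weight to obtain $A_{0^n,x},B_{0^n,x}$ for every $x\in\mathcal F$. Their brackets yield $\mathfrak{su}(W_{\mathcal F})$, after which your trace argument finishes the proof.
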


\begin{rmk}
\label{rmk:generator_reduction}
The required generating sets for both free dynamical Lie algebras can be significantly reduced. Because single-qubit Pauli operators $iZ_w$ commute with local mixer terms whose flipped vertex is different ($[iZ_w, H_{CX,v}] = 0$ and $[iZ_w, \widehat{H}_{CX,v}] = 0$ for all $w \neq v$), taking the commutator with $iZ_v$ isolates the local contribution at vertex $v$:
\begin{equation}
\label{eq:LocalMixerIsolation}
    [iZ_v, H_{CX}] = [iZ_v, H_{CX,v}], \qquad [iZ_v, \widehat{H}_{CX}] = [iZ_v, \widehat{H}_{CX,v}].
\end{equation}
Furthermore, because $iZ_v$ commutes with any diagonal projector terms in the extended mixer $\widehat{H}_{CX,v}$, taking a double commutator with $iZ_v$ eliminates these diagonal components and directly extracts the standard bit-flip mixer $iH_{CX,v}$:
\begin{equation}
\label{eq:LocalMixerDoubleCommutator}
    -\frac{1}{4}\bigl[iZ_v, [iZ_v, iH_{CX}]\bigr] = iH_{CX,v}, \qquad 
    -\frac{1}{4}\bigl[iZ_v, [iZ_v, i\widehat{H}_{CX}]\bigr] = iH_{CX,v}.
\end{equation}
This double-commutator identity establishes that the Lie algebra generated by 
\[
\{ i\widehat{H}_{CX}\big|_{W_{\mathcal{F}}} \} \cup \{ iZ_v\big|_{W_{\mathcal{F}}} \mid v \in V(\Gamma) \}
\]
contains all unextended local generators $iH_{CX,v}$, and therefore contains $\mathfrak{g}_{\Gamma,\mathrm{free},\mathcal{F}}$. Applying Theorem~\ref{thm:ExtendedFreeLieAlgFullUnitary}, we obtain the chain of inclusions:
\begin{equation}
    \mathfrak{u}(W_{\mathcal{F}}) = \mathfrak{g}_{\Gamma,\mathrm{free},\mathcal{F}} \subseteq \bigl\langle \{ i\widehat{H}_{CX}\big|_{W_{\mathcal{F}}} \} \cup \{ iZ_v\big|_{W_{\mathcal{F}}} \mid v \in V(\Gamma) \} \bigr\rangle_{\mathrm{Lie}} \subseteq \widehat{\mathfrak{g}}_{\Gamma,\mathrm{free},\mathcal{F}} \subseteq \mathfrak{u}(W_{\mathcal{F}}),
\end{equation}
which forces all inclusions to be equalities. Consequently, both free DLAs on $W_{\mathcal{F}}$ are generated solely by a single global mixer together with local $Z$-rotations:
\begin{equation}
\label{eq:ReducedGeneratorsFreeDLAs}
\begin{aligned}
    \mathfrak{g}_{\Gamma,\mathrm{free},\mathcal{F}} &= \bigl\langle \{ iH_{CX}\big|_{W_{\mathcal{F}}} \} \cup \{ iZ_v\big|_{W_{\mathcal{F}}} \mid v \in V(\Gamma) \} \bigr\rangle_{\mathrm{Lie}}, \\
    \widehat{\mathfrak{g}}_{\Gamma,\mathrm{free},\mathcal{F}} &= \bigl\langle \{ i\widehat{H}_{CX}\big|_{W_{\mathcal{F}}} \} \cup \{ iZ_v\big|_{W_{\mathcal{F}}} \mid v \in V(\Gamma) \} \bigr\rangle_{\mathrm{Lie}}.
\end{aligned}
\end{equation}
Thus, a global mixer and local \(Z\)-controls generate the same restricted DLA. Independent local mixer controls are unnecessary for the controllability conclusion.
\end{rmk}

Remark~\ref{rmk:generator_reduction} shows that full state
controllability on $W_{\mathcal F}$ does not require
independently parameterized local mixers: a single global
mixer together with individual single-qubit controls
$\left\{iZ_v\big|_{W_{\mathcal{F}}}\right\}_{v\in V}$ suffices. In standard unweighted MIS-QAOA, however, the problem Hamiltonian $H_P=\sum\limits_{v\in V}Z_v$ acts uniformly across all qubits. On symmetric graphs, this uniform weighting prevents the isolation of individual $iZ_v$ terms, trapping the system within low-dimensional invariant subspaces.

Assigning positive, pairwise distinct vertex weights breaks
this structural degeneracy. The inhomogeneous cost Hamiltonian
\begin{equation}
    \label{eq:weighted_problem_Ham}
    H_{P,w}:=\sum_{v\in V}w_vZ_v
\end{equation}
acts as a symmetry-breaking operator, allowing the individual
local mixer generators to be isolated from the global mixer.
The controllability argument underlying
Theorem~\ref{thm:ExtendedFreeLieAlgFullUnitary}
then extends to this setting, establishing that, for every
connected graph with at least two vertices, scalar-parameter
weighted QAOA with either mixer is fully controllable on
$W_{\mathcal F}$ while using just one mixer parameter $\beta$
and one cost parameter $\gamma$ per layer
(Proposition~\ref{prop:WeightedMISStandardDLA}).

\begin{prop}
\label{prop:WeightedMISStandardDLA}
Let $\Gamma = (V,E)$ be a connected simple graph with $|V| \ge 2$, and let 
\[
H_{P,w} := \sum_{v\in V} w_v Z_v
\]
(with positive, pairwise distinct weights $w_v > 0$) be the weighted problem Hamiltonian. Then the standard scalar-parameter dynamical Lie algebras driven by either the Flip-or-Void mixer $H_{\mathrm{CX}}$ or the Flip-or-Stay mixer $\widehat{H}_{\mathrm{CX}}$ together with $H_{P,w}$, restricted to the feasible subspace $W_{\mathcal{F}}$, coincide with the full unitary Lie algebra:
\begin{equation}
\label{eq:weighted_dla_equality}
    \left\langle i H_{\mathrm{CX}}\big|_{W_{\mathcal{F}}}, i H_{P,w}\big|_{W_{\mathcal{F}}} \right\rangle_{\mathrm{Lie}} 
    =
    \left\langle i \widehat{H}_{\mathrm{CX}}\big|_{W_{\mathcal{F}}}, i H_{P,w}\big|_{W_{\mathcal{F}}} \right\rangle_{\mathrm{Lie}} 
    =
    \mathfrak{u}(W_{\mathcal{F}}).
\end{equation}
\end{prop}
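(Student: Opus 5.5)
The plan is to reduce Proposition~\ref{prop:WeightedMISStandardDLA} to Theorem~\ref{thm:ExtendedFreeLieAlgFullUnitary} in the form of Remark~\ref{rmk:generator_reduction}. Concretely, I would show that the scalar-parameter algebra
\[
\mathfrak L:=\bigl\langle iH_{CX}\big|_{W_{\mathcal F}},\,iH_{P,w}\big|_{W_{\mathcal F}}\bigr\rangle_{\mathrm{Lie}}
\]
(resp.\ its hatted analogue) contains every local term $iH_{CX,v}\big|_{W_{\mathcal F}}$ and enough diagonal operators to recover the free generators.

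\textbf{Step 1 (isolating local mixers by iterated commutators).} Since $P_v$ involves only $Z_w$ with $w\in\mathcal N_v$, it commutes with $H_{P,w}$. Hence
\[
\operatorname{ad}_{iH_{P,w}}(iH_{CX,v})=-w_v[Z_v,X_v]\mathcal P_v\propto w_v\,Y_v\mathcal P_v,\qquad
\operatorname{ad}^{2}_{iH_{P,w}}(iH_{CX,v})=-4w_v^2\,iH_{CX,v}.
\]
It follows that
\[
\operatorname{ad}^{2k}_{iH_{P,w}}(iH_{CX})=(-4)^k\sum_v w_v^{2k}\,iH_{CX,v},\qquad k=1,\dots,n.
\]
For the Flip-or-Stay mixer the extra diagonal part $\sum_v(I-\mathcal P_v)$ commutes with $H_{P,w}$. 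It is therefore killed by every $\operatorname{ad}_{iH_{P,w}}$ with $k\ge1$, so the same identities hold with $\widehat H_{CX}$ in place of $H_{CX}$. The weights are positive and pairwise distinct, so the $w_v^2$ are distinct and nonzero. The Vandermonde matrix $(w_v^{2k})_{k,v}$ is then invertible, and real linear combinations of these $n$ elements give each $iH_{CX,v}$ separately. Applying one more $\operatorname{ad}_{iH_{P,w}}$ gives each $iY_v\mathcal P_v$, and $[iX_v\mathcal P_v,iY_v\mathcal P_v]$ yields $iZ_v\mathcal P_v$, all restricted to $W_{\mathcal F}$. This also gives the first equality in \eqref{eq:weighted_dla_equality}: the hatted algebra contains all $iH_{CX,v}$, hence $iH_{CX}$. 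Conversely, the unhatted algebra must be shown to contain $i\sum_v(I-\mathcal P_v)|_{W_{\mathcal F}}$. This will follow once both algebras are shown to equal $\mathfrak u(W_{\mathcal F})$.

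\textbf{Step 2 (recovering $iZ_v$ on $W_{\mathcal F}$).} On $W_{\mathcal F}$, every configuration outside the range of $\mathcal P_v$ has $x_v=0$. Hence
\[
Z_v\big|_{W_{\mathcal F}}=Z_v\mathcal P_v\big|_{W_{\mathcal F}}+(I-\mathcal P_v)\big|_{W_{\mathcal F}},
\]
and the missing piece is the diagonal projector $i(I-\mathcal P_v)|_{W_{\mathcal F}}$. I would obtain it from the commutators already available. Nested commutators such as $[iH_{CX,u},[iH_{CX,u},iZ_v\mathcal P_v]]$ and $[iH_{CX,u},[iY_u\mathcal P_u,\cdot]]$ for $u\in\mathcal N_v$ move occupation at $u$ against the control at $v$. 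I expect them to generate the diagonal operators $iN_u\mathcal P_u$, and from these, products of occupation projectors along edges. Connectivity of $\Gamma$ and $|V|\ge2$ should enter here, just as in Theorem~\ref{thm:ExtendedFreeLieAlgFullUnitary}. Once every $iZ_v|_{W_{\mathcal F}}$ lies in $\mathfrak L$, Remark~\ref{rmk:generator_reduction} immediately gives $\mathfrak L=\mathfrak u(W_{\mathcal F})$.

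\textbf{Step 3 (trace component, fallback route).} If Step 2 only yields the traceless part, I would instead argue that the operators $iH_{CX,v}$, $iY_v\mathcal P_v$ and $iZ_v\mathcal P_v$ generate $\mathfrak{su}(W_{\mathcal F})$. The approach is to show that they act irreducibly on $W_{\mathcal F}$ and produce elementary transition generators along the edges of the connected graph $\mathcal R(\Gamma)$. Adding $iH_{P,w}$ then supplies the center, because $\operatorname{tr}(H_{P,w}|_{W_{\mathcal F}})>0$. Indeed, $S\mapsto S\setminus\{v\}$ injects the independent sets containing $v$ into those not containing $v$, and it misses $\varnothing$. So $\operatorname{tr}Z_v|_{W_{\mathcal F}}>0$, and with positive weights the sum is positive. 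The main obstacle is Step 2, or equivalently the irreducibility and $\mathfrak{su}$-generation claim of Step 3. Step 1 is a routine Vandermonde argument, whereas isolating the diagonal projectors $I-\mathcal P_v$ requires real combinatorial bookkeeping on $\mathcal R(\Gamma)$.
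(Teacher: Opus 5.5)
Your Step~1 is correct and coincides with the paper's first step. The paper phrases it as Lagrange interpolation in $\mathcal T=-\operatorname{ad}^2_{iH_{P,w}}$, adding the node $0$ for the Flip-or-Stay correction; your Vandermonde system in $\operatorname{ad}^{2k}_{iH_{P,w}}$, $k\ge1$, is equivalent.

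The gap is the step that carries the proof: you never show $\mathfrak{su}(W_{\mathcal F})\subseteq\mathfrak L$. Step~2 is only an expectation. Its target, $i(I-\mathcal P_v)|_{W_{\mathcal F}}$ and hence $iZ_v|_{W_{\mathcal F}}$, is essentially as hard to reach as the conclusion itself. Step~3 leaves the production of elementary transition generators unproved. Irreducibility alone would not help, since proper subalgebras such as $\mathfrak{so}(d)$ act irreducibly on $\mathbb C^d$ for $d\ge3$.

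The missing observation is that the proof of Theorem~\ref{thm:ExtendedFreeLieAlgFullUnitary} uses the controls $iZ_v$ in only two places. The first is to produce $iZ_v\mathcal P_v=D(v,\mathcal N_v)$ in \eqref{eq:local_cz}; the second is the central direction. Everything in between consists solely of commutators with the local mixers $iH_{CX,w}$ and elements already built from them. That includes the walk propagation \eqref{eq:diagonal_propagation} yielding $iZ_{v_{\mathrm{end}}}Q_{V\setminus\{v_{\mathrm{end}}\}}=iE_{0^n,0^n}-iE_{e_{v_{\mathrm{end}}},e_{v_{\mathrm{end}}}}$, the root matrix units, the BFS step over singletons, and the Hamming-weight induction. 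All operators involved preserve $W_{\mathcal F}$, so these full-space identities restrict.

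Your Step~1 already places every $iH_{CX,v}|_{W_{\mathcal F}}$ and every $iZ_v\mathcal P_v|_{W_{\mathcal F}}$ in $\mathfrak L$. The construction therefore transfers verbatim and gives $\mathfrak{su}(W_{\mathcal F})\subseteq\mathfrak L$; the diagonal projectors $I-\mathcal P_v$ are never needed. This is exactly how the paper argues.

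The central direction then comes from $\operatorname{tr}(H_{P,w}|_{W_{\mathcal F}})>0$, as you propose, but your witness is wrong. The map $S\mapsto S\setminus\{v\}$ sends $\{v\}$ to $\varnothing$, so it does not miss $\varnothing$. The sets it misses are those avoiding $v$ but containing a neighbor of $v$, e.g.\ $\{u\}$ with $u\in\mathcal N_v$. This is where connectivity and $|V|\ge2$ are genuinely used: for an isolated vertex the map is a bijection and $\operatorname{tr}Z_v|_{W_{\mathcal F}}=0$. With these two repairs, both algebras equal $\mathfrak u(W_{\mathcal F})$, which also settles the first equality you deferred.
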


Thus, standard QAOAs with pairwise distinct vertex weights achieve full state controllability on the feasible subspace $W_{\mathcal{F}}$.

The proof of this result is given in Appendix~B.
The weights are fixed coefficients of the problem Hamiltonian,
so the ansatz retains only two variational parameters per layer.
The unweighted case does not satisfy this distinctness condition
and can exhibit substantially smaller reachable state spaces (see Subsection~\ref{subsec:explicit_mis_separation}).

\begin{rmk}
\label{rmk:perturbative_weighting_mis}
Proposition~\ref{prop:WeightedMISStandardDLA} yields a straightforward mechanism for transforming an unweighted Maximum Independent Set  instance into a weighted problem with full state controllability, without altering the optimal solution set. Label the vertices $V = \{1, \dots, n\}$ and assign vertex weights
\begin{equation}
\label{eq:binary_weight_perturbation}
w_v \coloneqq 1 + \varepsilon 2^{-v}, \qquad 0 < \varepsilon < 1.
\end{equation}
As $\sum\limits_{v=1}^n \varepsilon 2^{-v} = \varepsilon (1 - 2^{-n}) < 1$, the cumulative weight change across all vertices is strictly less than $1$. Thus, for any two independent sets $x, y \in \mathcal{F}$ with $|x| > |y|$ (so that $|x| \ge |y| + 1$), the difference in their total weights satisfies
\begin{equation}
\label{eq:weight_difference_bound}
\sum_{v \in x} w_v - \sum_{v \in y} w_v = |x| - |y| + \varepsilon \left( \sum_{v \in x} 2^{-v} - \sum_{v \in y} 2^{-v} \right) \ge 1 - \varepsilon \left(1 - 2^{-n}\right) > 0.
\end{equation}
In particular, maximum-weight independent set with respect to $w$ is guaranteed to be a maximum-cardinality independent set of the original unweighted graph $\Gamma$. Furthermore, because distinct subsets of $\{1, \dots, n\}$ have unique finite binary sum representations $\sum\limits_{v \in x} 2^{-v}$, this perturbation ensures pairwise distinct vertex weights $w_v$ and breaks all energy degeneracies within each cardinality sector. Proposition~\ref{prop:WeightedMISStandardDLA} therefore yields a two-parameter-per-layer, fully controllable architecture whose cost function resolves degeneracies while preserving the MIS optimal cardinality.
\end{rmk}

\subsection{Subspace State Controllability}
\label{subsec:controllability}

A direct consequence of Theorem~\ref{thm:ExtendedFreeLieAlgFullUnitary} is the existence of finite-depth parameterized quantum circuits (PQCs) capable of universal state preparation within the feasible subspace $W_{\mathcal{F}}$.

\begin{cor}
\label{cor:controllability}
Let $\Gamma = (V, E)$ be a connected simple graph with $|V| \ge 2$. For any pair of normalized state vectors $\ket{h}, \ket{t} \in W_{\mathcal{F}}$, there exists a finite circuit depth $L \in \mathbb{Z}_{>0}$, a selection of Hamiltonians $H_\ell$ from the generating sets of $\mathfrak{g}_{\Gamma,\mathrm{free}}$ or $\widehat{\mathfrak{g}}_{\Gamma,\mathrm{free}}$ in \eqref{eq:ReducedGeneratorsFreeDLAs}, and a parameter vector $\boldsymbol{\theta} = (\theta_1, \theta_2, \dots, \theta_L) \in \mathbb{R}^L$ such that the parameterized unitary
\begin{equation}
\label{eq:PQC_ansatz}
    U(\boldsymbol{\theta}) = \prod_{\ell=1}^{L} e^{-i \theta_\ell H_\ell}
\end{equation}
achieves exact state transfer from source state $\ket{h}$ to target state $\ket{t}$:
\begin{equation}
\label{eq:exact_state_prep}
    U(\boldsymbol{\theta})\ket{h} = \ket{t}.
\end{equation}
\end{cor}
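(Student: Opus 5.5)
We derive Corollary~\ref{cor:controllability} from Theorem~\ref{thm:ExtendedFreeLieAlgFullUnitary} together with the reduced generating sets of Remark~\ref{rmk:generator_reduction}, using the standard Lie-theoretic controllability argument on a compact group. Fix either architecture, and let $\mathcal S=\{H_1,\dots,H_m\}$ be the corresponding reduced generating set, namely the global mixer together with the $Z_v$. Every element of $\mathcal S$ preserves $W_{\mathcal F}$ by \eqref{eq:mixerPreservesW_F}, and $Z_v$ is diagonal. Hence all circuits \eqref{eq:PQC_ansatz} built from $\mathcal S$ restrict to unitaries on $W_{\mathcal F}$. Let $\mathcal G\subseteq \mathrm U(W_{\mathcal F})$ be the set of all such finite products of restricted exponentials $e^{-i\theta H_\ell}|_{W_{\mathcal F}}$.

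\emph{Step 1: $\mathcal G$ is a group.} It is closed under products by concatenating circuits. It is closed under inverses because $(e^{-i\theta H})^{-1}=e^{-i(-\theta)H}$ and negative angles are allowed since $\theta_\ell\in\mathbb R$. Thus $\mathcal G$ is the subgroup generated by the one-parameter subgroups $\{e^{-i\theta H_\ell}\}$.

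\emph{Step 2: $\mathcal G=\mathrm U(W_{\mathcal F})$.} By the standard theorem on subgroups generated by one-parameter subgroups (Yamabe's theorem, or equivalently the orbit theorem / Chow--Rashevskii for right-invariant systems), $\mathcal G$ is a connected Lie subgroup. Its Lie algebra is the Lie algebra generated by $\{iH_\ell|_{W_{\mathcal F}}\}$, which by \eqref{eq:ReducedGeneratorsFreeDLAs} and Theorem~\ref{thm:ExtendedFreeLieAlgFullUnitary} equals $\mathfrak u(W_{\mathcal F})$. A connected Lie subgroup whose Lie algebra is the full Lie algebra of the connected group $\mathrm U(W_{\mathcal F})$ is the whole group. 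So every $U\in\mathrm U(W_{\mathcal F})$ is a finite product $\prod_{\ell=1}^L e^{-i\theta_\ell H_\ell}$ restricted to $W_{\mathcal F}$.

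\emph{Step 3: state transfer.} Given normalized $\ket h,\ket t\in W_{\mathcal F}$, $\mathrm U(W_{\mathcal F})$ acts transitively on its unit sphere. For instance, extend each of $\ket h$ and $\ket t$ to an orthonormal basis and map one basis to the other. This gives $U$ with $U\ket h=\ket t$. By Step~2, $U$ is realized by some finite circuit $U(\boldsymbol\theta)$, and since $\ket h\in W_{\mathcal F}$ the action of the full circuit on $\ket h$ agrees with that of its restriction. This yields \eqref{eq:exact_state_prep} with finite $L$.

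The only step needing care is Step~2. One must pass from the Lie-algebra statement to the claim that finite products, rather than only their closure, exhaust the group. The cleanest route is compactness plus connectedness. The map $(\theta_1,\dots,\theta_L)\mapsto\prod e^{-i\theta_\ell H_\ell}$ along a suitable word has full-rank differential at some point, since the tangent directions $\mathrm{Ad}$-conjugates of $iH_\ell$ span the generated Lie algebra. Hence $\mathcal G$ contains an open neighborhood in $\mathrm U(W_{\mathcal F})$, and an open subgroup of a connected group is the whole group. Alternatively, one may simply cite the standard controllability theorem of Jurdjevic--Sussmann for compact Lie groups.
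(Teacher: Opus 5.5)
Your proof is correct and follows the same route as the paper. Theorem~\ref{thm:ExtendedFreeLieAlgFullUnitary}, together with the reduced generators of Remark~\ref{rmk:generator_reduction}, identifies the dynamical group with $\mathrm{U}(W_{\mathcal F})$; transitivity on the unit sphere then gives a $U$ with $U\ket{h}=\ket{t}$. The paper simply cites D'Alessandro, Corollary~3.2.6, for realizing $U$ as a finite product of generator exponentials, whereas you sketch that step yourself via Yamabe's theorem and the open-subgroup argument.

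One small point: because negative angles are allowed, your Step~2 only needs connectedness of $\mathrm{U}(W_{\mathcal F})$. Compactness matters only for the positive-time (semigroup) version of the result.
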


In the language of quantum control theory, Corollary~\ref{cor:controllability} asserts that both the Flip-or-Void and Flip-or-Stay free ansätze exhibit \emph{subspace state controllability} on $W_{\mathcal{F}}$, ensuring that no feasible quantum state is dynamically inaccessible.

\subsection{From Maximal Expressivity to Barren Plateaus}
\label{subsec:BPforFree}

Another direct application of our Lie-algebraic framework lies in characterizing trainability landscapes and quantifying the occurrence of \emph{barren plateaus} in constraint-preserving variational quantum optimization.

Barren plateaus manifest as an exponential suppression of cost function gradients with respect to ansatz parameters—a concentration of measure that renders classical gradient-based optimization routines ineffective \cite{mcclean2018barren,larocca2025barren, FHCKYHSP}. A standard proxy for quantifying this landscape flattening is the variance of the cost function partial derivatives over parameter space. Following \cite{RBSKMLC}, we formalize this phenomenon as follows.

\begin{defn}
\label{defn:BP}
Let $\rho = \ket{\xi}\bra{\xi}$ denote a pure initial state. A parameterized quantum circuit architecture exhibits a \emph{barren plateau} if the variance of the partial derivatives of the loss function $\ell_{\boldsymbol{\theta}}(\rho, H_P)$ decays exponentially with system size $n$:
\begin{equation}
\label{eq:BPdefn}
    \operatorname{Var}_{\boldsymbol{\theta}}\!\bigl[\partial_k \ell_{\boldsymbol{\theta}}(\rho, H_P)\bigr] \in \mathcal{O}\left(\frac{1}{b^n}\right)
\end{equation}
for some dimension-independent constant $b > 1$.
\end{defn}

In our restricted ma-QAOA setting, the loss function is defined as the expectation value of the problem Hamiltonian $H_P$ restricted to the invariant feasible subspace $W_{\mathcal{F}}$:
\begin{equation}
\label{eq:MIS_loss_function}
    \ell_{\boldsymbol{\theta}}\!\left(\rho, H_P\big|_{W_{\mathcal{F}}}\right) := \bra{\psi(\boldsymbol{\theta})} H_P\big|_{W_{\mathcal{F}}} \ket{\psi(\boldsymbol{\theta})} = \operatorname{Tr}\!\left[ U(\boldsymbol{\theta}) \, \rho \, U^\dagger(\boldsymbol{\theta}) \, H_P\big|_{W_{\mathcal{F}}} \right],
\end{equation}
where $U(\boldsymbol{\theta})$ is the alternating unitary ansatz generated by the free mixer architectures (see \eqref{eq:PQC_ansatz}).

Using the full-controllability result
$\mathfrak{g}_{\Gamma,\mathrm{free},\mathcal{F}}
\cong \mathfrak{u}(W_{\mathcal{F}})$
from Theorem~\ref{thm:ExtendedFreeLieAlgFullUnitary},
we derive an exact expression for the loss variance under
the unitary $2$-design assumption and establish asymptotic
bounds in terms of the feasible-subspace dimension
$|\mathcal{F}|$ and the independence number $\alpha(\Gamma)$.
These bounds quantify loss concentration in deep
ma-QAOA circuits.

\begin{thm}
\label{thm:BP_mitigation}
Consider either of the two multi-angle QAOA ans\"atze associated with the MIS problem on a connected simple graph $\Gamma = (V,E)$ with $|V| \ge 2$ vertices, generated by the independently parameterized local control operators
$\{H_{\mathrm{CX},v}, Z_v \mid v \in V\}$
or
$\{\widehat{H}_{\mathrm{CX},v}, Z_v \mid v \in V\}$.
Let $d := |\mathcal{F}|$ denote the dimension of the feasible subspace $W_{\mathcal{F}}$, and let $\rho$ be any fixed pure state supported on $W_{\mathcal{F}}$. Suppose that the circuit ensemble forms an exact unitary $2$-design on its dynamical group $G \cong \mathrm{U}(W_{\mathcal{F}})$.

If $S$ is an $\mathcal{F}$-valued random variable uniformly distributed on $\mathcal{F}$, then the variance of the loss function is given exactly by
\begin{equation}
\label{eq:HaarVarianceExactIdentity}
    \operatorname{Var}_{\boldsymbol{\theta}}
    \left[
        \ell_{\boldsymbol{\theta}}
        \left(
            \rho,
            H_P\big|_{W_{\mathcal{F}}}
        \right)
    \right] = \frac{4\,\operatorname{Var}_{S \sim \operatorname{Unif}(\mathcal{F})}(|S|)}{d+1}.
\end{equation}
In particular, this variance exhibits the asymptotic scaling
\begin{equation}
\label{eq:HaarVarianceAsymptoticScaling}
    \operatorname{Var}_{\boldsymbol{\theta}}
    \left[
        \ell_{\boldsymbol{\theta}}
        \left(
            \rho,
            H_P\big|_{W_{\mathcal{F}}}
        \right)
    \right]
    \in
    \Omega\left(\frac{1}{d^2}\right)
    \cap
    \mathcal{O}\left(
        \frac{\alpha(\Gamma)^2}{d}
    \right).
\end{equation}
\end{thm}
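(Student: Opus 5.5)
We compute the variance of a Haar-random state on $W_{\mathcal F}$, which is legitimate here. By Theorem~\ref{thm:ExtendedFreeLieAlgFullUnitary}, the dynamical group restricted to $W_{\mathcal F}$ is all of $\mathrm{U}(W_{\mathcal F})$. The $2$-design assumption then lets us replace the parameter average by the Haar average over $U\in\mathrm{U}(d)$ in all first and second moments.

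Write $O:=H_P\big|_{W_{\mathcal F}}$. Since $\rho$ is pure, $\ell=\operatorname{Tr}[U\rho U^\dagger O]$ is the expectation of $O$ in a Haar-random unit vector $\ket{\psi}$ of $\mathbb C^d$.

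\textbf{Step 1: standard moments.} We use
\[
\mathbb E[\ket{\psi}\!\bra{\psi}]=\frac{I}{d},\qquad
\mathbb E[(\ket{\psi}\!\bra{\psi})^{\otimes 2}]=\frac{I+\mathrm{SWAP}}{d(d+1)}.
\]
These give $\mathbb E[\ell]=\operatorname{Tr}O/d$ and $\mathbb E[\ell^2]=\bigl((\operatorname{Tr}O)^2+\operatorname{Tr}O^2\bigr)/(d(d+1))$. Subtracting $\mathbb E[\ell]^2$ yields
\[
\operatorname{Var}[\ell]=\frac{1}{d+1}\left(\frac{\operatorname{Tr}O^2}{d}-\Bigl(\frac{\operatorname{Tr}O}{d}\Bigr)^2\right).
\]
The bracket is the variance of the spectrum of $O$ under the uniform distribution on its $d$ eigenvalues.

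\textbf{Step 2: identify the spectrum.} $O$ is diagonal in the basis $\{\ket{x}\}_{x\in\mathcal F}$ with eigenvalue $n-2|x|$. The uniform spectral variance is therefore $\operatorname{Var}_{S}(n-2|S|)=4\operatorname{Var}_S(|S|)$ with $S\sim\operatorname{Unif}(\mathcal F)$. This gives \eqref{eq:HaarVarianceExactIdentity}.

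\textbf{Step 3: the upper bound.} Since $|S|\in[0,\alpha(\Gamma)]$, Popoviciu's inequality gives $\operatorname{Var}(|S|)\le\alpha(\Gamma)^2/4$. Hence the variance is at most $\alpha(\Gamma)^2/(d+1)=\mathcal O(\alpha^2/d)$.

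\textbf{Step 4: the lower bound.} This is the only step requiring a combinatorial input: $|S|$ is not almost surely constant. Because $|V|\ge2$, both $\mathcal F_0=\{0^n\}$ and $\mathcal F_1$ (which has $n$ elements) are nonempty.

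For an integer-valued variable with mean $\mu$, the empty set contributes $\frac1d\mu^2$ to the variance and each singleton contributes $\frac1d(1-\mu)^2$. Since $n\ge 1$ singletons exist, these two contributions alone give
\[
\operatorname{Var}(|S|)\ge \tfrac1d\bigl(\mu^2+(1-\mu)^2\bigr)\ge \tfrac{1}{2d}.
\]
Therefore
\[
\operatorname{Var}[\ell]\ge \frac{2}{d(d+1)}=\Omega(1/d^2).
\]

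The remaining care point is justifying the Haar replacement on $W_{\mathcal F}$ alone. The state $\rho$ and the observable $O$ both live on the invariant block $W_{\mathcal F}$ (Proposition~\ref{prop:free_DLA_block_embedding}), so only the $\mathrm{U}(W_{\mathcal F})$ factor of the group acts on them. The other orbit blocks play no role.
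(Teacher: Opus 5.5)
Your proof is correct and follows essentially the paper's route: the paper obtains your Step~1 formula by citing the Lie-algebraic variance theorem of \cite{RBSKMLC}, using generalized purities on $\mathfrak{su}(W_{\mathcal F})$, where you compute the Haar second moments directly; your Popoviciu upper bound is the paper's $|S|^2\le\alpha(\Gamma)|S|$ argument. The only real difference is the lower bound: the paper keeps all pairs involving $\varnothing$ together with a chain of independent sets of sizes $1,\dots,\alpha(\Gamma)$ to get $\operatorname{Var}(|S|)\ge(d-1)/d^2$, roughly twice your two-term estimate $1/(2d)$, though both give $\Omega(1/d^2)$.
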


Together, Theorems~\ref{thm:ExtendedFreeLieAlgFullUnitary} and \ref{thm:BP_mitigation} highlight a fundamental expressivity--trainability trade-off in feasibility-preserving QAOA architectures. Because the dynamical Lie algebra spans the full unitary algebra $\mathfrak{u}(W_{\mathcal{F}})$, the circuit is maximally expressive and capable of uniformly exploring the feasible subspace. However, whenever the independent set count $|\mathcal{F}|$ grows exponentially with $n$, the $  \mathcal{O}\left(
        \frac{\alpha(\Gamma)^2}{d}\right)$ upper bound guarantees an exponential suppression of landscape variance under the stated ensemble assumption.

These variance bounds yield immediate implications for algorithmic trainability via the concentration of measure framework established in~\cite[Theorem~2, first statement]{arrasmith2022equivalence}, which formalizes the rigorous equivalence between exponential concentration of the cost function and the emergence of barren plateaus.

\begin{cor}
\label{cor:BP_consequences}
Under the assumptions of Theorem~\ref{thm:BP_mitigation}, consider the depth-$p$ Flip-or-Stay ma-QAOA ansatz driven by the local generators $\{\widehat{H}_{\mathrm{CX},v}, Z_v \mid v \in V\}$, where the variational parameters $\boldsymbol{\theta}$ are sampled independently and uniformly from $[0, 2\pi)^{2np}$. 

As the local generators are involutions, i.e.
\begin{equation}
\label{eq:generator_involutions}
    \widehat{H}_{\mathrm{CX},v}^{\,2} = Z_v^{\,2} = I \qquad \forall v \in V,
\end{equation}
they fulfill the generator condition required by~\cite[Theorem~2]{arrasmith2022equivalence}. Therefore, whenever the feasible subspace dimension $d = |\mathcal{F}|$ grows exponentially with the graph order $n = |V|$, the ansatz exhibits a barren plateau in the sense of Definition~\ref{defn:BP}.
\end{cor}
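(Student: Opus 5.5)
The plan is to reduce Corollary~\ref{cor:BP_consequences} to the two hypotheses of \cite[Theorem~2]{arrasmith2022equivalence}: the involutory generator condition, and exponential concentration of the loss itself. Their result states that, for parameters sampled independently and uniformly over $[0,2\pi)$ with generators squaring to the identity, the loss variance controls the variance of the partial derivatives.

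\textbf{Step 1: the generator condition.} I first verify \eqref{eq:generator_involutions}.

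\begin{itemize}
\item $Z_v^2=I$ is immediate.
\item For $\widehat H_{CX,v}=X_v\mathcal P_v+(I-\mathcal P_v)$, the projector $\mathcal P_v$ involves only neighbors $w\neq v$, so it commutes with $X_v$. Expanding and using $\mathcal P_v^2=\mathcal P_v$ gives $X_v^2\mathcal P_v+(I-\mathcal P_v)=I$, since the cross terms vanish by $\mathcal P_v(I-\mathcal P_v)=0$. This is also \eqref{eq:mixer_involution}.
\end{itemize}

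Consequently each factor satisfies $e^{-i\theta H}=\cos\theta\, I-i\sin\theta\, H$. The parameter-shift rule therefore holds, and every partial derivative $\partial_k\ell_{\boldsymbol\theta}$ is a difference of two loss evaluations at shifted parameters $\theta_k\pm\pi/2$. Uniform sampling on $[0,2\pi)$ is invariant under these shifts.

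\textbf{Step 2: the derivative bound.} From the difference formula I get
\[
\operatorname{Var}_{\boldsymbol\theta}[\partial_k\ell]\le 2\bigl(\operatorname{Var}[\ell_{+}]+\operatorname{Var}[\ell_-]\bigr)=4\operatorname{Var}_{\boldsymbol\theta}[\ell],
\]
where $\ell_\pm$ denote the loss at the shifted points. Shift invariance is what makes $\operatorname{Var}[\ell_\pm]=\operatorname{Var}[\ell]$. This is the content of the first statement of \cite[Theorem~2]{arrasmith2022equivalence}, and I would simply cite it, noting that its involutory hypothesis is exactly the one verified in Step 1.

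\textbf{Step 3: exponential decay of the loss variance.} I invoke Theorem~\ref{thm:BP_mitigation}, which gives $\operatorname{Var}[\ell]\in\mathcal O(\alpha(\Gamma)^2/d)$. This requires the 2-design assumption on $G\cong\mathrm U(W_{\mathcal F})$, whose applicability rests on Theorem~\ref{thm:ExtendedFreeLieAlgFullUnitary}. Since $\alpha(\Gamma)\le n$, the bound becomes $\mathcal O(n^2/d)$. If $d\ge c\,b_0^n$ with $b_0>1$, then $n^2/b_0^n\in\mathcal O(b^{-n})$ for any $1<b<b_0$. Combining with Step 2 yields \eqref{eq:BPdefn}.

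\textbf{Main obstacle.} The delicate point is reconciling two assumptions: uniform independent sampling of $2np$ angles, which is what Step 2 needs, and the exact 2-design assumption on the resulting ensemble, which Step 3 needs. The corollary imposes both, so I would read it as stated and make explicit that the 2-design property refers to the pushforward of the uniform parameter measure. A secondary check is that restricting to $W_{\mathcal F}$ does not interfere with the parameter-shift rule. It does not, because every generator preserves $W_{\mathcal F}$ (Proposition~\ref{prop:free_DLA_block_embedding}), so $\rho$ and $H_P|_{W_{\mathcal F}}$ can be used throughout.
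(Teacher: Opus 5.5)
Your proposal is correct and follows the paper's own route: involutivity of $\widehat H_{\mathrm{CX},v}$ and $Z_v$ supplies the generator hypothesis of the first statement of \cite[Theorem~2]{arrasmith2022equivalence}, and the $\mathcal O(\alpha(\Gamma)^2/d)$ bound of Theorem~\ref{thm:BP_mitigation}, with $\alpha(\Gamma)\le n$, supplies the exponential loss concentration that this theorem converts into exponentially small derivative variances. One small fix is needed in your self-contained Step~2: under the paper's convention $e^{-i\theta H}$ with $H^2=I$, the shift rule is $\partial_k\ell=\ell(\theta_k+\pi/4)-\ell(\theta_k-\pi/4)$, since shifts of $\pm\pi/2$ give an identically vanishing difference because $e^{-i\pi H}=-I$, but your bound $\operatorname{Var}[\partial_k\ell]\le 4\operatorname{Var}[\ell]$ is unaffected.
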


\section{Standard QAOA Ansätze}

We turn our attention to the standard Quantum Approximate Optimization Algorithm formulations, which restrict the optimization trajectory by employing the alternating layer structure defined in \eqref{eq:qaoa_chain}. In this setup, rather than using decoupled terms, the collective mixer and problem Hamiltonians are driven by a single pair of scalar parameters $(\beta, \gamma)$ per layer. 

\subsection{Standard Dynamical Lie Algebras}
\label{subsec:standard_dlas}

In contrast to free dynamical Lie algebras where local mixer terms are independently parameterized, the \emph{standard dynamical Lie algebra}  of a QAOA instance is generated by the global mixer and problem Hamiltonians. For the Flip-or-Void and Flip-or-Stay ansätze, these real Lie algebras are defined respectively as:
\begin{equation}
\label{eq:stdDLAdefn}
    \mathfrak{g}_{\Gamma,\mathrm{std}} := \langle iH_{CX},\, iH_P \rangle_{\mathrm{Lie}},
    \qquad
    \widehat{\mathfrak{g}}_{\Gamma,\mathrm{std}} := \langle i\widehat{H}_{CX},\, iH_P \rangle_{\mathrm{Lie}}.
\end{equation}

Analogous to the free algebra inclusions in \eqref{eq:free_dla_containment} and \eqref{eq:dla_restriction_containment}, the global nested commutator identity
\begin{equation}
\label{eq:global_containment}
    [iH_P, [iH_P, i\widehat{H}_{CX}]] = -4 \, iH_{CX}
\end{equation}
ensures that $iH_{CX} \in \widehat{\mathfrak{g}}_{\Gamma,\mathrm{std}}$. This double commutator directly establishes algebra containments both globally and on the feasible subspace $W_{\mathcal{F}}$:
\begin{equation}
\label{eq:std_dla_containment}
    \mathfrak{g}_{\Gamma,\mathrm{std}} \subseteq \widehat{\mathfrak{g}}_{\Gamma,\mathrm{std}}
    \qquad \text{and} \qquad
    \mathfrak{g}_{\Gamma,\mathrm{std},\mathcal{F}} \subseteq \widehat{\mathfrak{g}}_{\Gamma,\mathrm{std},\mathcal{F}},
\end{equation}
where $\mathfrak{g}_{\Gamma,\mathrm{std},\mathcal{F}} := \mathfrak{g}_{\Gamma,\mathrm{std}}\big|_{W_{\mathcal{F}}}$ and $\widehat{\mathfrak{g}}_{\Gamma,\mathrm{std},\mathcal{F}} := \widehat{\mathfrak{g}}_{\Gamma,\mathrm{std}}\big|_{W_{\mathcal{F}}}$ denote the respective subspace restrictions.

\begin{rmk}
\label{rmk:stay_dla_projector_containment}
We note that identities~\eqref{eq:Bao_mixer} and~\eqref{eq:global_containment} together imply that the dynamical Lie algebra of the Flip-or-Stay ansatz, $\widehat{\mathfrak{g}}_{\Gamma,\mathrm{std},\mathcal{F}}$, contains the diagonal constraint operator $i\sum\limits_{v \in V}(I - \mathcal{P}_{v})$ (see~\eqref{eq:CX_mixer}).
\end{rmk}

\begin{rmk}
\label{rmk:dla_invariance}
By Remark~\ref{rmk:symmetry_properties}, the generators of $\mathfrak{g}_{\Gamma,\mathrm{std}}$ and $\widehat{\mathfrak{g}}_{\Gamma,\mathrm{std}}$ commute with the linear representation of the graph automorphism group $\operatorname{Aut}(\Gamma)$. Because $\operatorname{Aut}(\Gamma)$ acts on operator space via unitary conjugation, it preserves the Lie bracket structure, rendering the resulting DLAs explicitly invariant under $\operatorname{Aut}(\Gamma)$. Whenever $\operatorname{Aut}(\Gamma)$ is nontrivial, these symmetry-enforced commutation constraints restrict operator generation, forcing strict inclusions relative to their independently controlled free counterparts:
\begin{equation}
\label{eq:strict_containment}
    \mathfrak{g}_{\Gamma,\mathrm{std}} \subsetneq \mathfrak{g}_{\Gamma,\mathrm{free}},
    \qquad
    \widehat{\mathfrak{g}}_{\Gamma,\mathrm{std}} \subsetneq \widehat{\mathfrak{g}}_{\Gamma,\mathrm{free}}.
\end{equation}
\end{rmk}

This observation motivates  investigating the dimension and expressivity of standard DLAs on asymmetric graphs ($\operatorname{Aut}(\Gamma) = \{\operatorname{id}\}$), where spatial automorphisms induce no nontrivial symmetry-based decomposition of the Hilbert space (though non-spatial dynamical symmetries may still persist).

\subsection{Dynamical Group Orbits and Exact Ground-State Reachability}
\label{subsec:group_orbits_reachability}

Next we establish reachability and state-preparation guarantees for standard QAOA implementations by analyzing the orbits generated by the action of the dynamical Lie groups corresponding to the standard Lie algebras $\mathfrak{g}_{\Gamma,\mathrm{std},\mathcal{F}}$ and $\widehat{\mathfrak{g}}_{\Gamma,\mathrm{std},\mathcal{F}}$ on the feasible subspace $W_{\mathcal{F}}$.

Recall that for a subgroup $G \subseteq \mathrm{U}(W_{\mathcal{F}})$ acting on $W_{\mathcal{F}}$, the \emph{group orbit}  of a state vector $\ket{\psi} \in W_{\mathcal{F}}$ is defined as the set of all quantum states reachable under the action of $G$: 
\begin{equation}
\label{eq:dynamical_group_orbit_def}
    G \cdot \ket{\psi} := \left\{ U \ket{\psi} \;\middle|\; U \in G \right\} \subseteq W_{\mathcal{F}}.
\end{equation}

\begin{thm}
\label{thm:equal_standard_orbits}
Let $n \ge 1$, and let
\begin{equation}
\label{eq:standard_dynamical_groups}
\begin{aligned}
G_{\Gamma,\mathrm{std}}
&:=
\left\langle
    \exp(X) : X \in \mathfrak{g}_{\Gamma,\mathrm{std},\mathcal F}
\right\rangle, \\
\widehat{G}_{\Gamma,\mathrm{std}}
&:=
\left\langle
    \exp(X) : X \in \widehat{\mathfrak{g}}_{\Gamma,\mathrm{std},\mathcal F}
\right\rangle
\end{aligned}
\end{equation}
be the connected analytic dynamical Lie groups of the standard Flip-or-Void and Flip-or-Stay ans\"atze, respectively, acting on $W_{\mathcal{F}}$. Let $\ket{\zeta_\Gamma}$ denote the unique ground state of $-H_{CX}\big|_{W_{\mathcal{F}}}$. Then, the mixer ground states lie within the corresponding dynamical group orbits generated from the computational vacuum state $\ket{0}^n \in W_{\mathcal{F}}$:
\begin{equation}
\label{eq:equal_standard_vector_orbits}
\begin{aligned}
    \ket{\zeta_\Gamma}
    &\in
    G_{\Gamma,\mathrm{std}} \cdot \ket{0}^n, \\
    \ket{\xi_{\mathcal{F}}}
    &\in
    \widehat{G}_{\Gamma,\mathrm{std}} \cdot \ket{0}^n.
\end{aligned}
\end{equation}
\end{thm}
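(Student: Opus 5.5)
The plan is to reduce the statement to a transitivity question on a single cyclic subspace. Let $M$ denote either $H_{CX}\big|_{W_{\mathcal F}}=A_{\mathcal R}$ or $\widehat H_{CX}\big|_{W_{\mathcal F}}=nI-L_{\mathcal R}$ (Proposition~\ref{prop:MixersAsReconfigurationOperators}), and let $\ket{\phi}$ be its Perron vector, which is $\ket{\zeta_\Gamma}$ or $\ket{\xi_{\mathcal F}}$ respectively. Let $G$ be the corresponding standard dynamical group and
\[
K:=\operatorname{span}_{\mathbb C}\{\,X_1X_2\cdots X_m\ket{0}^n \;:\; X_j\in\{M,\,H_P|_{W_{\mathcal F}}\},\ m\ge 0\,\}
\]
the cyclic subspace generated by the vacuum. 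Since $G$ is generated by $e^{-i\beta M}$ and $e^{-i\gamma H_P}$, it preserves $K$, and $G\cdot\ket{0}^n$ is contained in the unit sphere $S(K)$.

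\textbf{Step 1: $\ket{\phi}\in K$.} Because $\mathcal R(\Gamma)$ is connected (Corollary~\ref{cor:UniformStateFromReconfigurationLaplacian}), $M+cI$ is entrywise nonnegative and irreducible for a suitable $c$. By Perron--Frobenius, the top eigenvalue of $M$ is simple and $\ket{\phi}$ has strictly positive entries. In particular $\langle \phi|0^n\rangle>0$. The spectral projector $\Pi_{\max}=\ket{\phi}\!\bra{\phi}$ is a polynomial in $M$, so
\[
\Pi_{\max}\ket{0}^n=\langle\phi|0^n\rangle\ket{\phi}
\]
is a nonzero element of $K$. Hence $\ket{\phi}\in K$, and so does the normalized state.

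\textbf{Step 2: reduce to transitivity.} It then suffices to show that $G|_K$ acts transitively on $S(K)$. The cleanest sufficient condition is $\mathfrak g|_K\supseteq\mathfrak{su}(K)$. I would first show that the associative algebra generated by $M$ and $H_P$ acts irreducibly on $K$. Decompose $K$ by the eigenvalues $n-2j$ of $H_P$, which are the Hamming-weight levels. Spectral projectors of $H_P$ lie in the algebra, and the vacuum spans the weight-$0$ part of $K$ (this part is one-dimensional). Any nonzero invariant subspace $K'\subseteq K$ then has a nonzero weight-$j$ component. Applying the weight-lowering part of $M$, namely $P_{j-1}MP_j$, repeatedly must return to the vacuum: for $M=A_{\mathcal R}$ this is nonzero on nonnegative vectors. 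Since $K'^{\perp}\cap K$ is also invariant, I would run this argument on whichever of the two contains the relevant component, conclude $\ket{0}^n\in K'$, and hence $K'=K$.

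\textbf{Step 3, the main obstacle: upgrading irreducibility to $\mathfrak{su}(K)$.} Irreducibility of the associative envelope gives that $\mathfrak g|_K$ is a compact Lie algebra acting irreducibly on $K$. Transitivity on the sphere then requires it to be one of the sphere-transitive algebras: $\mathfrak{su}$, $\mathfrak{sp}$, or, when a real structure is present, an $\mathfrak{so}$-type algebra. Here a real structure is present: $M$ and $H_P$ are real symmetric, so complex conjugation $\kappa$ preserves $\mathfrak g$. I would try to rule out the $\mathfrak{so}$ and $\mathfrak{sp}$ cases by producing an element of $\mathfrak g|_K$ that is not antisymmetric in the real basis. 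Already $iH_P|_K$ is purely imaginary diagonal, so the only remaining obstruction is a symmetric bilinear form invariant under $\mathfrak g$, and $iH_P$ must be skew with respect to that form. I would use the distinct levels $n-2j$ of $H_P$ to show that no nondegenerate invariant bilinear form exists: such a form would have to pair weight $j$ with weight $-j$, and all weights are $\ge0$ except in degenerate cases.

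If this classification route stalls, the fallback does not need full transitivity. The orbit $G\cdot\ket{0}^n$ is invariant under $\kappa$-conjugation up to the group action. Both $\ket{0}^n$ and $\ket{\phi}$ are real and nonnegative. I would then argue directly that $\mathfrak g|_K$ contains $i(\ket{\phi}\!\bra{0}+\ket{0}\!\bra{\phi})$ restricted suitably, built from commutators of $i\Pi_{\max}$ (obtained from the closure of $e^{-i\beta M}$, as in Step 1) with $iH_P$-level projectors. The rotation generated by this element carries $\ket{0}^n$ to $\ket{\phi}$ exactly.
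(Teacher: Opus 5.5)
\textbf{There is a genuine gap in Step 3, and the fallback does not close it.}

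Your Step 1 is correct and useful. Perron positivity gives $\langle\phi|0^n\rangle>0$, and the top spectral projector is a polynomial in $M$, so $\ket{\phi}$ lies in the vacuum cyclic subspace $K$. Irreducibility of $K$ is also true. The simplest argument: the weight-$0$ part of $K$ is $\mathbb{C}\ket{0}^n$ and $P_0$ lies in the associative algebra, so any orthogonal invariant splitting puts $\ket{0}^n$ in one summand, and cyclicity kills the other. Your lowering-operator argument does not apply as written, since vectors of $K'$ need not be nonnegative.

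The gap is Step 3. Irreducibility does not give $\mathfrak{su}(K)\subseteq\mathfrak g|_K$. The classification of sphere-transitive groups can only be used once transitivity is known, so invoking it here is circular. Excluding invariant bilinear forms removes only the $\mathfrak{so}$/$\mathfrak{sp}$ cases, not the many other irreducible proper subalgebras.

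The intermediate claim is in fact false at the generality of the statement. Take the edgeless graph on $n\ge 2$ vertices, which the statement allows. Then $\mathcal P_v=I$, so $H_{CX}=\widehat H_{CX}=\sum_v X_v$. The DLA is the collective $\mathfrak{su}(2)$, acting by spin $n/2$ on the $(n+1)$-dimensional symmetric subspace $K$. This action is irreducible, yet the vacuum orbit (spin-coherent states) has real dimension $3$ while $\mathbb{S}(K)$ has dimension $2n+1$. For even $n$ the action even preserves a nondegenerate symmetric bilinear form. This is compatible with $iH_P$ being skew, because skewness pairs Hamming weight $j$ with $n-j$ (eigenvalues $\pm(n-2j)$), not $j$ with $-j$ as your argument assumes. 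The theorem still holds in this example only because $\ket{+}^{\otimes n}=e^{-i\frac{\pi}{4}\sum_v Y_v}\ket{0}^{\otimes n}$. What you need is a mechanism tied to the specific pair $(\ket{0}^n,\ket{\phi})$, not controllability on $K$.

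Your fallback does not supply such a mechanism, for three reasons.
\begin{itemize}
\item $\mathfrak g$ is closed under brackets, not under polynomials in $M$. So $i\Pi_{\max}$ and the level projectors $iP_j$ need not lie in it, and in the edgeless example none of them does.
\item The closure of $\{e^{-i\beta M}\}$ is a torus whose Lie algebra contains $i\Pi_{\max}$ only under rational-independence conditions on the spectrum. This fails for integer spectrum, as in that example.
\item Elements of $\overline{G}$ need not lie in $G$ without a separate closedness argument.
\end{itemize}

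The paper supplies the missing ingredients analytically.
\begin{itemize}
\item It applies the convergence theorem of \cite{BKZS} with initial state $\ket{\phi}$ and phase separator $H_P$, whose unique top eigenvector on $W_{\mathcal F}$ is $\ket{0}^n$.
\item Since $n-\langle H_P\rangle\ge 2(1-|a_{0^n}|^2)$, this places $[\ket{0}^n]$ in the closure of the projective orbit $G\cdot[\ket{\phi}]$.
\item It then shows that orbit is closed. The cyclic subspace of $\ket{\phi}$ is irreducible because the top eigenvalue of $M$ is simple. By Schur's lemma the center acts by scalars, so the projective action factors through a compact semisimple group.
\item Hence $[\ket{0}^n]\in G\cdot[\ket{\phi}]$. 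Inverting, and absorbing the global phase with $e^{-itH_P}$ (using $H_P\ket{0}^n=n\ket{0}^n$), gives the statement.
\end{itemize}
Your Step 1 is consistent with this argument but cannot replace the convergence-plus-compactness input.
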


\begin{rmk}
\label{rmk:vacuum_init_practicality}
A major practical bottleneck in constrained QAOA is state initialization: constructing the uniform feasible state $\ket{\xi_{\mathcal{F}}}$ for large graphs $\Gamma$ typically incurs significant circuit overhead (see \cite{wild2021quantum, wild2021quantumpra}). In general, determining and preparing the mixer Perron state (e.g. $\ket{\zeta_\Gamma}$) can be computationally demanding. Theorem~\ref{thm:equal_standard_orbits} shows that vacuum initialization has the same reachable set when depth is unrestricted. Therefore, one can instead initialize QAOA in the computational vacuum $\ket{0}^n$ with zero state-preparation depth without restricting the accessible state space.
\end{rmk}

Theorem~\ref{thm:equal_standard_orbits} establishes that
$\ket{0}^n$ lies on the same dynamical orbit as
$\ket{\zeta_\Gamma}$ for the Flip-or-Void ansatz and
$\ket{\xi_{\mathcal F}}$ for the Flip-or-Stay ansatz.
Consequently, vacuum initialization yields the same
reachable states as initialization in the respective
mixer ground state, when circuit depth is unrestricted.
For each ansatz, the convergence theorem of~\cite{BKZS}
implies that the closure of this orbit intersects the
optimal-solution subspace. Closedness of the corresponding
(projective) orbit ensures that this intersection is attained,
yielding exact optimal-state preparation by a finite-depth
QAOA circuit, as stated below.

\begin{thm}
\label{thm:qaoa_perron_frobenius}
Let $W_{P,\min} \subseteq W_{\mathcal F}$ denote the ground-state subspace of the problem Hamiltonian $H_P\big|_{W_{\mathcal F}}$. 

\begin{enumerate}
    \item \textbf{Flip-or-Stay ansatz:} for either initial state $\ket{\phi} \in \left\{ \ket{\xi_{\mathcal F}}, \ket{0}^n \right\}$, there exist a finite depth $\widehat{p}_\phi \in \mathbb{Z}_{>0}$ and variational parameters $\widehat{\boldsymbol{\beta}}^\phi, \widehat{\boldsymbol{\gamma}}^\phi \in \mathbb{R}^{\widehat{p}_\phi}$ such that
    \begin{equation}
    \label{eq:qaoa_finite_exact_preparation_hat}
        \widehat{U}_{\widehat{p}_\phi} \!\left( \widehat{\boldsymbol{\beta}}^\phi, \widehat{\boldsymbol{\gamma}}^\phi \right) \ket{\phi} \in W_{P,\min}.
    \end{equation}

    \item \textbf{Flip-or-Void ansatz:} for either initial state $\ket{\phi} \in \left\{ \ket{\zeta_\Gamma}, \ket{0}^n \right\}$, there exist a finite depth $p_\phi \in \mathbb{Z}_{>0}$ and variational parameters $\boldsymbol{\beta}^\phi, \boldsymbol{\gamma}^\phi \in \mathbb{R}^{p_\phi}$ such that
    \begin{equation}
    \label{eq:qaoa_finite_exact_preparation_reg}
        U_{p_\phi} \!\left( \boldsymbol{\beta}^\phi, \boldsymbol{\gamma}^\phi \right) \ket{\phi} \in W_{P,\min}.
    \end{equation}
\end{enumerate}
\end{thm}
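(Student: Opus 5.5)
We follow the route sketched just before the statement: first obtain approximate optimal-state preparation from the mixer ground state via the convergence theorem of \cite{BKZS}, then upgrade approximation to exact finite-depth reachability using compactness of the dynamical group, and finally transfer the result to the vacuum via Theorem~\ref{thm:equal_standard_orbits}. We treat both ans\"atze in parallel.

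\textbf{Step 1 (hypotheses of \cite{BKZS}).} For Flip-or-Stay, Corollary~\ref{cor:UniformStateFromReconfigurationLaplacian} shows that $\widehat H_{CX}|_{W_{\mathcal F}}=nI-L_{\mathcal R}$ is entrywise nonnegative and irreducible on $W_{\mathcal F}$, since $\mathcal R(\Gamma)$ is connected. Its Perron vector is $\ket{\xi_{\mathcal F}}$. For Flip-or-Void, $H_{CX}|_{W_{\mathcal F}}=A_{\mathcal R}$ (Proposition~\ref{prop:MixersAsReconfigurationOperators}) is likewise nonnegative and irreducible, with simple Perron vector $\ket{\zeta_\Gamma}$. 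In both cases $H_P|_{W_{\mathcal F}}$ is diagonal in the computational basis. The convergence theorem of \cite{BKZS} for Perron-type mixers started in the Perron state then gives the following: for every $\varepsilon>0$ there are $p$ and angles such that the QAOA state $\ket{\psi}$ satisfies $\bra{\psi}H_P\ket{\psi}\le E_{\min}+\varepsilon$. Equivalently, $\|\Pi_{\min}^\perp\ket{\psi}\|\to0$, where $\Pi_{\min}$ is the projector onto $W_{P,\min}$, because the energy gap of the diagonal $H_P$ on $W_{\mathcal F}$ is at least $2$.

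\textbf{Step 2 (closedness, the main obstacle).} The QAOA states form a subset of the orbit $G\cdot\ket{\phi}$, where $G$ is the analytic subgroup of $\mathrm U(W_{\mathcal F})$ with Lie algebra $\mathfrak g_{\Gamma,\mathrm{std},\mathcal F}$ or its hatted version. Any element of $G$ is a finite product of $e^{-i\beta H_M}$ and $e^{-i\gamma H_P}$. This holds because the two generators span the Lie algebra under brackets, so the standard Lie-theoretic (orbit/accessibility) result shows that finite products of one-parameter subgroups of the generators exhaust $G$, with negative times allowed since $\beta,\gamma\in\mathbb R$. Hence reachable states equal $G\cdot\ket{\phi}$ exactly.

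The delicate point is whether $G$ is closed. Since $\mathfrak g\subseteq\mathfrak u(W_{\mathcal F})$ is a compact Lie algebra, it splits as $\mathfrak g=[\mathfrak g,\mathfrak g]\oplus\mathfrak z$. The semisimple part integrates to a closed subgroup. The center, however, may integrate to a dense non-closed torus winding. To handle this we would follow the approach sketched before the statement and work with the projective orbit. The central directions that commute with everything in $\mathfrak g$ can be absorbed: we argue that the closure $\overline G$ is a compact connected group with the same orbit structure modulo such phases. More concretely, the set $\overline{G}\cdot\ket{\phi}$ is compact, so the infimum of $\|\Pi_{\min}^\perp U\ket{\phi}\|$, which is $0$ by Step~1, is attained at some $U\in\overline G$. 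We then need to show that $\overline G\cdot\ket\phi$ and $G\cdot\ket\phi$ coincide, at least after intersecting with the condition $\Pi_{\min}^\perp=0$. If a direct argument fails, we fall back on the fact that $\overline G=G\cdot T$ with $T$ a torus generated by central elements. For any $\tau\in T$ the intersection condition is $T$-stable up to the diagonal structure, and $T\ket\phi$ can be compensated using the $e^{-i\gamma H_P}$ factors together with the decomposition of $\mathfrak z$. This is where we expect the real work.

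\textbf{Step 3 (vacuum).} By Theorem~\ref{thm:equal_standard_orbits}, the relevant Perron state lies in $G\cdot\ket{0}^n$, so $G\cdot\ket{0}^n=G\cdot\ket{\phi_{\mathrm{Perron}}}$. The exact element of $W_{P,\min}$ found in Step 2 is therefore also in the vacuum orbit. Using the finite-product representation of Step 2 once more, we rewrite the group element as an alternating product. Consecutive factors of the same type are merged, and zero angles are inserted where needed, which yields the finite depths $p_\phi$ and $\widehat p_\phi$.
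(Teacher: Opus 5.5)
Your outline matches the paper's: convergence from the mixer Perron state via \cite{BKZS}, an upgrade from approximate to exact via compactness, and transfer to the vacuum via Theorem~\ref{thm:equal_standard_orbits}. Steps~1 and~3 are fine.

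\textbf{The gap is in Step~2.} You obtain some $\bar g\in\overline G$ with $\bar g\ket\phi\in W_{P,\min}$, but you never show that an element of $G$ itself does the job. The phrases ``$T$-stable up to the diagonal structure'' and ``compensated using the $e^{-i\gamma H_P}$ factors'' are not an argument, and you flag this step yourself as the real work. Without it, the optimum could a priori be reached only at a limit point outside $G\cdot\ket\phi$. That would give arbitrarily good approximation, which \cite{BKZS} already provides, but not exact finite-depth preparation. Exact preparation is precisely the content of the theorem.

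\textbf{The missing idea is short.} Every element of $\mathfrak z$ commutes with $iH_P|_{W_{\mathcal F}}$ and $iH_M|_{W_{\mathcal F}}$, because both generators lie in $\mathfrak g$. Hence every $\tau\in T=\overline{\exp(\mathfrak z)}$ commutes with $H_P$, with $H_M$, and with all of $G$. Using your decomposition $\overline G=G\cdot T$, write $\bar g=g\tau=\tau g$ with $g\in G$. There are two ways to finish.
\begin{enumerate}
\item Since $\tau$ commutes with $H_P$, it preserves the eigenspace $W_{P,\min}$. Therefore $g\ket\phi=\tau^{-1}\bar g\ket\phi\in W_{P,\min}$.
\item Since $\tau$ commutes with $H_M$ and the Perron eigenvalue is simple, $\tau\ket\phi=e^{i\theta}\ket\phi$. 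A phase is harmless because $W_{P,\min}$ is a complex subspace.
\end{enumerate}

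The second variant is essentially the paper's route. The paper shows that the cyclic subspace generated by the Perron state is irreducible. By Schur's lemma the center then acts there by scalars, so the projective action factors through a compact semisimple group. The projective orbit is therefore closed, and the limiting ray supplied by \cite{BKZS} lies in it.

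Adding extra $e^{-i\gamma H_P}$ factors plays no role in removing $\tau$. What matters is that $\tau$ is central and commutes with the generators. With this observation inserted, your argument is complete. The first variant does not even need simplicity of the Perron eigenvalue for Step~2.
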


\begin{rmk}
\label{rmk:finite_depth_nonconstructive}
The finite-depth guarantee in
Theorem~\ref{thm:qaoa_perron_frobenius} is existential. The argument
does not provide an explicit upper bound on \(p_\phi\), nor does it
construct the corresponding optimal parameters.
\end{rmk}

\subsection{An Explicit Family Separating the Two MIS Dynamical Lie Algebras}
\label{subsec:explicit_mis_separation}

We next exhibit an infinite family of graphs demonstrating a sharp separation between the two collective MIS mixers paired with the same cost Hamiltonian. Specifically, when initialized in the computational vacuum state $\ket{0}^n$, the state orbit under the adjacency-mixer DLA retains a constant dimension, whereas that under the Laplacian-mixer DLA grows linearly with the size parameter of the family. Furthermore, in the latter case, we establish complete controllability on the corresponding invariant subspace.

Let $K_{a,r-a}$ denote the complete bipartite graph with bipartition classes of cardinalities $a$ and $r-a$. For an odd integer $r \ge 5$, set $m := \frac{r-1}{2}$ and consider the disjoint union graph $H_r$ and its complement $\Gamma_r$:
\begin{equation}
\label{eq:Hr_definition}
    H_r := \bigsqcup_{a=1}^{m} K_{a,r-a}, \qquad \Gamma_r := \overline{H}_r.
\end{equation}
By definition, $\Gamma_r$ shares the vertex set of $H_r$, with two vertices adjacent in $\Gamma_r$ if and only if they are non-adjacent in $H_r$. 

As the complement of a disjoint union is the graph join of the individual complements, and $\overline{K_{a,r-a}} \cong K_a \bigsqcup K_{r-a}$, $\Gamma_r$ can be expressed equivalently as
\begin{equation}
\label{eq:Gamma_r_join_description}
    \Gamma_r = \bigvee_{a=1}^{m} \left( K_a \bigsqcup K_{r-a} \right),
\end{equation}
where $\bigvee$ denotes the graph join. Furthermore, since each of the $m$ components of $H_r$ contains $a + (r-a) = r$ vertices, the total vertex count of $\Gamma_r$ is

\begin{equation}
\label{eq:nr_definition}
    n := |V(\Gamma_r)| = rm = \frac{r(r-1)}{2}.
\end{equation}

\begin{rmk}
\label{rmk:Gamma_r_feasible_cardinality}
The independent sets of $\Gamma_r = \overline{H}_r$ correspond precisely to the cliques of $H_r$. Because $H_r$ is a disjoint union of complete bipartite graphs $K_{a, r-a}$, it is triangle-free; hence, the clique number of $H_r$ is $\omega(H_r) = 2$, which implies that the independence number of $\Gamma_r$ is $\alpha(\Gamma_r) = 2$. Consequently, the feasible 
set $\mathcal{F}_{\Gamma_r}$ consists solely of the empty set, the $rm$ singletons, and the two-element sets corresponding to the edges of $H_r$.

The dimension  of the feasible subspace $W_{\mathcal{F}_r}$ is therefore given by
\begin{equation}
\label{eq:Gamma_r_feasible_cardinality}
   \dim(W_{\mathcal{F}_r})= |\mathcal{F}_{\Gamma_r}| 
    = 1 + |V(H_r)| + |E(H_r)| = 1 + rm + \sum_{a=1}^{m} a(r-a) 
    = \frac{r^3 + 6r^2 - 7r + 12}{12}.
\end{equation}
\end{rmk}

\begin{ex}
\label{ex:Gamma_5_feasible_set}
Consider the test graph $\Gamma_5 = \overline{H}_5$ corresponding to $r=5$ and $m=\frac{5-1}{2}=2$. The graph is defined by the join $\Gamma_5 = (K_1 \bigsqcup K_4) \vee (K_2 \bigsqcup K_3)$ and contains $n = 10$ vertices (see Figure~\ref{fig:graph_Gamma_5}). Evaluating \eqref{eq:Gamma_r_feasible_cardinality} for $r=5$ yields a feasible set cardinality of
\[
    |\mathcal{F}_{\Gamma_5}| = \frac{5^3 + 6\cdot5^2 - 7\cdot5 + 12}{12} = \frac{252}{12} = 21.
\]
These $21$ feasible basis states consist of the computational vacuum $\ket{0}^{\otimes10}$(corresponding to $\emptyset$), $10$ singletons $\ket{\{v\}}$, and $10$ pairs $\ket{\{v,w\}}$ corresponding to the edges $E(H_5) = E(K_{1,4}) \cup E(K_{2,3})$.
\end{ex}

\begin{figure}[h!]
\centering
\begin{tikzpicture}[
    scale=1.0,
    every node/.style={
        circle,
        draw,
        fill=black,
        inner sep=2pt
    }
]


\node (a1) at (-0.25,0) {};

\node (b11) at (0.65, 0.80) {};
\node (b12) at (1.55, 0.80) {};
\node (b13) at (1.55,-0.80) {};
\node (b14) at (0.65,-0.80) {};


\node (a21) at (4.75, 0.55) {};
\node (a22) at (4.75,-0.55) {};

\node (b21) at (5.85, 0.85) {};
\node (b22) at (6.75, 0.00) {};
\node (b23) at (5.85,-0.85) {};


\foreach \u in {a1,b11,b12,b13,b14}{
    \foreach \v in {a21,a22,b21,b22,b23}{
        \draw[line width=0.35pt,gray!35] (\u)--(\v);
    }
}


\draw[thick,gray!80] (b11)--(b12);
\draw[thick,gray!80] (b12)--(b13);
\draw[thick,gray!80] (b13)--(b14);
\draw[thick,gray!80] (b14)--(b11);
\draw[thick,gray!80] (b11)--(b13);
\draw[thick,gray!80] (b12)--(b14);

\draw[thick,gray!80] (a21)--(a22);

\draw[thick,gray!80]
    (b21)--(b22)--(b23)--(b21);


\draw[
    dashed,
    rounded corners=3pt,
    gray!65,
    line width=0.7pt
]
    (-0.60,-1.25) rectangle (1.90,1.25);

\draw[
    dashed,
    rounded corners=3pt,
    gray!65,
    line width=0.7pt
]
    (4.40,-1.25) rectangle (7.05,1.25);


\node[
    draw=none,
    fill=none,
    font=\small
] at (-0.25,0.42) {$K_1$};

\node[
    draw=none,
    fill=none,
    font=\small
] at (1.10,1.52) {$K_4$};

\node[
    draw=none,
    fill=none,
    font=\small
] at (4.75,1.52) {$K_2$};

\node[
    draw=none,
    fill=none,
    font=\small
] at (6.25,1.52) {$K_3$};

\node[
    draw=none,
    fill=none,
    font=\small
] at (0.65,-1.58)
    {$K_1\bigsqcup K_4$};

\node[
    draw=none,
    fill=none,
    font=\large
] at (3.25,-1.58) {$\vee$};

\node[
    draw=none,
    fill=none,
    font=\small
] at (5.72,-1.58)
    {$K_2\bigsqcup K_3$};

\end{tikzpicture}

\caption{
The graph
\(\Gamma_5=\overline{K_{1,4}\bigsqcup K_{2,3}}
=(K_1\bigsqcup K_4)\vee(K_2\bigsqcup K_3)\).
The dashed boxes indicate the two join factors. The subgraphs
\(K_3\) and \(K_4\) are drawn as a triangle and a complete graph on
four vertices, respectively. Every vertex in one dashed box is
adjacent to every vertex in the other.
}
\label{fig:graph_Gamma_5}
\end{figure}

\begin{thm}
\label{thm:explicit_family_orbit_separation}
Let $r \ge 5$ be an odd integer and let
\begin{equation}
\label{eq:family_dynamical_groups}
\begin{aligned}
    G_{\Gamma_r}
    &:=
    \left\langle
        \exp(X) : X \in \mathfrak{g}_{\Gamma_r,\mathrm{std}}
    \right\rangle, \\
    \widehat{G}_{\Gamma_r}
    &:=
    \left\langle
        \exp(X) : X \in \widehat{\mathfrak{g}}_{\Gamma_r,\mathrm{std}}
    \right\rangle
\end{aligned}
\end{equation}
be the connected analytic dynamical Lie groups corresponding to the standard Flip-or-Void and Flip-or-Stay DLAs, respectively.
 The restriction of $\widehat{\mathfrak{g}}_{\Gamma_r,\mathrm{std}}$ to $\mathcal{W}_{\widehat{\mathfrak{g}}_{\Gamma_r,\mathrm{std},\ket{0}^n}}$, the minimal $\widehat{\mathfrak{g}}_{\Gamma_r,\mathrm{std}}$-invariant subspace containing $\ket{0}^n$, is the full unitary Lie algebra:
\begin{equation}
\label{eq:family_extended_DLA_exact}
    \widehat{\mathfrak{g}}_{\Gamma_r,\mathrm{std}}
\big|_{\mathcal{W}_{\widehat{\mathfrak{g}}_{\Gamma_r,\mathrm{std},\ket{0}^n}}}
    =
    \mathfrak{u}\left(\mathcal{W}_{\widehat{\mathfrak{g}}_{\Gamma_r,\mathrm{std},\ket{0}^n}}\right)
    \cong
    \mathfrak{u}\left(\frac{3r-1}{2}\right).
\end{equation}

The vacuum-state orbits satisfy
\begin{equation}
\label{eq:family_vector_orbits}
\begin{aligned}
    G_{\Gamma_r} \cdot \ket{0}^n
    &\subseteq
\mathbb{S}\left(\mathcal{W}_{\mathfrak{g}_{\Gamma_r,\mathrm{std},\ket{0}^n}}\right), \\
    \widehat{G}_{\Gamma_r} \cdot \ket{0}^n
    &=
\mathbb{S}\left(\mathcal{W}_{\widehat{\mathfrak{g}}_{\Gamma_r,\mathrm{std},\ket{0}^n}}\right),
\end{aligned}
\end{equation}
where $\mathbb{S}(V) := \left\{ \ket{\psi} \in V \;\middle|\; \|\psi\| = 1 \right\}$ denotes the unit sphere of a subspace $V$. In particular, the real orbit dimensions satisfy
\begin{equation}
\label{eq:family_orbit_dimension_gap}
\begin{aligned}
    \dim_{\mathbb{R}} \bigl( G_{\Gamma_r} \cdot \ket{0}^n \bigr)
    &\le 7, \\
    \dim_{\mathbb{R}} \bigl( \widehat{G}_{\Gamma_r} \cdot \ket{0}^n \bigr)
    &= 3r - 2,
\end{aligned}
\end{equation}
yielding an orbit dimension difference of
\begin{equation}
\label{eq:family_orbit_dimension_difference}
    \dim_{\mathbb{R}} \bigl( \widehat{G}_{\Gamma_r} \cdot \ket{0}^n \bigr) - \dim_{\mathbb{R}} \bigl( G_{\Gamma_r} \cdot \ket{0}^n \bigr)
    \ge
    3r - 9.
\end{equation}
\end{thm}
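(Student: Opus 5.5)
The proof splits into two computations: a constant bound for the Flip-or-Void cyclic subspace, and full controllability for the Flip-or-Stay one. Both rest on the explicit structure of $\mathcal F_{\Gamma_r}$ from Remark~\ref{rmk:Gamma_r_feasible_cardinality}. Throughout I work on $W_{\mathcal F}$ and use Proposition~\ref{prop:MixersAsReconfigurationOperators}, which gives $H_{CX}|_{W_{\mathcal F}}=A_{\mathcal R}$ and $\widehat H_{CX}|_{W_{\mathcal F}}=nI-L_{\mathcal R}$. By Proposition~\ref{prop:reconfig_simplex_isomorphism}, $\mathcal R(\Gamma_r)\cong\kappa(H_r)$, and $H_P$ acts on weight $j$ by $n-2j$.

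For component $a$ of $H_r$, write $\sigma_a$ and $\tau_a$ for the unnormalized sums of singletons on the sides of size $a$ and $r-a$, and $\varepsilon_a$ for the sum of the $a(r-a)$ edge states. In $\mathcal R(\Gamma_r)$, a singleton on the side of size $a$ has degree $1+(r-a)$, a singleton on the side of size $r-a$ has degree $1+a$, and every pair has degree $2$.

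\textbf{Flip-or-Void bound.} Set $s=\sum_a(\sigma_a+\tau_a)$, $e=\sum_a\varepsilon_a$, and $t=\sum_v\deg_{H_r}(v)\ket{v}=\sum_a\bigl((r-a)\sigma_a+a\tau_a\bigr)$. Direct computation gives
\[
A\ket{0}^n=s,\qquad As=n\ket{0}^n+2e,\qquad Ae=t,\qquad At=2|E(H_r)|\,\ket{0}^n+r\,e,
\]
where the last coefficient is $r$ because every edge of component $a$ has endpoint degrees summing to $(r-a)+a=r$.

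So $\operatorname{span}\{\ket{0}^n,s,e,t\}$ is $A$-invariant. It is also $H_P$-invariant, since each of these vectors is homogeneous in Hamming weight. Hence it is invariant under $\mathfrak g_{\Gamma_r,\mathrm{std}}$, and it contains $\mathcal W_{\mathfrak g_{\Gamma_r,\mathrm{std}},\ket{0}^n}$. The orbit therefore lies in a unit sphere of complex dimension at most $4$, which gives real dimension at most $7$.

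\textbf{Flip-or-Stay cyclic subspace.} Let $U:=\operatorname{span}\{\ket{0}^n,\sigma_a,\tau_a,\varepsilon_a: 1\le a\le m\}$, which has dimension $1+3m=\tfrac{3r-1}{2}$. Because $A$, $D_{\mathcal R}$ and $H_P$ are invariant under permutations within each bipartition side (Remark~\ref{rmk:symmetry_properties}), $U$ is invariant under $\widehat{\mathfrak g}_{\Gamma_r,\mathrm{std}}$. It remains to show that the restricted algebra is all of $\mathfrak u(U)$. Equality with the minimal invariant subspace then follows, because $\mathfrak u(U)$ acts irreducibly on $U$, and transitivity of $\mathrm U(U)$ on $\mathbb S(U)$ gives the orbit dimension $2\cdot\tfrac{3r-1}{2}-1=3r-2$.

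By \eqref{eq:global_containment} and Remark~\ref{rmk:stay_dla_projector_containment}, the algebra contains $iA$, $iH_P$ and the diagonal operator $iD_{\mathcal R}$. On $U$, the $D_{\mathcal R}$-values of the singleton classes are $1+(r-a)$ for $a=1,\dots,m$ and $1+a$ for $a=1,\dots,m$. These are exactly the distinct numbers $2,\dots,r$. The value on $\ket{0}^n$ is $n$, and the value on every edge class is $2$.

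In the orthonormalized basis of $U$, the matrix of $A$ is the adjacency matrix of a weighted tree with nonzero weights. The tree has root $\ket{0}^n$ joined to all $2m$ singleton classes, and each singleton class is joined to the edge class of its own component.

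\textbf{Main obstacle.} The crux is to isolate each tree edge from $iA$. I will take the diagonal element $\Delta:=D_{\mathcal R}+cH_P$ for a generic real $c$. Repeated brackets $\operatorname{ad}_{i\Delta}^k(iA)$ produce the components of $A$ between $\Delta$-eigenspaces, weighted by powers of the eigenvalue differences $\delta$. Provided the $|\delta|$ values across tree edges are pairwise distinct, a Vandermonde inversion extracts each individual edge operator $i(E_{jk}+E_{kj})$ and its partner $E_{jk}-E_{kj}$. Choosing a generic $c$ should guarantee this, using that the singleton classes carry the distinct values $2,\dots,r$.

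The delicate point, which I would verify explicitly, concerns the edges joining a singleton class to its edge class. Two such edges from the same component $a$ have differences $(r-a+1-2)+2c$ and $(a+1-2)+2c$. These must not coincide with each other, nor agree in absolute value with any edge from the root, for generic $c$. The former holds because $r-a\neq a$ when $r$ is odd.

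Once the individual edge operators are in hand, a connected tree of such operators generates $\mathfrak{su}(U)$. Adding $iH_P|_U$, whose trace $\mathrm{tr}(H_P|_U)$ is generically nonzero, supplies the center, and hence the restricted algebra is $\mathfrak u(U)\cong\mathfrak u\bigl(\tfrac{3r-1}{2}\bigr)$.

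The difference bound is then $3r-2-7=3r-9$.
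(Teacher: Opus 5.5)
Your proposal is correct and follows essentially the paper's proof. It uses the same four-dimensional aggregate subspace with the same adjacency relations for Flip-or-Void, and the same $\tfrac{3r-1}{2}$-dimensional subspace for Flip-or-Stay (your $\sigma_a,\tau_a,\varepsilon_a$ are the paper's $s_{r-a},s_a,w_a$). It also uses the same mechanism: isolate each coupling by polynomial interpolation in $-\operatorname{ad}^2$ of the diagonal element $i\widehat H_{CX}-iA=i(nI-D_{\mathcal R})$, then use connectivity to get $\mathfrak{su}$ and $\operatorname{Tr}(H_P|_U)>0$ to get the center.

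The deviations are cosmetic. Your shift by $cH_P$ removes the single zero-frequency coupling, the paper's pair $(1,\overline{1})$, which the paper handles with an extra commutator. The check you defer is exactly the paper's inequality $(n-r)-(r-2)=\tfrac{(r-1)(r-4)}{2}>0$, which is where $r\ge 5$ is used. Finally, the coupling graph is not a tree, since it contains the $m$ four-cycles through $\ket{0}^n,\sigma_a,\varepsilon_a,\tau_a$; connectivity is all your argument needs, so this does not affect the proof.
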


\begin{rmk}
\label{rmk:laplacian_subspace_scaling_bottleneck}
Every state of unit norm within the effective subspace $\mathcal{W}_{\widehat{\mathfrak{g}}_{\Gamma_r,\mathrm{std}},\ket{0}^{\otimes n}}$ is reachable by standard QAOA employing the Flip-or-Stay mixer and initialized in the vacuum state $\ket{0}^{\otimes n}$. The real dimension of this subspace is given by
\begin{equation}
\label{eq:effective_subspace_real_dim}
\dim_{\mathbb{R}}\!\left(\mathcal{W}_{\widehat{\mathfrak{g}}_{\Gamma_r,\mathrm{std}},\ket{0}^{\otimes n}}\right) = 3r - 1 = \frac{1 + 3\sqrt{1+8n}}{2} = \Theta(\sqrt{n}).
\end{equation}
(cf.~\eqref{eq:nr_definition} and \eqref{eq:family_extended_DLA_exact}). By contrast, the entire feasible subspace $W_{\mathcal{F}_{\Gamma_r}}$ has real dimension
\begin{equation}
\label{eq:feasible_subspace_real_dim}
    \dim_{\mathbb{R}}\!\left(W_{\mathcal{F}_{\Gamma_r}}\right)
    =2|\mathcal{F}_{\Gamma_r}|=\frac{r^3 + 6r^2 - 7r + 12}{6}
    =\Theta(n^{3/2})
\end{equation}
(cf.~\eqref{eq:Gamma_r_feasible_cardinality}). Thus, although the Flip-or-Stay mixer gives an unbounded increase in reachable orbit dimension, the ratio of its  effective subspace dimension to the full feasible space dimension is only $\Theta(n^{-1})$. Full controllability on the cyclic subspace $\mathcal{W}_{\widehat{\mathfrak{g}}_{\Gamma_r,\mathrm{std}},\ket{0}^n}$ therefore coexists with a substantial restriction on exploration of the feasible space.

For this family, either independent local controls
(Theorem~\ref{thm:ExtendedFreeLieAlgFullUnitary})
or positive, pairwise distinct vertex weights
(Proposition~\ref{prop:WeightedMISStandardDLA})
enlarge the dynamical Lie algebra to
$\mathfrak u(W_{\mathcal F_{\Gamma_r}})$.
The latter achieves this while retaining only two
variational parameters per layer, although it changes
the objective to weighted MIS.
\end{rmk}

The separation concerns reachable states and controllability; it does not establish an approximation-ratio advantage for MIS.
Indeed, $\alpha(\Gamma_r)=2$, so measuring any normalized
feasible state supported on the two-particle sector
returns a maximum independent set with probability one.
By Theorem~\ref{thm:qaoa_perron_frobenius},
both QAOAs can attain such a state from the vacuum
at some finite depth.


\section{Maximum Independent Set with the Grover Mixer QAOA}

In this section, we analyze the Maximum Independent Set problem under a distinct variant of the Quantum Approximate Optimization Algorithm: the Grover-Mixer QAOA (GM-QAOA), introduced in \cite{BE}. In contrast to the local, feasibility-preserving mixers $H_{CX}$ and $\widehat{H}_{CX}$ studied in preceding sections, GM-QAOA employs a highly nonlocal rank-one projection mixer that acts globally over the feasible subspace $W_{\mathcal{F}}$.

 The Grover mixer $G_M$ is defined as the negative orthogonal projector onto the uniform feasible state $\ket{\xi_{\mathcal{F}}}$ (see Eq. \eqref{eq:uniform_feasible_state}):
\begin{equation}
\label{eq:GroverMixer}
G_M:=-\ket{\xi_{\mathcal F}}\bra{\xi_{\mathcal F}}.
\end{equation}

Recall that the objective function \(F\) restricted to the feasible set \(\mathcal F\subseteq\mathbb B^n\) assumes \(\alpha(\Gamma)+1\) distinct values (see~\eqref{eq:objective_function}).  For each nonempty level set $\mathcal{F}_j$ of $\mathcal F$ (see Eq. \eqref{eq:MIS_feasible_set_j}), we define the normalized uniform level-set state:
\begin{equation}
\label{eq:LevelSetStates}
\ket{\xi_j}:=\frac{1}{\sqrt{|\mathcal F_j|}}
\sum_{x\in\mathcal F_j}
\ket{x}
\end{equation}

As established in \cite{TNB}, the GM-QAOA state evolution is strictly confined to the $\alpha(\Gamma)+1$-dimensional  subspace:

\begin{equation}
\label{eq:GMQAOASubspace}
W_{\mathrm{GM}}:=\operatorname{span}
\bigl\{
\ket{\xi_0},
\dots,
\ket{\xi_{\alpha(\Gamma)}}
\bigr\}
\subseteq
W_{\mathcal F}.
\end{equation}

By restricting both the mixer and the problem Hamiltonian to the subspace $W_{\rm GM}$, we obtain the associated dynamical Lie algebra
\begin{equation}
\label{eq:GroverMixerDLA}
    \mathfrak{g}_{\xi_{\mathcal{F}}}\big|_{W_{\rm GM}} := \left\langle i G_M\big|_{W_{\rm GM}}, \, i H_P\big|_{W_{\rm GM}} \right\rangle_{\mathrm{Lie}}.
\end{equation}

This algebra attains the maximal possible dimension on this invariant subspace. Specifically, \cite[Theorem~IV.1]{TNB} shows that
\begin{equation}
\label{eq:GroverMixerLieAlgebra}
    \mathfrak{g}_{\xi_{\mathcal{F}}}\big|_{W_{\rm GM}} \cong \mathfrak{u}(W_{\rm GM}) \cong \mathfrak{su}\bigl(\alpha(\Gamma) + 1\bigr) \oplus \mathfrak{u}(1).
\end{equation}

Consequently, GM-QAOA is fully controllable on the reduced subspace \(W_{\mathrm{GM}}\), whose dimension is equal to the number of distinct objective-function values attained on the feasible set.

For MIS, Theorem~V.1 of~\cite{TNB} gives an explicit
deep-circuit loss variance. The restriction $H_P|_{W_{\mathrm{GM}}}$ has the $d=\alpha(\Gamma)+1$ distinct eigenvalues
\[
\Lambda=\{n-2j\mid 0\leq j\leq\alpha(\Gamma)\}.
\]

If $\zeta_\Lambda$ is uniformly distributed on $\Lambda$, then
\[
\operatorname{Var}(\zeta_\Lambda)
=
\frac{1}{d^2}
\sum_{0\leq i<j\leq\alpha(\Gamma)}
\bigl((n-2i)-(n-2j)\bigr)^2
=
\frac{\alpha(\Gamma)(\alpha(\Gamma)+2)}{3}.
\]
Consequently, for a circuit ensemble whose restriction to \(W_{\rm GM}\) forms a unitary $2$-design,
\begin{equation}
    \label{eq:GM_QAOA_Var}
    \operatorname{Var}_{\boldsymbol\beta,\boldsymbol\gamma}
\!\left[\ell_{\boldsymbol\beta,\boldsymbol\gamma}(\rho,H_P)\right]
=
\frac{\operatorname{Var}(\zeta_\Lambda)}{d+1}
=
\frac{\alpha(\Gamma)}{3},
\qquad
\rho=|\xi_{\mathcal F}\rangle\langle\xi_{\mathcal F}|.
\end{equation}
The equality is exact whenever the circuit ensemble forms
a unitary $2$-design on $U(W_{\mathrm{GM}})$.

\begin{rmk}
    For the normalized observable $H_P/n$, the corresponding
variance is $\alpha(\Gamma)/(3n^2)$.
\end{rmk}

Table~\ref{tab:MIS_GM_QAOA_comparison} provides a comprehensive structural comparison between the free constrained MIS-QAOA ans\"atze explored in this work and GM-QAOA.

\begin{table*}[htbp]
\centering
    
\renewcommand{\arraystretch}{1.3}
\setlength{\tabcolsep}{7pt}
\begin{tabular}{p{0.22\textwidth} p{0.36\textwidth} p{0.36\textwidth}}
\hline\hline
\textbf{Feature} & \textbf{Free Constrained MIS-QAOA} & \textbf{Grover-Mixer QAOA (GM-QAOA)} \\
\hline
\textbf{Mixer Operator} & 
Local Flip-or-Void $H_{CX,v}$ or Flip-or-Stay $\widehat{H}_{CX,v}$ terms & 
Nonlocal rank-one projector $G_M = -\ket{\xi_{\mathcal{F}}}\!\bra{\xi_{\mathcal{F}}}$ \\

\textbf{Control Generators} & 
$\{i H_{CX,v}\big|_{W_{\mathcal F}}, i Z_v\big|_{W_{\mathcal F}} \mid v \in V\}$ or $\{i \widehat{H}_{CX,v}, i Z_v \mid v \in V\}$ & 
Global pair $\{i G_M\big|_{W_{\mathrm GM}}, i H_P\big|_{W_{\mathrm GM}}\}$ \\

\textbf{Mixer Locality} & 
$(\deg(v)+1)$-local (acts on $v$ and its open neighborhood $N(v)$) & 
Fully nonlocal (acts globally on $W_{\mathcal{F}}$) \\

\textbf{Dynamical Invariant Subspace} & 
Full feasible subspace $W_{\mathcal{F}} = \operatorname{span}\{\ket{x} : x \in \mathcal{F}\}$ & 
Objective-level subspace $W_{\mathrm{GM}} = \operatorname{span}\{\ket{\xi_0}, \dots, \ket{\xi_{\alpha(\Gamma)}}\} \subseteq W_{\mathcal{F}}$ \\

\textbf{Subspace Dimension} & 
$\dim(W_{\mathcal{F}}) = |\mathcal{F}|$ (typically exponential in $n$) & 
$\dim(W_{\mathrm{GM}}) = \alpha(\Gamma)+1 \le n + 1$ (at most linear in $n$) \\

\textbf{DLA} & 
$\mathfrak{g}_{\Gamma,\mathrm{free},\mathcal{F}} = \widehat{\mathfrak{g}}_{\Gamma,\mathrm{free},\mathcal{F}} = \mathfrak{u}(W_{\mathcal{F}})$ (Theorem~\ref{thm:ExtendedFreeLieAlgFullUnitary}) & 
$\mathfrak{g}_{\xi_{\mathcal{F}}}\big|_{W_{\mathrm{GM}}} = \mathfrak{u}(W_{\mathrm{GM}}) \cong \mathfrak{su}(m) \oplus \mathfrak{u}(1)$ \cite{TNB} \\

\textbf{DLA Dimension} & 
$\dim(\mathfrak{g}) = |\mathcal{F}|^2$ & 
$\dim(\mathfrak{g}) = (\alpha(\Gamma)+1)^2 \le (n+1)^2$ \\

\textbf{Controllability Scope} & 
Full state controllability throughout $W_{\mathcal{F}}$ & 
Exact state controllability restricted to $W_{\mathrm{GM}}$ \\

\textbf{Degeneracy Resolution} & 
Distinguishes and controls individual independent sets & 
Amplitudes remain equal within each objective-value level for uniform feasible initialization states $\ket{\xi_j}$ \\

\textbf{Loss variance under the 2-design assumption.} & 
Loss function variance is governed by $|\mathcal{F}|$ (typically exponentially small, but graph-dependent): $\Omega(1/|\mathcal{F}|^2)$ to $\mathcal{O}(\alpha(\Gamma)^2/|\mathcal{F}|)$ & 
Loss function variance is $\frac{\alpha(\Gamma)}{3}$\\

\textbf{Structural Trade-off} & 
Local, feasibility-preserving controls yield maximal expressivity & 
Global nonlocal mixer compresses dynamics into a polynomial-dimensional subspace \\
\hline\hline
\end{tabular}
\caption{Structural and algebraic comparison between the free constrained MIS-QAOA ans\"atze ($H_{CX}$ and $\widehat{H}_{CX}$) and the Grover-Mixer QAOA ansatz. Here, $m = \alpha(\Gamma) + 1 \le n + 1$ denotes the number of distinct objective energy levels attained on the feasible set $\mathcal{F}$. Statements for free-control ans\"atze assume a connected graph $\Gamma$ with $n \ge 2$, while the GM-QAOA column describes dynamics initialized in the uniform feasible state $\ket{\xi_{\mathcal{F}}}$.}
\label{tab:MIS_GM_QAOA_comparison}
\end{table*}

\section{Numerical Results}
\subsection{Numerical Dimensionality Analysis for standard QAOA}
\label{subsec:numeric_DLA_dim}
In contrast to free dynamical Lie algebras, which universally generate the full unitary Lie algebra $\mathfrak{u}(W_{\mathcal{F}})$ for any connected graph (Theorem~\ref{thm:ExtendedFreeLieAlgFullUnitary}), standard DLAs display non-universal, graph-dependent behavior that resists simple analytical characterization. To systematically explore this dependency, we performed numerical calculations of DLA dimensions across all non-isomorphic, connected asymmetric graphs on $6$ and $7$ vertices. The DLA calculation is done using the PennyLane library \cite{bergholm2018pennylane} with numerical tolerance for the linear independence of $1\mathrm{e}{-10}$.

Table~\ref{tab:SixNodesDimensions}  details the feasible subspace cardinalities $|\mathcal{F}|$, whether the restricted Lie algebras generate the full unitary Lie algebra $\mathfrak{u}(W_{\mathcal{F}})$, and the corresponding Lie algebra dimensions.

\begin{table}[H]
\centering
\begin{tabular}{|c|c|c|c|c|c|}

\hline
\textbf{Graph} $i$ & $|\mathcal{F}|$ & $\mathfrak{g}_{\Gamma,\mathrm{std},\mathcal{F}} = \mathfrak{u}(W_{\mathcal{F}})$ & $\widehat{\mathfrak{g}}_{\Gamma,\mathrm{std},\mathcal{F}} = \mathfrak{u}(W_{\mathcal{F}})$ & $\dim\!\left(\mathfrak{g}_{\Gamma,\mathrm{std},\mathcal{F}}\right)$ & $\dim\!\left(\widehat{\mathfrak{g}}_{\Gamma,\mathrm{std},\mathcal{F}}\right)$ \\
\hline
\hline
$1$ & $18$ & $\checkmark$ & $\checkmark$ & $324$ & $324$ \\
\hline
$2$ & $18$ & $\checkmark$ & $\checkmark$ & $324$ & $324$ \\
\hline
$3$ & $19$ & $\checkmark$ & $\checkmark$  & $361$ & $361$ \\
\hline
$4$ & $16$ & $\times$ & $\times$ & $225$ & $226$ \\
\hline
$5$ & $17$ & $\times$ & $\times$ & $256$ & $257$ \\
\hline
$6$ & $15$ & $\times$ & $\checkmark$ & $172$ & $225$ \\
\hline
$7$ & $15$ & $\checkmark$ & $\checkmark$ & $225$ & $225$ \\
\hline
$8$ & $14$ & $\checkmark$ & $\checkmark$ & $196$ & $196$ \\
\hline

\end{tabular}
\caption{Feasible configuration subset cardinalities $|\mathcal{F}|$, attainment of the full unitary Lie algebra $\mathfrak{u}(W_{\mathcal{F}})$, and dimensions of the restricted dynamical Lie algebras for standard QAOA implementations across all $8$ non-isomorphic, asymmetric graphs on $6$ vertices. 
}
\label{tab:SixNodesDimensions}
\end{table}

To quantify how closely these standard dynamical Lie algebras approach their maximal theoretical expressivity, namely, the full unitary Lie algebra $\mathfrak{u}(W_{\mathcal{F}})$, across a larger graph ensemble, we introduce a normalized dimension ratio, $\delta$. Since the restrictions of standard DLAs to $W_\mathcal{F}$ are fundamentally contained within their corresponding free counterparts (which span $\mathfrak{u}(W_{\mathcal{F}})$ with maximal dimension $|\mathcal{F}|^2$), the ratio is defined as:
\begin{equation}
\label{eq:dim_ratio}
\delta := \frac{\dim\left(\mathfrak{g}\big|_{W_\mathcal{F}}\right)}{|\mathcal{F}|^2}.
\end{equation}

A value of $\delta = 1$ indicates that the standard implementation generates the full unitary Lie algebra $\mathfrak{u}(W_{\mathcal{F}})$ on the feasible subspace, matching the expressive power of the free algebra.

Table~\ref{tab:SevenNodesDimensions} details the distribution of this normalized ratio across all 144 connected, non-isomorphic, asymmetric graphs on $7$ vertices, illustrating the performance gap between the two mixer choices.

\begin{table}[H]
\centering
\begin{tabular}{|c|*{9}{c|}}
\hline
$\delta/\mathfrak{g}$
& \(\mathbf{<0.3}\)
& \(\mathbf{[0.3,0.4)}\)
& \(\mathbf{[0.4,0.5)}\)
& \(\mathbf{[0.5,0.6)}\)
& \(\mathbf{[0.6,0.7)}\)
& \(\mathbf{[0.7,0.8)}\)
& \(\mathbf{[0.8,0.9)}\)
& \(\mathbf{[0.9,1)}\)
& \(\mathbf{1}\)
\\ \hline

$\widehat{\mathfrak{g}}_{\mathrm{std}}$
& 0 & 0 & 0 & 0 & 0 & 1 & 12 & 31 & 100
\\ \hline

$\mathfrak{g}_{\mathrm{std}}$
& 1 & 0 & 1 & 0 & 1 & 7 & 33 & 55 & 46
\\ \hline
\end{tabular}

\caption{Distribution of the normalized dimension ratio $\delta$ for standard QAOA implementations across all 144 connected, non-isomorphic, asymmetric graphs on 7 vertices. The counts indicate the number of graph configurations falling within each interval for the Flip-or-Void mixer $\mathfrak{g}_{\Gamma,\mathrm{std}}$ and the Flip-or-Stay variant $\widehat{\mathfrak{g}}_{\Gamma,\mathrm{std}}$.}
\label{tab:SevenNodesDimensions}
\end{table}

Table~\ref{tab:SevenNodesDimensions_diff} details the distribution of the normalized relative dimension gap $\Delta$ between the restricted standard DLAs $\widehat{\mathfrak{g}}_{\mathrm{std}}\big|_{W_\mathcal{F}}$ (Flip-or-Stay) and $\mathfrak{g}_{\mathrm{std}}\big|_{W_\mathcal{F}}$ (Flip-or-Void) across all $144$ connected, non-isomorphic, asymmetric graphs on $7$ vertices:

\begin{equation}
\label{eq:dla_relative_gap}
    \Delta = \frac{\dim\left(\widehat{\mathfrak{g}}_{\mathrm{std}}\big|_{W_\mathcal{F}}\right) - \dim\left(\mathfrak{g}_{\mathrm{std}}\big|_{W_\mathcal{F}}\right)}{\dim\left(\widehat{\mathfrak{g}}_{\mathrm{std}}\big|_{W_\mathcal{F}}\right)}.
\end{equation}
A value \(\Delta=0\) implies equality of the restricted DLAs and hence of their connected dynamical groups. Their fixed-depth circuit families and parameter landscapes may nevertheless differ. Conversely, $\Delta > 0$ quantifies the gain in DLA dimension and, potentially, expressivity achieved by the Flip-or-Stay mixer.

\begin{table}[H]
\centering
\begin{tabular}{|c|llllllllc|}

\hline
$\Delta$& \multicolumn{1}{c|}{$\textbf{0.0\%}$} & \multicolumn{1}{c|}{$\textbf{(0.0\%, 5.0\%{]}}$} &  \multicolumn{1}{c|}{$\textbf{(5.0\%, 10.0\%{]}}$} &  \multicolumn{1}{c|}{$\textbf{(10.0\%, 20.0\%{]}}$} & \multicolumn{1}{c|}{$\textbf{(20.0\%, 30.0\%{]}}$} & \multicolumn{1}{c|}{$\textbf{(30.0\%, 65.0\%{]}}$} 
\\ \hline

Number of graphs
& \multicolumn{1}{c|}{52}    & \multicolumn{1}{c|}{35}                    & \multicolumn{1}{c|}{37}                    & \multicolumn{1}{c|}{11}                     & \multicolumn{1}{c|}{6}         & \multicolumn{1}{c|}{3}                                    \\ \hline
\end{tabular}
\caption{Distribution of the relative DLA-dimension gap 
\(\Delta\) defined in \eqref{eq:dla_relative_gap} over the $144$ connected asymmetric seven-vertex graphs. }
\label{tab:SevenNodesDimensions_diff}
\end{table} 

\subsection{Loss Function Gradient Variance and Barren Plateau Scaling}

As determining the  dimensions and structure of dynamical Lie algebras for MIS-QAOA on arbitrary asymmetric graphs becomes computationally intractable as system size grows, we numerically evaluate loss-function gradient variances to characterize the parameter landscapes of the two constrained QAOA architectures. Gradient statistics alone do not uniquely determine DLA dimensions, even in the deep-circuit limit, because they additionally depend on the initial state, cost observable, and parameter-sampling distribution. Nevertheless, these numerical results are of independent theoretical and practical interest, providing direct insight into gradient concentration and the emergence of barren plateaus (see Definition~\ref{defn:BP}).

We evaluate the gradient variance for constrained MIS-QAOA circuits employing both the Flip-or-Void mixer ($H_{CX}$) and the proposed Flip-or-Stay mixer ($\widehat{H}_{CX}$) across problem sizes $n \in \{6, 8, 10, 12, 15\}$. The overall gradient variance is computed by averaging the individual component variances across all $2p$ variational parameters. For each system size $n$, numerical experiments were conducted over $20$ non-isomorphic Erd\H{o}s--R\'enyi random graphs generated with edge probability $p_{\text{edge}} = 1.5\log(n)/n$. For each graph instance, $100$ independent variational parameter vectors $(\boldsymbol{\beta}, \boldsymbol{\gamma})$ were sampled uniformly at random from $[0, 2\pi]^{2p}$, where $p$ denotes the circuit depth.

Our numerical results indicate that the gradient variances  stabilize at depth $p \approx 8$ for $n = 6$, $p \approx 20$ for $n \in \{8, 10, 12\}$, and $p \approx 30$ for $n = 15$. All gradients were computed for a range of circuit depths using PennyLane with the \texttt{autograd} interface and a \texttt{backprop} differentiation method, which resulted in the fastest computation of the gradient \cite{bergholm2018pennylane}. The results are presented in Figure~\ref{fig:variance}.

\begin{figure}[htbp]
    \centering
    \includegraphics[width=0.55\linewidth]{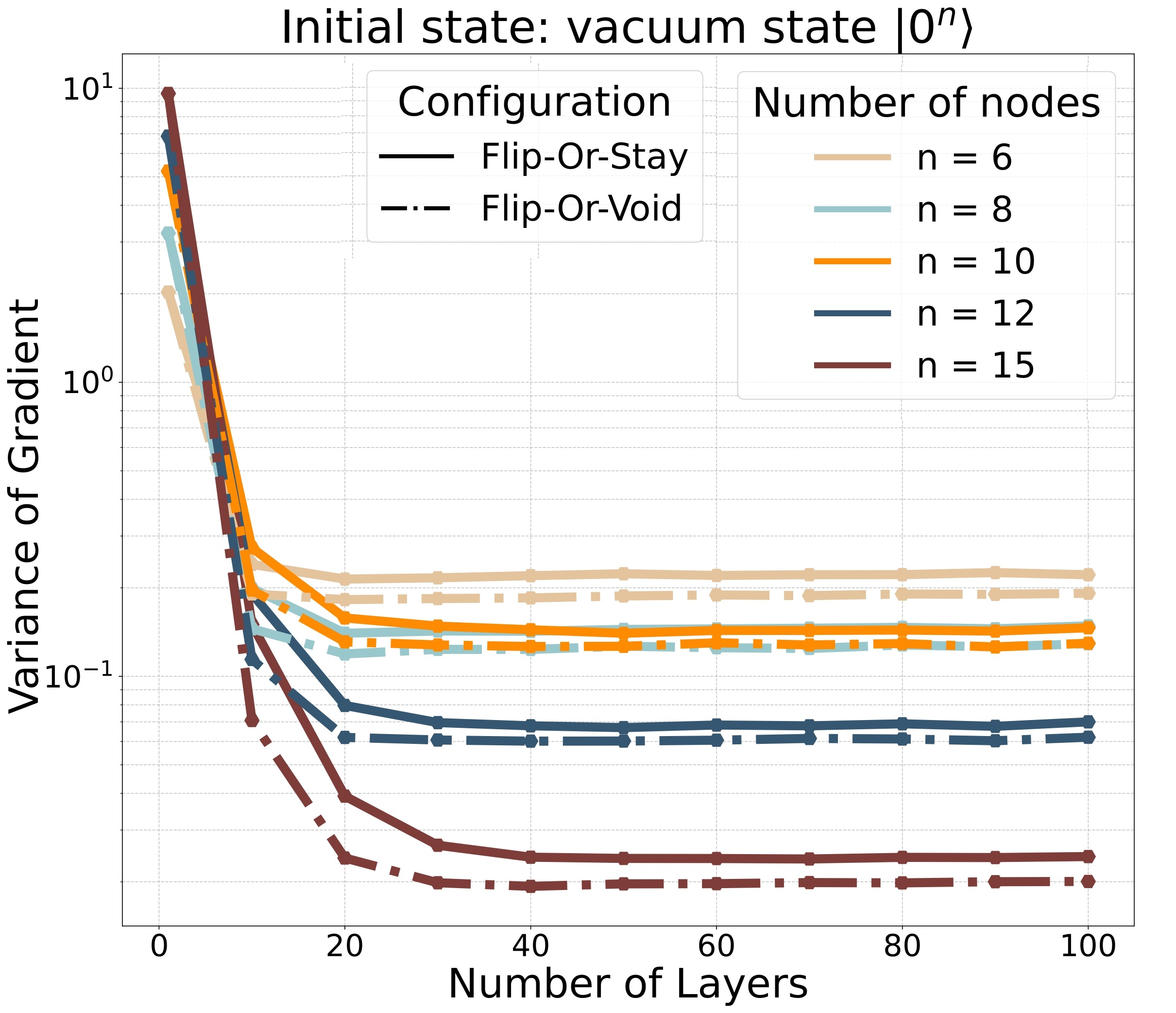}
    \caption{Variance of the loss function gradient for the Maximum Independent Set problem across graph sizes $n \in \{6, 8,10,12, 15\}$ and varying circuit depths $1\le p\le 100$. Panels compare the standard Flip-or-Void mixer $H_{CX}$ QAOA against the proposed Flip-or-Stay mixer $\widehat{H}_{CX}$ variant. Data points represent the mean gradient variance evaluated over 20 non-isomorphic, random Erd\"os-R\"enyi graphs with edge probability of $1.5\log(n)/n$ per system size $n$, with 100 random parameter initializations $\boldsymbol{\theta} \sim U([0, 2\pi]^{2p})$ per instance. 
    }
    \label{fig:variance}
\end{figure}

\section{Conclusion}
\label{sec:conclusion}

We have characterized the dynamical Lie algebras and reachable state spaces of constrained MIS-QAOA with Flip-or-Void and Flip-or-Stay mixers. On the feasible subspace, these mixers act as the adjacency
operator and a shifted negative Laplacian of the independent-set reconfiguration graph, respectively. For every connected graph with at least two vertices, both free dynamical Lie algebras equal
$\mathfrak u(W_{\mathcal F})$. The same full controllability holds for standard QAOA with positive, pairwise distinct vertex weights, despite retaining only one mixer parameter and one cost parameter per layer.

For the unweighted problem, the two standard architectures
can have substantially different reachable state spaces.
Our explicit family $\Gamma_r$, with $n=r(r-1)/2$ vertices,
exhibits a Flip-or-Void vacuum orbit of real dimension
at most $7$, whereas the Flip-or-Stay vacuum orbit has
dimension $3r-2=\Theta(\sqrt n)$. The latter fills the unit sphere of its cyclic subspace, whose dimension nevertheless remains small compared with the $\Theta(n^{3/2})$-dimensional feasible space. Thus, the diagonal correction distinguishing the mixers can produce an unbounded increase in reachable orbit dimension without changing the feasible transitions, problem Hamiltonian, or initial state.

We also established finite-depth exact preparation of
an optimal MIS state from the computational vacuum
under either standard mixer. Under the unitary $2$-design assumption, the free architectures admit an exact loss-variance formula determined by the number of feasible independent sets and the variance of their cardinalities. For Flip-or-Stay ma-QAOA, this further implies a barren plateau whenever the feasible-set cardinality grows exponentially, under the same unitary-design assumption and independent uniform parameter sampling. Together with the Grover-mixer comparison, this connects the algebraic structure of the accessible subspace to loss concentration. The exact-reachability guarantees are existential; constructive synthesis and quantitative circuit-depth bounds remain open.


\section*{AI Disclosure}

During the preparation of this manuscript, the authors used OpenAI ChatGPT (GPT-5.6 Sol) to assist in reviewing relevant literature, verifying the correctness of results, and refining the exposition. All AI-assisted material, including citations and bibliographic information, was reviewed and verified by the authors, who assume full responsibility for the manuscript.

\bibliographystyle{plain}
\bibliography{References}

\appendix

\section{Feasible Subspace Dimensions and Variance Bounds Across Graph Topologies}
\label{sec:graph_examples_and_variance}

To illustrate the dimensional reduction obtained by restricting the optimization to the feasible subspace \(W_{\mathcal F}\), and its direct effect on the deep-circuit loss-function variance, we apply the formula in \eqref{eq:HaarVarianceExactIdentity} to some graph families for which the number and cardinality distribution of independent sets can be determined explicitly. Throughout this section, we consider the free constrained QAOA architectures under the exact unitary $2$-design assumption, with a fixed pure initial state supported on $W_{\mathcal F}$ and the unnormalized problem Hamiltonian $H_P=\sum_{v\in V}Z_v$.

Computing $\operatorname{Var}(|S|)$ directly for uniform distribution $S \sim \operatorname{Unif}(\mathcal{F})$ yields sharper estimates than substituting $|\mathcal{F}|$ into the general bounds of~\eqref{eq:HaarVarianceAsymptoticScaling}.

\subsubsection{Path and Cycle Graphs}
The numbers of independent sets of the path graph $\mathcal{P}_n$ ($n \ge 1$) and cycle graph $\mathcal{C}_n$ ($n \ge 3$) (Figure~\ref{fig:graph_families_combined}) are given by
\begin{equation}
\label{eq:cardinality_path_cycle}
\begin{aligned}
|\mathcal{F}_{\mathcal{P}_n}| &= F_{n+2}, \\
|\mathcal{F}_{\mathcal{C}_n}| &= F_{n-1} + F_{n+1} = L_n,
\end{aligned}
\end{equation}
respectively~\cite{Arocha1984}.
Here, $F_n$ and $L_n$ denote the Fibonacci and Lucas numbers, with $F_0 = 0$, $F_1 = 1, F_n = F_{n-1} + F_{n-2}$ and $L_0=2, L_n = F_{n-1} + F_{n+1}$.
Consequently, both feasible subspace dimensions grow as $\Theta(\varphi^n)$, where $\varphi = \frac{1 + \sqrt{5}}{2}$.

To evaluate the cardinality variance, let
\[
I_\Gamma(z) \coloneqq \sum_{S\in\mathcal F_\Gamma} z^{|S|},
\qquad
\mathcal D \coloneqq z \frac{\mathrm d}{\mathrm dz}.
\]
For $S \sim \operatorname{Unif}(\mathcal F_\Gamma)$, differentiating the generating function gives
\[
\mathbb E[|S|] = \frac{(\mathcal D I_\Gamma)(1)}{I_\Gamma(1)},
\qquad
\mathbb E[|S|^2] = \frac{(\mathcal D^2 I_\Gamma)(1)}{I_\Gamma(1)}.
\]
Consequently,
\begin{equation}
\label{eq:independence_polynomial_variance}
\operatorname{Var}(|S|) = \mathcal D^2 \log (I_\Gamma(z)) \bigg|_{z=1}.
\end{equation}

For paths, conditioning on the inclusion of an endpoint yields
\[
I_{\mathcal P_n}(z) = I_{\mathcal P_{n-1}}(z) + z I_{\mathcal P_{n-2}}(z),
\qquad
I_{\mathcal P_0}(z) = 1, \quad I_{\mathcal P_1}(z) = 1 + z.
\]
The characteristic roots of this recurrence are
\[
\lambda_\pm(z) \coloneqq \frac{1 \pm \sqrt{1 + 4z}}{2},
\]
and the initial conditions give
\[
I_{\mathcal P_n}(z) = \frac{\lambda_+(z)^{n+2} - \lambda_-(z)^{n+2}}{\sqrt{1 + 4z}}.
\]
For cycles, excluding a fixed vertex leaves a path on $n-1$ vertices, whereas including it excludes its two neighbors and leaves a path on $n-3$ vertices. Thus,
\[
I_{\mathcal C_n}(z) = I_{\mathcal P_{n-1}}(z) + z I_{\mathcal P_{n-3}}(z) = \lambda_+(z)^n + \lambda_-(z)^n, \qquad n \ge 3,
\]

where the last equality follows by substituting the path
formula and using $\lambda_+(z)\lambda_-(z)=-z$ and $\lambda_+(z)+\lambda_-(z)=1$.

Set $q(z) \coloneqq \lambda_-(z) / \lambda_+(z)$. The preceding expressions imply
\[
\begin{aligned}
\log (I_{\mathcal P_n}(z)) &= (n+2)\log (\lambda_+(z)) - \frac{1}{2}\log(1 + 4z) + \log\bigl(1 - q(z)^{n+2}\bigr), \\
\log (I_{\mathcal C_n}(z)) &= n \log (\lambda_+(z)) + \log\bigl(1 + q(z)^n\bigr).
\end{aligned}
\]
Since $|q(1)| = \varphi^{-2} < 1$, the contributions of the last terms to $\mathcal D^2 \log (I_\Gamma(z))$ at $z=1$ are $\mathcal O(n^2 \varphi^{-2n})$. Moreover,
\[
\mathcal D \log (\lambda_+(z)) = \frac{1}{2} \left( 1 - \frac{1}{\sqrt{1 + 4z}} \right),
\qquad
\mathcal D^2 \log (\lambda_+(z)) = \frac{z}{(1 + 4z)^{3/2}}.
\]
Applying \eqref{eq:independence_polynomial_variance} therefore gives
\[
\operatorname{Var}(|S|) = 
\begin{cases}
\displaystyle \frac{n+2}{5\sqrt{5}} - \frac{2}{25} + \mathcal O(n^2 \varphi^{-2n}), & \Gamma = \mathcal P_n, \\[1.2ex]
\displaystyle \frac{n}{5\sqrt{5}} + \mathcal O(n^2 \varphi^{-2n}), & \Gamma = \mathcal C_n.
\end{cases}
\]
In particular, $\operatorname{Var}(|S|) = \Theta(n)$ for both families.

 Combining this with
$|\mathcal F_\Gamma|=\Theta(\varphi^n)$ and
\eqref{eq:HaarVarianceExactIdentity} yields
\begin{equation}
\label{eq:variance_path_cycle}
\operatorname{Var}_{\boldsymbol\theta}
(\ell_{\boldsymbol\theta})
=
\Theta\!\left(\frac{n}{\varphi^n}\right).
\end{equation}

\subsubsection{Star Graphs}
The star graph $K_{1,n-1}$ consists of a single central hub vertex connected to $n-1$ peripheral leaf vertices ($n1$ total vertices; see Figure~\ref{fig:graph_families_combined}). Its independent sets consist of all subsets of the leaves and the singleton containing the hub. Hence,
\begin{equation}
\label{eq:cardinality_star}
|\mathcal{F}_{K_{1,n-1}}| = 2^{n-1} + 1.
\end{equation}
The feasible subspace therefore occupies asymptotically one half of the ambient $2^{n}$-dimensional Hilbert space.

Set
\[
q_n \coloneqq \frac{2^{n-1}}{2^{n-1} + 1}.
\]
With probability $q_n$, a uniformly chosen independent set is a uniformly chosen subset of the leaves, whose cardinality has distribution $\operatorname{Bin}(n-1, 1/2)$. With probability $1 - q_n$, it is the hub singleton. The law of total variance gives
\[
\operatorname{Var}(|S|) = q_n \frac{n-1}{4} + q_n (1 - q_n) \left( \frac{n-1}{2} - 1 \right)^2.
\]
Consequently, the formula in \eqref{eq:HaarVarianceExactIdentity} gives
\begin{equation}
\label{eq:variance_star}
\operatorname{Var}_{\boldsymbol\theta}
(\ell_{\boldsymbol\theta})
=
\frac{
q_n (n-1)+q_n(1-q_n)(n-3)^2
}{2^{n-1}+2}
=
\Theta\!\left(\frac{n-1}{2^{n-1}}\right).
\end{equation}

\subsubsection{Complete Bipartite Graphs}
Consider the complete bipartite graph $K_{s,\ell}$ with partition sizes $s$ and $\ell$, where $n = s + \ell$ (Figure~\ref{fig:graph_families_combined}). Any valid independent set must reside entirely within either the first or the second partition. Accounting for the empty set (counted in both partitions), the cardinality is:
\begin{equation}
\label{eq:cardinality_bipartite}
|\mathcal{F}_{K_{s,\ell}}| = 2^{s} + 2^{\ell} - 1.
\end{equation}

For balanced partitions $s = \ell = n/2$, we have $|\mathcal{F}| = 2^{n/2+1} - 1 = \Theta(2^{n/2})$, which reduces the effective dimension from $2^n$ to $2^{n/2}$. Equation~\eqref{eq:HaarVarianceAsymptoticScaling} yields:
\begin{equation}
\label{eq:variance_bipartite}
\operatorname{Var}_{\boldsymbol{\theta}}\!\left(\ell_{\boldsymbol{\theta}}\right) 
\in \Omega\left(\frac{1}{2^n}\right) \cap \mathcal{O}\left(\frac{n^2}{\sqrt{2^n}}\right).
\end{equation}

\subsubsection{Complete Graphs}
For the complete graph $K_n$
(Figure~\ref{fig:graph_families_combined}), the independent sets
are the empty set and the $n$ singletons. Therefore,
\begin{equation}
\label{eq:cardinality_complete}
|\mathcal F_{K_n}|=n+1.
\end{equation}
The cardinality of a uniformly chosen independent set
is a Bernoulli random variable with
\[
\Pr(|S|=1)=\frac{n}{n+1},
\qquad
\operatorname{Var}(|S|)
=
\frac{n}{(n+1)^2}.
\]
Hence the loss variance is exactly
\begin{equation}
\label{eq:variance_complete}
\operatorname{Var}_{\boldsymbol\theta}
(\ell_{\boldsymbol\theta})
=
\frac{4n}{(n+1)^2(n+2)}
=
\Theta(n^{-2}).
\end{equation}
Unlike the preceding families, complete graphs therefore
exhibit only inverse-polynomial loss concentration under
the stated ensemble assumption.

\begin{figure}[h!]

\centering

\begin{tikzpicture}[scale=0.9, every node/.style={circle, draw, fill=black, inner sep=1.8pt}]

  \def\npath{5}

  \foreach \i in {1,...,\npath} {

    \node (v\i) at ({1.2*(\i-1)},0) {};

  }

  \foreach \i in {1,...,4} {

    \pgfmathtruncatemacro\ip{\i+1}

    \draw[thick] (v\i) -- (v\ip);

  }

  \foreach \i in {1,...,\npath} {

    \node[draw=none, fill=none, below=4pt] at (v\i) {$v_\i$};

  }
\end{tikzpicture}
\begin{tikzpicture}[scale=0.9, every node/.style={circle, draw, fill=black, inner sep=1.8pt}]

  \def\ncycle{5}

  \foreach \i in {1,...,\ncycle} {

    \node (c\i) at (90-\i*360/\ncycle:1.4) {};

    \node[draw=none, fill=none] at (90-\i*360/\ncycle:1.75) {$v_\i$};

  }

  \foreach \i in {1,...,\ncycle} {

    \pgfmathtruncatemacro\ip{mod(\i,\ncycle) + 1}

    \draw[thick] (c\i) -- (c\ip);

  }
\end{tikzpicture}
\begin{tikzpicture}[scale=1.1, every node/.style={circle, draw, fill=black, inner sep=2pt}]

  \def\nleaves{5}

  \node (hub) at (0,0) {};

  \node[draw=none, fill=none, right=3pt] at (hub) {$v_{\text{hub}}$};

  \foreach \i in {1,...,\nleaves} {

    \node (leaf\i) at (90-\i*360/\nleaves:1.5) {};

    \node[draw=none, fill=none] at (90-\i*360/\nleaves:1.85) {$\ell_\i$};

    \draw[thick, gray!70] (hub) -- (leaf\i);

  }

\end{tikzpicture}

\begin{tikzpicture}[scale=1.1, every node/.style={circle, draw, fill=black, inner sep=2pt}]

  \def\s{3}

  \def\l{4}

  \foreach \i in {1,...,\s} {

    \node (L\i) at (0, {-\i*1.1 + (\s+1)*0.55}) {};

    \node[draw=none, fill=none, left=3pt] at (L\i) {$L_\i$};

  }

  \foreach \j in {1,...,\l} {

    \node (R\j) at (2.8, {-\j*1.1 + (\l+1)*0.55}) {};

    \node[draw=none, fill=none, right=3pt] at (R\j) {$R_\j$};

  }

  \foreach \i in {1,...,\s} {

    \foreach \j in {1,...,\l} {

      \draw[thick, gray!70] (L\i) -- (R\j);

    }

  }

\end{tikzpicture}
\begin{tikzpicture}[scale=1.5, every node/.style={circle, draw, fill=black, inner sep=2pt}]

  \def\n{5} 


  \foreach \i in {1,...,\n} {

    \node (v\i) at (90-\i*360/\n:1.5) {};

    \node[draw=none, fill=none] at (90-\i*360/\n:1.8) {$v_\i$};

  }


  \foreach \i in {1,...,\n} {

    \foreach \j in {1,...,\n} {

      \ifnum\i<\j

        \draw[thick, gray!70] (v\i) -- (v\j);

      \fi

    }

  }

\end{tikzpicture}

\caption{Benchmark graph structures analyzed in this work: path graph $\mathcal{P}_5$ (top left), cycle graph $\mathcal{C}_5$ (top middle), star graph $K_{1,5}$ with central hub vertex $v_{\text{hub}}$ (top right), complete bipartite graph $K_{3,4}$ with partition sizes $s=3$ and $\ell=4$ (bottom left), and complete graph $K_5$ (bottom right).}
\label{fig:graph_families_combined}
\end{figure}

\section{Proofs and Technical Details}
\label{sec:proofs}
This appendix provides the technical details and complete proofs underlying the theoretical results presented in the main text. The exposition is organized to be largely self-contained and includes the intermediate arguments and derivations needed to verify the stated claims.

We begin by showing that the restrictions of the local and global Flip-or-Void mixers, $H_{CX,v}$ and $H_{CX}$, to the feasible subspace $W_{\mathcal{F}}$ can be expressed as the projected actions of the local Pauli-$X$ operator $X_v$ and the global transverse-field mixer $B = \sum\limits_{v \in V} X_v$ via the subspace projector $\mathcal{P}_{\mathcal{F}}$, as stated in \eqref{eq:constrained_hamiltonians}. Furthermore, we establish explicit matrix representations for these operators in terms of the adjacency matrix $A_{\mathcal{R}(\Gamma)}$ and graph Laplacian $L_{\mathcal{R}(\Gamma)}$ of the independent set reconfiguration graph $\mathcal{R}(\Gamma)$, as presented in \eqref{eq:FlipOrStayAsLaplacian}.

\begin{proof}[Proof of Proposition \ref{prop:mixer_restriction_properties}]
Let $\ket{x}, \ket{y} \in W_\mathcal{F}$ be any two feasible basis states. Using the definition of the projection operator $ \mathcal{P}_{\mathcal{F}}$ (see \eqref{eq:FeasibleProjector}), we evaluate the matrix elements of $\mathcal{P}_{\mathcal{F}} X_v \mathcal{P}_{\mathcal{F}}$ as:
\begin{equation}
    \bra{y}\mathcal{P}_{\mathcal{F}} X_v \mathcal{P}_{\mathcal{F}}\ket{x} = \bra{y}X_v\ket{x}.
\end{equation}
The expression $\bra{y}X_v\ket{x}$ is non-zero (equal to $1$) if and only if $y = x^{(v)}$. Because $x, y \in \mathcal{F}$, the state $x$ cannot have any active neighbors in $\mathcal{N}_v$ if $y = x^{(v)}$ is also a valid independent set. Thus, the neighborhood projector satisfies $\mathcal{P}_v\ket{x} = \ket{x}$. Comparing this to the action of $H_{CX,v}$:
\begin{equation}
    \bra{y}H_{CX,v}\ket{x} = \bra{y}X_v\mathcal{P}_v\ket{x} = \bra{y}X_v\ket{x}.
    \end{equation}
If $y \neq x^{(v)}$ or if $\mathcal{P}_v$ vanishes, both expressions equal $0$. Since their matrix elements coincide identically on all computational basis vectors spanning $W_{\mathcal{F}}$, it follows that $H_{CX,v}\big|_{\mathcal{W}_\mathcal{F}} = \mathcal{P}_{\mathcal{F}} X_v \mathcal{P}_{\mathcal{F}}\big|_{\mathcal{W}_\mathcal{F}}$. Summing over all $v \in V(\Gamma)$ completes the proof for the total operator.
\end{proof}

\begin{proof}[Proof of Proposition~\ref{prop:MixersAsReconfigurationOperators}]
By the definition of the Flip-or-Void mixer in~\eqref{eq:HCXv_action}, the matrix element $\bra{y} H_{CX}\big|_{W_{\mathcal{F}}} \ket{x}$ equals $1$ if $x$ and $y$ are adjacent in $\mathcal{R}(\Gamma)$ (i.e., differ by a valid single-vertex bit flip) and $0$ otherwise. Thus, $H_{CX}\big|_{W_{\mathcal{F}}} = A_{\mathcal{R}}$, proving the first identity.

To establish the second identity, recall that for any $x \in \mathcal{F}$, the constraint projector $\mathcal{P}_v$ acts as
\[
\mathcal{P}_v \ket{x} =
\begin{cases}
    \ket{x}, & x^{(v)} \in \mathcal{F}, \\[1mm]
    0,       & x^{(v)} \notin \mathcal{F}.
\end{cases}
\]
Summing the complement projectors $I - \mathcal{P}_v$ over all vertices $v \in V(\Gamma)$ yields
\[
\sum_{v \in V(\Gamma)} (I - \mathcal{P}_v) \ket{x} = \bigl(n - \deg_{\mathcal{R}}(x)\bigr) \ket{x},
\]
where $\deg_{\mathcal{R}}(x)$ denotes the degree of vertex $x$ in $\mathcal{R}(\Gamma)$. Restricting to $W_{\mathcal{F}}$, this operator identity reads
\begin{equation}
\label{eq:BlockedProjectorAsDegreeComplement}
    \sum_{v \in V(\Gamma)} (I - \mathcal{P}_v)\Big|_{W_{\mathcal{F}}} = nI - D_{\mathcal{R}}.
\end{equation}
Finally, using the mixer decomposition $\widehat{H}_{CX} = H_{CX} + \sum\limits_{v \in V(\Gamma)} (I - \mathcal{P}_v)$, we substitute the restricted terms on $W_{\mathcal{F}}$:
\[
    \widehat{H}_{CX}\big|_{W_{\mathcal{F}}} = A_{\mathcal{R}} + (nI - D_{\mathcal{R}}) = nI - (D_{\mathcal{R}} - A_{\mathcal{R}}) = nI - L_{\mathcal{R}},
\]
which completes the proof.
\end{proof}

Next, we verify the structural properties of the free dynamical Lie algebras stated in the main text.

\begin{proof}[Proof of Proposition \ref{prop:free_DLA_block_embedding}]

   To establish the block-diagonal structure, it suffices to demonstrate that each orbit subspace $W_{\mathcal{O}_j}$ (including the feasible subspace $W_\mathcal{F}$) is preserved under the action of the generators of $\widehat{\mathfrak{g}}_{\Gamma,\mathrm{free}}$. Recall that $\widehat{\mathfrak{g}}_{\Gamma,\mathrm{free}}$ is defined as the Lie algebra generated under the Lie bracket by the set of local mixers $\{i\widehat{H}_{CX,v}\}_{v \in V}$ and the local phase operators $\{iZ_v\}_{v \in V}$.
    
    First, we consider the local mixers. By definition of the orbit decomposition, the action of any local mixer $\widehat{H}_{CX,v}$ on a basis state $\ket{x} \in W_{\mathcal{O}_j}$ permutes it to another basis state $\ket{x'}$ within the same orbit $\mathcal{O}_j$. Therefore, $\widehat{H}_{CX,v} (W_{\mathcal{O}_j}) \subseteq W_{\mathcal{O}_j}$.
    
    Second, consider the local phase operators. Each $Z_v$ acts diagonally on the computational basis via $Z_v \ket{x} = (-1)^{x_v} \ket{x}$. Because $Z_v$ scales every basis state individually, it trivially preserves any subspace spanned by a subset of these basis states, implying $Z_v (W_{\mathcal{O}_j}) \subseteq W_{\mathcal{O}_j}$.
    
    Since all generating elements map $W_{\mathcal{O}_j}$ into itself, any linear combination or nested commutator of these generators must also preserve $W_{\mathcal{O}_j}$. It follows that every element in $\widehat{\mathfrak{g}}_{\Gamma,\mathrm{free}}$ is block-diagonal with respect to the orthogonal decomposition of the total Hilbert space 
    $W =W_{\mathcal{F}}\oplus\bigoplus\limits_{j=1}^k W_{\mathcal{O}_j}$, completing the proof.
\end{proof}

Next, we establish the precise structure of the restricted free dynamical Lie algebras on the feasible subspace $W_{\mathcal{F}}$. Specifically, we prove that assigning independent variational parameters to each generator expands the dynamical Lie algebra to the full unitary Lie algebra $\mathfrak{u}(W_{\mathcal{F}})$, thereby establishing Theorem~\ref{thm:ExtendedFreeLieAlgFullUnitary}.

\begin{proof}[Proof of Theorem~\ref{thm:ExtendedFreeLieAlgFullUnitary}]

The proof proceeds in four main steps: we localize a single vacuum-to-singleton transition, propagate this operator across all singletons, extend the construction inductively by Hamming weight, and finally generate the central scalar direction.

For brevity, write
\[
\mathfrak g_{\mathcal F}
:=
\mathfrak g_{\Gamma,\mathrm{free},\mathcal F},
\qquad
\widehat{\mathfrak g}_{\mathcal F}
:=
\widehat{\mathfrak g}_{\Gamma,\mathrm{free},\mathcal F}.
\]
All operators below are restricted to \(W_{\mathcal F}\) whenever
matrix units are used.

First observe that the second identity in \eqref{eq:double_commutators} implies that
every generator of \(\mathfrak g_{\mathcal F}\) belongs to
\(\widehat{\mathfrak g}_{\mathcal F}\), and therefore
\begin{equation}
\label{eq:algebra_chain_inclusion}
\mathfrak g_{\mathcal F}
\subseteq
\widehat{\mathfrak g}_{\mathcal F}
\subseteq
\mathfrak u(W_{\mathcal F}).
\end{equation}
It consequently suffices to prove that
\(\mathfrak g_{\mathcal F}=\mathfrak u(W_{\mathcal F})\).

For a subset \(S\subseteq V\), define the operator
\begin{equation}
\label{eq:zero_projector_subset}
Q_S
:=
\prod_{u\in S}\frac{I+Z_u}{2},
\qquad
Q_{\varnothing}:=I.
\end{equation}
Thus \(\mathcal P_v=Q_{\mathcal N_v}\) and
\[
H_{CX,v}=X_vQ_{\mathcal N_v}.
\]
Since \(Q_{\mathcal N_v}\) acts trivially on the \(v\)-th qubit, it
commutes with \(X_v,Y_v\), and \(Z_v\). Using
\([Z_v,X_v]=2iY_v\) and \([X_v,Y_v]=2iZ_v\), we obtain
\begin{equation}
\label{eq:local_reduction}
\begin{aligned}
\frac12[iZ_v,iH_{CX,v}]
&=
-iY_vQ_{\mathcal N_v},\\
\frac12
\left[
iH_{CX,v},
-iY_vQ_{\mathcal N_v}
\right]
&=
iZ_vQ_{\mathcal N_v}.
\end{aligned}
\end{equation}
In particular,
\begin{equation}
\label{eq:local_cz}
D(v,\mathcal N_v)
:=
iZ_vQ_{\mathcal N_v}
\in
\mathfrak g_{\Gamma,\mathrm{free}}.
\end{equation}

We next establish the propagation identity that will be used to
localize this diagonal operator. For \(r\in V\) and
\(C\subseteq V\setminus\{r\}\), write
\[
D(r,C):=iZ_rQ_C.
\]
Suppose that \(\{r,w\}\in E\) and \(w\in C\). Since
\(Q_C=Q_{\{w\}}Q_{C\setminus\{w\}}\), while
\(r\in\mathcal N_w\), we have
\[
[Q_{\{w\}},X_w]=iY_w,
\qquad
Z_rQ_{\mathcal N_w}=Q_{\mathcal N_w}.
\]
A direct calculation therefore gives
\begin{equation}
\label{eq:diagonal_propagation_first}
\left[
D(r,C),iH_{CX,w}
\right]
=
-iY_w
Q_{(C\setminus\{w\})\cup\mathcal N_w}.
\end{equation}
Commuting once more with \(iH_{CX,w}\) yields
\begin{equation}
\label{eq:diagonal_propagation}
\frac12
\left[
iH_{CX,w},
\left[
D(r,C),iH_{CX,w}
\right]
\right]
=
D\left(
w,
(C\setminus\{w\})\cup\mathcal N_w
\right).
\end{equation}

Because \(\Gamma\) is connected, there exists a finite walk
\[
r_0,r_1,\ldots,r_m
\]
that visits every vertex and satisfies
\((r_k,r_{k+1})\in E\) for every \(k\). Repetitions of vertices are
allowed. Define
\begin{equation}
\label{eq:walk_projector_sets}
C_k
:=
\left(
\bigcup_{j=0}^{k}\mathcal N_{r_j}
\right)
\setminus\{r_k\}.
\end{equation}
The initial operator
\[
D(r_0,C_0)
=
iZ_{r_0}Q_{\mathcal N_{r_0}}
\]
belongs to \(\mathfrak g_{\Gamma,\mathrm{free}}\) by
\eqref{eq:local_cz}. Moreover,
\(r_{k+1}\in C_k\), and
\[
C_{k+1}
=
(C_k\setminus\{r_{k+1}\})
\cup\mathcal N_{r_{k+1}}.
\]
Thus, repeated application of
\eqref{eq:diagonal_propagation} proves inductively that
\begin{equation}
\label{eq:walk_diagonal_induction}
D(r_k,C_k)
\in
\mathfrak g_{\Gamma,\mathrm{free}}
\qquad
\text{for every }k.
\end{equation}

Let $v_{\mathrm{end}} := r_m$ be the final vertex of the walk. Since the walk visits every vertex of the connected graph $\Gamma$, we have
\[
    \bigcup_{j=0}^{m} \mathcal{N}_{r_j} = V.
\]
Consequently, $C_m = V \setminus \{v_{\mathrm{end}}\}$, which implies
\begin{equation}
\label{eq:global_cz}
    D_{v_{\mathrm{end}}} := i Z_{v_{\mathrm{end}}} Q_{V \setminus \{v_{\mathrm{end}}\}} \in \mathfrak{g}_{\Gamma,\mathrm{free}}.
\end{equation}
The projector
\(Q_{V\setminus\{v_{\mathrm{end}}\}}\) vanishes on every computational
basis state except \(0^n\) and \(e_{v_{\mathrm{end}}}\). Since
\(Z_{v_{\mathrm{end}}}\) has eigenvalues \(+1\) and \(-1\) on these
two states, respectively, restriction to \(W_{\mathcal F}\) gives
\begin{equation}
\label{eq:rank2_diagonal}
D_{v_{\mathrm{end}}}
=
iE_{0^n,0^n}
-
iE_{e_{v_{\mathrm{end}}},
     e_{v_{\mathrm{end}}}}
\in
\mathfrak g_{\mathcal F}.
\end{equation}

For distinct \(x,y\in\mathcal F\), introduce the standard
skew-Hermitian matrix units
\begin{equation}
\label{eq:skew_matrix_units}
A_{x,y}:=E_{x,y}-E_{y,x},
\qquad
B_{x,y}:=i(E_{x,y}+E_{y,x}).
\end{equation}
The local mixer has the matrix-unit expansion
\begin{equation}
\label{eq:local_mixer_matrix_units}
H_{CX,v}\big|_{W_{\mathcal F}}
=
\sum_{\substack{x\in\mathcal F\\
                 x_v=0,\;x^{(v)}\in\mathcal F}}
\left(
E_{x,x^{(v)}}+E_{x^{(v)},x}
\right).
\end{equation}
In particular,
\begin{equation}
\label{eq:root_mixer_decomposition}
H_{CX,v_{\mathrm{end}}}\big|_{W_{\mathcal F}}
=
E_{0^n,e_{v_{\mathrm{end}}}}
+
E_{e_{v_{\mathrm{end}}},0^n}
+
R_{v_{\mathrm{end}}},
\end{equation}
where none of the matrix units in \(R_{v_{\mathrm{end}}}\) has a row
or column indexed by \(0^n\) or \(e_{v_{\mathrm{end}}}\). It follows
that
\[
[D_{v_{\mathrm{end}}},R_{v_{\mathrm{end}}}]=0.
\]
We obtain
\begin{equation}
\label{eq:root_antisymmetric_matrix_unit}
-\frac12
[D_{v_{\mathrm{end}}},
 iH_{CX,v_{\mathrm{end}}}]
=
A_{0^n,e_{v_{\mathrm{end}}}}
\in
\mathfrak g_{\mathcal F},
\end{equation}
and
\begin{equation}
\label{eq:root_symmetric_matrix_unit}
-\frac14
[D_{v_{\mathrm{end}}},
 [D_{v_{\mathrm{end}}},
  iH_{CX,v_{\mathrm{end}}}]]
=
B_{0^n,e_{v_{\mathrm{end}}}}
\in
\mathfrak g_{\mathcal F}.
\end{equation}

We now propagate these vacuum--singleton matrix units through a
breadth-first-search tree of \(\Gamma\) rooted at
\(v_{\mathrm{end}}\). Suppose that
\[
A_{0^n,e_p},\ B_{0^n,e_p}
\in\mathfrak g_{\mathcal F},
\]
and let \(w\) be a child of \(p\) in this tree. Since
\((p,w)\in E\), the configuration \(e_p\) blocks the local mixer at
\(w\), so
\[
H_{CX,w}\ket{e_p}=0.
\]
On the other hand,
\[
H_{CX,w}\ket{0}^n=\ket{e_w},
\qquad
H_{CX,w}\ket{e_w}=\ket{0}^n.
\]
Therefore, the only terms of \(H_{CX,w}\) contributing to the
following commutators are the matrix units on positions
\((0^n,e_w)\) and \((e_w,0^n)\). Direct multiplication gives
\begin{equation}
\label{eq:singleton_exchange_units}
\begin{aligned}
[B_{0^n,e_p},iH_{CX,w}]
&=
-A_{e_p,e_w},\\
[A_{0^n,e_p},iH_{CX,w}]
&=
-B_{e_p,e_w}.
\end{aligned}
\end{equation}
Commuting these exchange units with the already known transition at
\(p\) gives
\begin{equation}
\label{eq:singleton_bfs_step}
\begin{aligned}
[B_{e_p,e_w},B_{0^n,e_p}]
&=
A_{0^n,e_w},\\
[A_{e_p,e_w},B_{0^n,e_p}]
&=
-B_{0^n,e_w}.
\end{aligned}
\end{equation}
Equations~\eqref{eq:root_antisymmetric_matrix_unit},
\eqref{eq:root_symmetric_matrix_unit}, and
\eqref{eq:singleton_bfs_step} prove by induction on the BFS distance
that
\begin{equation}
\label{eq:all_vacuum_singletons}
A_{0^n,e_v},\ B_{0^n,e_v}
\in
\mathfrak g_{\mathcal F}
\qquad
\text{for every }v\in V.
\end{equation}

We next prove by induction on Hamming weight that
\begin{equation}
\label{eq:all_vacuum_transitions}
A_{0^n,x},\ B_{0^n,x}
\in
\mathfrak g_{\mathcal F}
\qquad
\text{for every }
x\in\mathcal F\setminus\{0^n\}.
\end{equation}
The weight-one case is exactly
\eqref{eq:all_vacuum_singletons}. Furthermore, whenever the
vacuum transitions corresponding to two distinct nonzero feasible
strings \(x\) and \(z\) have been generated, we also have
\begin{equation}
\label{eq:lower_weight_pair_units}
\begin{aligned}
A_{x,z}
&=
-[A_{0^n,x},A_{0^n,z}],\\
B_{x,z}
&=
-[A_{0^n,x},B_{0^n,z}].
\end{aligned}
\end{equation}

Now let \(x\in\mathcal F\) have Hamming weight \(k\geq2\), choose
\(v\in\operatorname{supp}(x)\), and set
\[
y:=x^{(v)}=x\setminus\{v\}.
\]
Then \(y\in\mathcal F\) has weight \(k-1\), and the local mixer at
\(v\) contains both transitions
\[
0^n\longleftrightarrow e_v,
\qquad
y\longleftrightarrow x.
\]
Because \(H_{CX,v}\big|_{W_{\mathcal F}}\) is a matching operator,
all its remaining matrix units have rows and columns disjoint from
\(0^n,y,e_v,x\). Direct calculation therefore yields
\begin{equation}
\label{eq:hamming_weight_induction}
\begin{aligned}
[B_{0^n,y},iH_{CX,v}]
&=
-A_{0^n,x}-A_{y,e_v},\\
[A_{0^n,y},iH_{CX,v}]
&=
B_{0^n,x}-B_{y,e_v}.
\end{aligned}
\end{equation}
Both \(y\) and \(e_v\) have weight at most \(k-1\). Hence the
induction hypothesis and \eqref{eq:lower_weight_pair_units} imply
that \(A_{y,e_v}\) and \(B_{y,e_v}\) already belong to
\(\mathfrak g_{\mathcal F}\). We can therefore isolate
\begin{equation}
\label{eq:isolate_next_weight}
\begin{aligned}
A_{0^n,x}
&=
-[B_{0^n,y},iH_{CX,v}]
-A_{y,e_v},\\
B_{0^n,x}
&=
[A_{0^n,y},iH_{CX,v}]
+B_{y,e_v}.
\end{aligned}
\end{equation}
This completes the induction and proves
\eqref{eq:all_vacuum_transitions}.

Applying \eqref{eq:lower_weight_pair_units} to all distinct nonzero
\(x,z\in\mathcal F\), we conclude that
\begin{equation}
\label{eq:all_off_diagonal_units}
A_{x,z},\ B_{x,z}
\in
\mathfrak g_{\mathcal F}
\qquad
\text{for every distinct }x,z\in\mathcal F.
\end{equation}
Their commutators give all traceless diagonal directions:
\begin{equation}
\label{eq:all_diagonal_differences}
\frac12[A_{x,z},B_{x,z}]
=
i(E_{x,x}-E_{z,z})
\in
\mathfrak g_{\mathcal F}.
\end{equation}
The elements in
\eqref{eq:all_off_diagonal_units} and
\eqref{eq:all_diagonal_differences} form the standard real basis of
\(\mathfrak{su}(W_{\mathcal F})\). Thus,
\begin{equation}
\label{eq:su_containment}
\mathfrak{su}(W_{\mathcal F})
\subseteq
\mathfrak g_{\mathcal F}.
\end{equation}

It remains to generate the central direction. Since
\(|V|\geq2\) and \(\Gamma\) is connected, choose an edge
\((v,w)\in E\). Let
\[
K:=i(Z_v+Z_w)\big|_{W_{\mathcal F}}
\in\mathfrak g_{\mathcal F}.
\]
No feasible configuration contains both \(v\) and \(w\). Hence
\(Z_v+Z_w\) has eigenvalue \(2\) on feasible configurations
containing neither vertex and eigenvalue \(0\) on those containing
exactly one of them. Therefore,
\begin{equation}
\label{eq:nonzero_trace_generator}
\operatorname{Tr}(K)
=
2i\,
\left|
\left\{
x\in\mathcal F:
x_v=x_w=0
\right\}
\right|
=:i\tau,
\qquad
\tau>0.
\end{equation}
Writing \(d:=\dim W_{\mathcal F}=|\mathcal F|\), the traceless
skew-Hermitian operator
\[
K_0:=K-\frac{i\tau}{d}I
\]
belongs to \(\mathfrak{su}(W_{\mathcal F})\), and hence to
\(\mathfrak g_{\mathcal F}\) by \eqref{eq:su_containment}. It follows
that
\[
iI=\frac{d}{\tau}(K-K_0)
\in\mathfrak g_{\mathcal F}.
\]
Together with \eqref{eq:su_containment}, this proves
\[
\mathfrak g_{\mathcal F}
=
\mathfrak u(W_{\mathcal F}).
\]
Finally, the inclusions in \eqref{eq:algebra_chain_inclusion} force
\[
\mathfrak g_{\mathcal F}
=
\widehat{\mathfrak g}_{\mathcal F}
=
\mathfrak u(W_{\mathcal F}),
\]
as claimed.

\end{proof}

\begin{proof}[Proof of Proposition~\ref{prop:WeightedMISStandardDLA}]
Let $\mathfrak{h} := \left\langle i H_{\mathrm{CX}}\big|_{W_{\mathcal{F}}}, i H_{P,w}\big|_{W_{\mathcal{F}}} \right\rangle_{\mathrm{Lie}}$ denote the standard dynamical Lie algebra for the weighted MIS QAOA employing $H_{\mathrm{CX}}$ mixer. Consider the double-adjoint superoperator 
\[
\mathcal{T} := -\operatorname{ad}_{iH_{P,w}\big|_{W_{\mathcal{F}}}}^2.
\]
The action of $\mathcal{T}$ on each local mixer generator yields
\begin{equation}
\label{eq:superoperator_action_local_mixers}
    \mathcal{T}(i H_{\mathrm{CX},v}\big|_{W_{\mathcal{F}}}) = 4 w_v^2 \cdot i H_{\mathrm{CX},v}\big|_{W_{\mathcal{F}}},
\end{equation}
Thus the local mixers are eigenvectors of \(\mathcal T\) with nonzero, pairwise distinct eigenvalues  $\{4 w_v^2 : v \in V\}$. For each vertex $v \in V$, let
\begin{equation}
\label{eq:lagrange_interpolating_poly}
p_v(t) \coloneqq \prod_{u \in V \setminus \{v\}} \frac{t - 4w_u^2}{4w_v^2 - 4w_u^2} \in \mathbb{R}[t]
\end{equation}
be the Lagrange interpolating polynomial satisfying $p_v(4w_u^2) = \delta_{vu}$. Evaluating the superoperator $p_v(\mathcal T)$ on the global mixer generator isolates the corresponding local mixer, confirming its presence in the dynamical Lie algebra:
\begin{equation}
\label{eq:local_mixer_isolation}
p_v(\mathcal T)\!\left(iH_{\mathrm{CX}}\big|_{W_{\mathcal{F}}}\right) = iH_{\mathrm{CX},v}\big|_{W_{\mathcal{F}}} \in \mathfrak{h} \qquad \text{for all } v \in V.
\end{equation}

For every $v\in V$,
\[
[iH_{P,w},iH_{\mathrm{CX},v}]
=-2iw_vY_vP_v,
\qquad
\frac{1}{4w_v}
[iH_{\mathrm{CX},v},[iH_{P,w},iH_{\mathrm{CX},v}]]
=iZ_vP_v\in\mathfrak h.
\]
The localization and matrix-unit construction in the proof of
Theorem~\ref{thm:ExtendedFreeLieAlgFullUnitary} therefore applies,
yielding
\[
\mathfrak{su}(W_{\mathcal F})\subseteq\mathfrak h.
\]

Pairing each independent set containing $v$ with the set
obtained by deleting $v$ cancels their trace contributions;
every unpaired set contributes $+1$, and the singleton of
any neighbor of $v$ is unpaired. Thus
$\operatorname{Tr}_{W_{\mathcal F}}(Z_v)>0$. Hence, positivity of the weights implies
$\tau:=\operatorname{Tr}_{W_{\mathcal F}}(H_{P,w})>0$.
Since
\[
iH_{P,w}-\frac{i\tau}{|\mathcal F|}I
\in\mathfrak{su}(W_{\mathcal F})\subseteq\mathfrak h,
\]
subtraction gives $iI\in\mathfrak h$, and hence
$\mathfrak h=\mathfrak u(W_{\mathcal F})$.

For the Flip-or-Stay mixer, $\mathcal T$ annihilates its
diagonal correction. Interpolation at
$\{0\}\cup\{4w_v^2:v\in V\}$ again isolates every
$iH_{\mathrm{CX},v}$, so the same argument proves the result.
 
\end{proof}

\begin{proof}[Proof of Corollary \ref{cor:controllability}]
This assertion follows directly from Theorem~\ref{thm:ExtendedFreeLieAlgFullUnitary}, which implies that the dynamical Lie groups associated with the restricted DLAs $\mathfrak{g}_{\Gamma,\mathrm{free}}\big|_{W_\mathcal{F}}\cong\widehat{\mathfrak{g}}_{\Gamma,\mathrm{free}, \mathcal{F}} \cong \mathfrak{u}(W_\mathcal{F})$ are both the full unitary group acting on the feasible subspace, i.e., 
\begin{equation}
    G := \exp(\widehat{\mathfrak{g}}_{\Gamma,\mathrm{free}, \mathcal{F}}) \cong U(W_{\mathcal{F}}).
\end{equation}
Because the unitary group $U(W_{\mathcal{F}})$ acts transitively on the unit sphere of $W_{\mathcal{F}}$, there must exist a group element $g \in G$ that maps the initial state $\ket{h}$ exactly to the target state $\ket{t}$. 

Furthermore, since $G$ is a compact and connected Lie group, standard results in quantum control theory~\cite[Corollary~3.2.6]{dalessandro2021quantum} guarantee that any such element $g \in G$ can be decomposed and realized as a finite-depth alternating product of the exponentials of the DLA's generators. This yields the desired parameterized quantum circuit $U(\boldsymbol{\theta})$ satisfying \eqref{eq:exact_state_prep}.
\end{proof}

Before proceeding to the proof of Theorem~\ref{thm:explicit_family_orbit_separation}, we introduce the degree-resolved and aggregate basis vectors in the feasible subspace $W_{\mathcal{F}_r}$. We then establish a structural proposition identifying the exact forms of the cyclic subrepresentations generated from the computational vacuum state $\ket{0}^n$ under the Flip-or-Void and Flip-or-Stay DLAs, providing a foundational building block for establishing the results stated in the theorem.

For each $k = 1, \ldots, r-1$, define the degree-resolved superposition of singleton set states
\begin{equation}
\label{eq:sk_definition}
    s_k := \sum_{\substack{v \in V(H_r) \\ d_{H_r}(v) = k}} \ket{\{v\}},
\end{equation}
and, for $a = 1, \ldots, m$, define the component-resolved superposition of two-element set states
\begin{equation}
\label{eq:wa_definition}
    w_a := \sum_{\{v,w\} \in E(K_{a,r-a})} \ket{\{v,w\}}.
\end{equation}
We also introduce the aggregate vectors
\begin{equation}
\label{eq:u_d_w_definition}
    u := \sum_{k=1}^{r-1} s_k, \qquad d := \sum_{k=1}^{r-1} k s_k, \qquad w := \sum_{a=1}^{m} w_a,
\end{equation}
along with the corresponding candidate invariant subspaces: the four-dimensional aggregate subspace
\begin{equation}
\label{eq:aggregate_subspace_def}
    \mathcal{W}_r^A := \operatorname{span}_{\mathbb{C}} \left\{ \ket{0}^n, u, d, w \right\}
\end{equation}
and the degree-resolved subspace
\begin{equation}
\label{eq:aggregate_subspace_hat_def}
    \mathcal{W}_r^L := \operatorname{span}_{\mathbb{C}} \left\{ \ket{0}^n, s_1, \ldots, s_{r-1}, w_1, \ldots, w_m \right\}.
\end{equation}

\begin{defn}
    For a Lie algebra $\mathfrak{g} \subseteq \mathfrak{u}(W_{\mathcal{F}_r})$, we denote by
\begin{equation}
\label{eq:cyclic_subspace_def}
    W_{\mathfrak{g},\ket{0}^n} \subseteq W_{\mathcal{F}_r}
\end{equation}
the \emph{cyclic subrepresentation} generated by $\ket{0}^n$, i.e., the smallest complex $\mathfrak{g}$-invariant subspace of $W_{\mathcal{F}_r}$ containing $\ket{0}^n$.

\end{defn}

\begin{prop}
\label{prop:vacuum_cyclic_subspaces}
Let $r \ge 5$ be an odd integer, set $n := \frac{r(r-1)}{2}$, and let $W_{\mathfrak{g}_{\Gamma_r,\mathrm{std}},\ket{0}^n}, W_{\widehat{\mathfrak{g}}_{\Gamma_r,\mathrm{std}},\ket{0}^n} \subseteq W_{\mathcal{F}_r}$ be the cyclic subrepresentations generated by the computational vacuum state $\ket{0}^n$ under the action of the standard Flip-or-Void DLA $\mathfrak{g}_{\Gamma_r,\mathrm{std}}$ and Flip-or-Stay DLA $\widehat{\mathfrak{g}}_{\Gamma_r,\mathrm{std}}$, respectively. Then,
\begin{equation}
\label{eq:vacuum_cyclic_subspace_identifications}
\begin{aligned}
    W_{\mathfrak{g}_{\Gamma_r,\mathrm{std}},\ket{0}^n} &= \mathcal{W}_r^A, &\quad \dim_{\mathbb{C}}(W_{\mathfrak{g}_{\Gamma_r,\mathrm{std}},\ket{0}^n}) &= 4, \\
    W_{\widehat{\mathfrak{g}}_{\Gamma_r,\mathrm{std}},\ket{0}^n} &= \mathcal{W}_r^L, &\quad \dim_{\mathbb{C}}(W_{\widehat{\mathfrak{g}}_{\Gamma_r,\mathrm{std}},\ket{0}^n}) &= \frac{3r-1}{2}.
\end{aligned}
\end{equation}
\end{prop}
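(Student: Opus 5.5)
The plan is to use the fact that the smallest complex subspace invariant under a real Lie algebra $\mathfrak g\subseteq\mathfrak u(W_{\mathcal F_r})$ equals the smallest subspace invariant under its generators. Indeed, invariance under the generators passes to all nested commutators, and the Lie algebra contains the generators. Hence $W_{\mathfrak g,\ket{0}^n}$ is the span of all words in the generators applied to $\ket{0}^n$, i.e.\ the cyclic module of the complex associative algebra they generate. For each of the two algebras I will do two things: first show that the proposed subspace ($\mathcal W_r^A$, respectively $\mathcal W_r^L$) is invariant under the generators; then show that every spanning vector is reached from $\ket{0}^n$ by such words. Since $H_P$ is diagonal with eigenvalue $n-2j$ on the weight-$j$ sector, and each of the proposed spanning vectors lies in a single weight sector, $H_P$ preserves both subspaces automatically. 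So the work is in the mixers.

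\textbf{Combinatorial input.} Because $r$ is odd, in component $K_{a,r-a}$ of $H_r$ the two sides have $H_r$-degrees $r-a$ and $a$, with $a\le m<r-a$. Hence every degree value $k\in\{1,\dots,r-1\}$ occurs in exactly one component and on one side. Consequently $s_a$ and $s_{r-a}$ are exactly the two sides of component $a$. By Remark~\ref{rmk:Gamma_r_feasible_cardinality}, feasible states are the vacuum, singletons, and edges of $H_r$. Since $H_r$ is triangle-free, no vertex can be added to a pair. Using \eqref{eq:HCXv_action} I will compute:
\begin{itemize}
\item $H_{CX}\ket{0}^n=u$;
\item $H_{CX}\ket{\{v\}}=\ket{0}^n+\sum_{w\sim_{H_r}v}\ket{\{v,w\}}$;
\item $H_{CX}\ket{\{v,w\}}=\ket{\{v\}}+\ket{\{w\}}$.
\end{itemize}
Summing over degree classes gives
\[
H_{CX}s_k=N_k\ket{0}^n+w_{a(k)},\qquad H_{CX}w_a=a\,s_a+(r-a)\,s_{r-a},
\]
where $N_k$ is the number of degree-$k$ vertices and $a(k)$ is the component of degree class $k$. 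The second identity holds because each degree-$a$ vertex lies in $a$ edges of $w_a$. From these, $H_{CX}u=n\ket{0}^n+2w$, $H_{CX}w=d$, and $H_{CX}d=2|E(H_r)|\ket{0}^n+r\,w$.

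\textbf{Flip-or-Void.} The identities above show that $\mathcal W_r^A$ is $H_{CX}$-invariant. The chain $\ket{0}^n\mapsto u\mapsto n\ket{0}^n+2w\mapsto\cdots\mapsto d$ shows all four vectors are reached. The vectors lie in different weight sectors, except $u$ and $d$, which are independent since $d=\sum_k k\,s_k$ has nonconstant coefficients. Hence the dimension is exactly $4$.

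\textbf{Flip-or-Stay.} By Proposition~\ref{prop:MixersAsReconfigurationOperators}, on $W_{\mathcal F_r}$ we have $\widehat H_{CX}=nI-D_{\mathcal R}+A_{\mathcal R}$. The relevant degrees are $\deg_{\mathcal R}(\varnothing)=n$, $\deg_{\mathcal R}(\{v\})=1+d_{H_r}(v)$, and $\deg_{\mathcal R}(\{v,w\})=2$. So $D_{\mathcal R}$ is scalar on each $s_k$ and each $w_a$, and together with the formulas above this makes $\mathcal W_r^L$ invariant under $\widehat H_{CX}$. For generation, Remark~\ref{rmk:stay_dla_projector_containment} places $i\sum_v(I-\mathcal P_v)=i(nI-D_{\mathcal R})$ in $\widehat{\mathfrak g}_{\Gamma_r,\mathrm{std}}$; by \eqref{eq:global_containment} $iH_{CX}$ lies there as well. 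Then:
\begin{enumerate}
\item $H_{CX}\ket{0}^n=u$.
\item On the singleton sector, $D_{\mathcal R}$ takes the pairwise distinct values $1+k$ on the $s_k$. Lagrange polynomials in $D_{\mathcal R}$ (elements of the associative envelope) therefore extract each $s_k$ from $u$.
\item Applying $H_{CX}$ to $s_a$ and subtracting the vacuum component, which is available by step 1, gives $w_a$.
\end{enumerate}
The vectors $\ket{0}^n,s_1,\dots,s_{r-1},w_1,\dots,w_m$ have disjoint supports, so they are independent. This gives dimension $1+(r-1)+m=\frac{3r-1}{2}$.

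The main obstacle I anticipate is the bookkeeping of the degree/side correspondence: the coefficients in $H_{CX}w_a$ and the uniqueness of each degree value across components both rely on $r$ being odd. A secondary point is justifying that the cyclic subspace of the real Lie algebra coincides with the cyclic module of the associative algebra, so that the polynomial-in-$D_{\mathcal R}$ projections are legitimate.
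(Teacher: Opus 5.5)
Your proposal is correct and follows essentially the same route as the paper. For Flip-or-Void you use the explicit action of $A_{\mathcal R(\Gamma_r)}$ on $\ket{0}^n,u,w,d$; for Flip-or-Stay you extract $iA_{\mathcal R(\Gamma_r)}$ and the diagonal degree generator $i(nI-D_{\mathcal R})$, separate the $s_k$ by their pairwise distinct degree eigenvalues (your Lagrange projectors are equivalent to the paper's Vandermonde inversion), and recover $w_a$ from $A_{\mathcal R(\Gamma_r)}s_a$. You also verify the invariance of $\mathcal W_r^L$ explicitly, using $A_{\mathcal R(\Gamma_r)}w_a=a\,s_a+(r-a)\,s_{r-a}$ and the listed $\mathcal R$-degrees, which the paper only asserts.
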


\begin{proof}
Recall that the matrix representing the restriction of the Flip-or-Void mixer $H_{\mathrm{CX}}$ to the feasible subspace $W_{\mathcal{F}_{\Gamma_r}}$ in the computational basis coincides with the adjacency matrix $A_{\mathcal{R}(\Gamma_r)}$ of the reconfiguration graph (see \eqref{eq:FlipOrStayAsLaplacian}).

Hence, the corresponding restricted standard dynamical Lie algebra is given by
\begin{equation}
\label{eq:family_std_dla}
    \mathfrak{g}_{\Gamma_r,\mathrm{std}} = \left\langle i A_{\mathcal{R}(\Gamma_r)}, \, i H_P \big|_{W_{\mathcal{F}_{\Gamma_r}}} \right\rangle_{\operatorname{Lie}}.
\end{equation}

The action of $A_{\mathcal{R}(\Gamma_r)}$ on the computational vacuum $\ket{0}^n$ and the aggregate vectors $\{u, d, w\}$ defined in \eqref{eq:u_d_w_definition} is given explicitly by
\begin{equation}
\label{eq:family_adjacency_actions}
\begin{aligned}
    A_{\mathcal{R}(\Gamma_r)} \ket{0}^n &= u, \\
    A_{\mathcal{R}(\Gamma_r)} u &= n \ket{0}^n + 2w, \\
    A_{\mathcal{R}(\Gamma_r)} w &= d, \\
    A_{\mathcal{R}(\Gamma_r)} d &= 2|E(H_r)| \ket{0}^n + rw.
\end{aligned}
\end{equation}
Consequently, $\mathcal W_r^A$ is stable under the action of
$A_{\mathcal R(\Gamma_r)}$. Since $H_P\ket{x}=(n-2|x|)\ket{x}$, the spanning vectors $\ket{0}^n$, $u$, $d$, and $w$ are eigenvectors of $H_P$ with eigenvalues $n$, $n-2$, $n-2$, and $n-4$, respectively.
Thus, $\mathcal W_r^A$ is invariant under both generators of $\mathfrak g_{\Gamma_r,\mathrm{std}}$ and contains $\ket{0}^n$. By minimality of the vacuum-generated cyclic subspace,
\begin{equation}
\label{eq:family_adjacency_cyclic_first_containment}
    W_{\mathfrak g_{\Gamma_r,\mathrm{std}},\ket{0}^n}
    \subseteq \mathcal W_r^A.
\end{equation}

Conversely, expressing $u$, $w$, and $d$ via successive applications of $A_{\mathcal{R}(\Gamma_r)}$ yields
\begin{equation}
\label{eq:family_reverse_generation}
    u = A_{\mathcal{R}(\Gamma_r)} (\ket{0}^n), \qquad 
    w = \frac{1}{2} \left( A_{\mathcal{R}(\Gamma_r)} (u) - n \ket{0}^n \right), \qquad 
    d = A_{\mathcal{R}(\Gamma_r)} (w).
\end{equation}
As $\ket{0}^n \in W_{\mathfrak{g}_{\Gamma_r,\mathrm{std}},\ket{0}^n}$ and $W_{\mathfrak{g}_{\Gamma_r,\mathrm{std}},\ket{0}^n}$ is closed under $A_{\mathcal{R}(\Gamma_r)}$, all three vectors $u, w, d$ necessarily belong to the subspace $W_{\mathfrak{g}_{\Gamma_r,\mathrm{std}},\ket{0}^n}$. Consequently,
\begin{equation}
\label{eq:family_adjacency_cyclic_space}
    W_{\mathfrak{g}_{\Gamma_r,\mathrm{std}},\ket{0}^n} = \mathcal{W}_r^A.
\end{equation}

Finally, $\ket{0}^n$ (weight $0$) and $w$ (weight $2$) belong to orthogonal weight sectors and are mutually orthogonal to the weight-$1$ sector containing $u$ and $d$. Within the weight-$1$ sector, $u$ and $d$ expand as
\begin{equation}
\label{eq:weight1_vectors_expansion}
    u = \sum_{k=1}^{r-1} s_k, \qquad d = \sum_{k=1}^{r-1} k s_k.
\end{equation}
The vectors $s_1, \ldots, s_{r-1}$ are non-zero and have mutually disjoint supports. As the coefficient vectors $(1, \ldots, 1)^T$ and $(1, 2, \ldots, r-1)^T$ are linearly independent, $u$ and $d$ are linearly independent as well. Therefore, $\{\ket{0}^n, u, d, w\}$ forms a basis for $\mathcal{W}_r^A$, establishing
\begin{equation}
\label{eq:family_adjacency_cyclic_dimension}
    \dim_{\mathbb{C}} \left( W_{\mathfrak{g}_{\Gamma_r,\mathrm{std}},\ket{0}^n} \right) = \dim_{\mathbb{C}} \left( \mathcal{W}_r^A \right) = 4.
\end{equation}

Next, we determine the cyclic subspace $W_{\widehat{\mathfrak{g}}_{\Gamma,\mathrm{std}},\ket{0}^n}$ generated by the computational vacuum state $\ket{0}^n$ under the standard dynamical Lie algebra employing the Flip-or-Stay mixer.
We establish the element
\begin{equation}
    \label{eq:Std_DLA_containment_Gamma_r}
    iA_{\mathcal R(\Gamma_r)}
    =
    -\frac14
    \left[
        iH_P,
        \left[
            iH_P,
            i\widehat H_{CX}
        \right]
    \right]
    \in\widehat{\mathfrak{g}}_{\Gamma,\mathrm{std}}.
\end{equation}
inside the DLA.

Consequently, subtracting $i A_{\mathcal{R}(\Gamma_r)}$ from $i \widehat{H}_{\mathrm{CX}}$ isolates the diagonal degree generator:
\begin{equation}
\label{eq:diagonal_degree_generator_def}
    \mathcal{D}_r : = i \widehat{H}_{\mathrm{CX}} - i A_{\mathcal{R}(\Gamma_r)} \in \widehat{\mathfrak{g}}_{\Gamma_r,\mathrm{std}}.
\end{equation}

The operator $\mathcal{D}_r$ acts diagonally on each  singleton state $s_k$ via
\begin{equation}
\label{eq:Dr_action_sk}
    \mathcal{D}_r (s_k) = i(n - 1 - k) s_k, \qquad k = 1, \ldots, r-1,
\end{equation}
where the eigenvalues $\lambda_k := i(n - 1 - k)$ are pairwise distinct. Applying $A_{\mathcal{R}(\Gamma_r)}$ to the computational vacuum state yields 
\[
u=A_{\mathcal{R}(\Gamma_r)}(\ket{0}^n) = \sum_{k=1}^{r-1} s_k \in W_{\widehat{\mathfrak{g}}_{\Gamma_r,\mathrm{std}},\ket{0}^n}.
\]
Repeated action of $\mathcal{D}_r$ on $u$ generates the sequence of states
\begin{equation}
\label{eq:vandermonde_system}
    \mathcal{D}_r^t (u) = \sum_{k=1}^{r-1} \lambda_k^t s_k \in W_{\widehat{\mathfrak{g}}_{\Gamma_r,\mathrm{std}},\ket{0}^n}, \qquad t = 0, 1, \ldots, r-2.
\end{equation}
Since the associated $(r-1) \times (r-1)$ coefficient matrix $V_{tk} = (\lambda_k^t)$ is a non-singular Vandermonde matrix, the system can be inverted to isolate each singleton vector individually:
\begin{equation}
\label{eq:sk_in_cyclic_subspace}
    s_k \in W_{\widehat{\mathfrak{g}}_{\Gamma_r,\mathrm{std}},\ket{0}^n}, \qquad k = 1, \ldots, r-1.
\end{equation}

Similarly, the action of $A_{\mathcal{R}(\Gamma_r)}$ on each individual $s_k$ is given by
\begin{equation}
\label{eq:adjacency_action_sk}
    A_{\mathcal{R}(\Gamma_r)} (s_k) = (r-k)\ket{0}^n + w_{a(k)}, \qquad \text{where } a(k) := \min\{k, r-k\}.
\end{equation}

As $\ket{0}^n, s_k \in \mathcal{W}_{\widehat{\mathfrak{g}}_{\Gamma_r,\mathrm{std}},\ket{0}^n}$ and the cyclic subspace is stable under $A_{\mathcal{R}(\Gamma_r)}$, it follows directly that
\begin{equation}
\label{eq:wa_in_cyclic_subspace}
    w_a \in \mathcal{W}_{\widehat{\mathfrak{g}}_{\Gamma_r,\mathrm{std}},\ket{0}^n} \qquad \text{for all } a = 1, \dots, m.
\end{equation}
Since $\mathcal{W}_r^L = \operatorname{span}\{\ket{0}^n, s_1, \dots, s_{r-1}, w_1, \dots, w_m\}$, this establishes the forward inclusion
\begin{equation}
\label{eq:forward_subspace_containment}
    \mathcal{W}_r^L \subseteq \mathcal{W}_{\widehat{\mathfrak{g}}_{\Gamma_r,\mathrm{std}},\ket{0}^n}.
\end{equation}

Conversely, the vacuum state satisfies $\ket{0}^n \in \mathcal{W}_r^L$, and $\mathcal{W}_r^L$ is $\widehat{\mathfrak{g}}_{\Gamma_r,\mathrm{std}}$-stable (as verified by checking the explicit action of the operators $i A_{\mathcal{R}(\Gamma_r)}$, $i H_P$ and $\mathcal{D}_r$ on the orthonormal basis \eqref{eq:orthonormal_basis_WrL}). Since $\mathcal{W}_{\widehat{\mathfrak{g}}_{\Gamma_r,\mathrm{std}},\ket{0}^n}$ is defined as the minimal invariant subspace containing $\ket{0}^n$, we obtain the reverse inclusion $\mathcal{W}_{\widehat{\mathfrak{g}}_{\Gamma_r,\mathrm{std}},\ket{0}^n} \subseteq \mathcal{W}_r^L$, establishing the subspace equality
\begin{equation}
\label{eq:cyclic_subspace_equality}
    \mathcal{W}_{\widehat{\mathfrak{g}}_{\Gamma_r,\mathrm{std}},\ket{0}^n} = \mathcal{W}_r^L.
\end{equation}

Finally, it follows from \eqref{eq:sk_definition} and \eqref{eq:wa_definition} that the vectors in \eqref{eq:aggregate_subspace_hat_def} have mutually disjoint, non-empty supports and are therefore linearly independent. Therefore,
\begin{equation}
\label{eq:dimension_complex_subspace}
\dim_{\mathbb{C}} W_r^L = 1 + (r-1) + \frac{r-1}{2} = \frac{3r-1}{2}.
\end{equation}
\end{proof}

Having characterized the vacuum-generated cyclic subrepresentations in Proposition~\ref{prop:vacuum_cyclic_subspaces}, we now proceed to the proof of Theorem~\ref{thm:explicit_family_orbit_separation}.

\begin{proof}[Proof of Theorem \ref{thm:explicit_family_orbit_separation}]
We start by analyzing the structure of the restriction of the DLA $\widehat{\mathfrak{g}}_{\Gamma_r,\mathrm{std}}$ to the cyclic subspace $\mathcal{W}_{\widehat{\mathfrak{g}}_{\Gamma_r,\mathrm{std}},\ket{0}^n}=\mathcal{W}_r^L$. With respect to the orthonormal basis of $\mathcal{W}_r^L$ given by (cf.~\eqref{eq:aggregate_subspace_hat_def})
\begin{equation}
\label{eq:orthonormal_basis_WrL}
    \ket{0} := \ket{0}^n,
    \qquad
    \ket{k} := \frac{s_k}{\sqrt{r-k}},
    \qquad
    \ket{\overline{a}} := \frac{w_a}{\sqrt{a(r-a)}},
\end{equation}
where $k \in \{1, \dots, r-1\}$ and $a \in \left\{1, \dots, \frac{r-1}{2}\right\}$,
the only nonzero off-diagonal matrix elements of the reconfiguration adjacency operator $A_{\mathcal{R}(\Gamma_r)}\big|_{\mathcal{W}_r^L}$ (where $iA_{\mathcal{R}(\Gamma_r)}\big|_{\mathcal{W}_r^L} \in \widehat{\mathfrak{g}}_{\Gamma_r,\mathrm{std}}\big|_{\mathcal{W}_r^L}$ as established in \eqref{eq:Std_DLA_containment_Gamma_r}) are
\begin{equation}
\label{eq:adjacency_matrix_elements_WrL}
    \bra{0} A_{\mathcal{R}(\Gamma_r)} \ket{k}
    =
    \sqrt{r-k},
    \qquad
    \left\langle\overline{a(k)}\left| A_{\mathcal{R}(\Gamma_r)}\right| k\right\rangle
    =
    \sqrt{k},
    \qquad
    a(k) := \min\{k, r-k\}.
\end{equation}
Thus, the nonzero off-diagonal entries of this operator correspond precisely to coupling pairs of the form $(0,k)$ and $(k,\overline{a(k)})$.

As shown in \eqref{eq:Dr_action_sk}, the restriction of the degree operator $\mathcal{D}_r$ to $\mathcal{W}_r^L$ is diagonal in the basis \eqref{eq:orthonormal_basis_WrL}, with eigenvalues
\begin{equation}
\label{eq:Dr_eigenvalues_WrL}
    z_0 = 0,
    \qquad
    z_k = i(n - 1 - k),
    \qquad
    z_{\overline{a}} = i(n - 2).
\end{equation}

Consequently, the differences of the $\mathcal{D}_r$-eigenvalues across the two types of coupling pairs are

\begin{equation}
\label{eq:eigenvalue_differences_WrL}
    z_k - z_0 = i(n - 1 - k)
    \quad\text{for } \{0,k\},
    \qquad
    z_k - z_{\overline{a(k)}} = i(1 - k)
    \quad\text{for } \{k, \overline{a(k)}\}.
\end{equation}

As $k$ ranges from $1$ to $r-1$, these eigenvalue differences are strictly distinct within each family. Furthermore, the two sets of differences are disjoint because the smallest absolute value in the first set strictly exceeds the largest in the second set. Indeed, using $n = \frac{r(r-1)}{2}$, we obtain

\begin{equation}
\label{eq:frequency_disjointness_bound}
    (n - r) - (r - 2)
    =
    \frac{r(r-3)}{2} - (r - 2)
    =
    \frac{(r-1)(r-4)}{2} > 0
    \qquad \text{for } r \ge 5.
\end{equation}

Thus, all absolute eigenvalue differences, and therefore all squared transition coefficients, are pairwise distinct. 

Consequently, polynomial interpolation in the superoperator $-\operatorname{ad}_{\mathcal{D}_r}^2$, applied to the restricted adjacency operator $iA_{\mathcal{R}(\Gamma_r)}\big|_{\mathcal{W}_r^L}$, isolates the purely imaginary symmetric, skew-Hermitian generators
\begin{equation}
\label{eq:X_pq_definition}
    X_{pq} := i(E_{pq} + E_{qp})
\end{equation}
for every coupling edge $\{p,q\}$ in the basis connectivity graph of $\mathcal{W}_r^L$. Whenever the corresponding transition frequency $\omega_{pq} := z_p - z_q$ is non-zero, the complementary skew-symmetric generator is obtained via the commutator
\begin{equation}
\label{eq:Y_pq_definition}
    Y_{pq} := E_{pq} - E_{qp} = \frac{1}{i(z_p-z_q)} [\mathcal{D}_r, X_{pq}].
\end{equation}

The unique zero-frequency coupling pair is $(1, \overline{1})$, where $\omega_{1,\overline{1}} = z_1 - z_{\overline{1}} = 0$. Its missing direction $Y_{1,\overline{1}}$ is synthesized by first constructing the diagonal generator
\begin{equation}
\label{eq:diagonal_generator_isolation}
    i(E_{11} - E_{00}) = \frac{1}{2} [X_{01}, Y_{01}],
\end{equation}
and then commuting it with $X_{1,\overline{1}}$, which yields 
\[
Y_{1,\overline{1}} = -[i(E_{11} - E_{00}), X_{1,\overline{1}}]
\]
Since the coupling graph is connected, commutators along paths generate \(X_{pq}\) and \(Y_{pq}\) for every distinct pair of basis indices. Their commutators generate all traceless diagonal directions. Hence the restricted algebra contains \(\mathfrak{su}(W_r^L)\).

Furthermore, the restriction of the cost Hamiltonian $H_P$ to $\mathcal{W}_r^L$ has a non-zero trace. Indeed, since the basis of $\mathcal{W}_r^L$ consists of $1$ vacuum vector, $r-1$ singleton vectors, and $m = \frac{r-1}{2}$ pair vectors, direct evaluation yields
\begin{equation}
\label{eq:HP_trace_WrL}
    \operatorname{Tr}\!\left( H_P\big|_{\mathcal{W}_r^L} \right)
    =
    \frac{(r-1)(3r^2-r-16)}{4}.
\end{equation}

As $3r^2 - r - 16 > 0$ for all $r \ge 5$, the trace in \eqref{eq:HP_trace_WrL} does not vanish. Consequently, the imaginary scalar identity operator $i I_{\mathcal{W}_r^L}$ lies in the algebra,  establishing that the restriction of the dynamical Lie algebra $\widehat{\mathfrak{g}}_{\Gamma_r,\mathrm{std}}$ to the cyclic subspace $\mathcal{W}_r^L$ is the full unitary Lie algebra:
\begin{equation}
\label{eq:full_unitary_dla_equality}
     \widehat{\mathfrak{g}}_{\Gamma_r,\mathrm{std}} \big|_{\mathcal{W}_r^L} = \mathfrak{u}(\mathcal{W}_r^L).
\end{equation}

Exponentiating the identity \eqref{eq:full_unitary_dla_equality} shows that the restricted dynamical Lie group is 
\[
\widehat{G}_{\Gamma_r}\big|_{\mathcal{W}_r^L} \cong \mathrm{U}(\mathcal{W}_r^L)
\]
and, therefore, acts transitively on the complex unit sphere $\mathbb{S}(\mathcal{W}_r^L)$. In particular, the reachable state orbit from the vacuum state $\ket{0}^n \in \mathcal{W}_r^L$ fills the entire unit sphere:
\begin{equation}
\label{eq:laplacian_orbit_sphere}
    \widehat{G}_{\Gamma_r} \cdot \ket{0}^n = \mathbb{S}(\mathcal{W}_r^L).
\end{equation}
By contrast, the invariance of the adjacency cyclic subspace $\mathcal{W}_r^A$ with respect to the DLA $\mathfrak{g}_{\Gamma_r,\mathrm{std}}$ restricts the corresponding orbit to
\begin{equation}
\label{eq:adjacency_orbit_sphere}
    G_{\Gamma_r} \cdot \ket{0}^n \subseteq \mathbb{S}(\mathcal{W}_r^A).
\end{equation}

Recall that the real dimension of the unit sphere $\mathbb{S}(\mathcal{V})$ in a complex Hilbert space $\mathcal{V}$ is $2\dim_{\mathbb{C}}(\mathcal{V}) - 1$. Given $\dim_{\mathbb{C}}(\mathcal{W}_r^L) = 1 + (r-1) + \frac{r-1}{2} = \frac{3r-1}{2}$ and $\dim_{\mathbb{C}}(\mathcal{W}_r^A) \le 4$, we obtain the following orbit dimension scaling and bounds:
\begin{equation}
\label{eq:orbit_dimension_bounds}
    \dim_{\mathbb{R}}\!\left(G_{\Gamma_r} \cdot \ket{0}^n\right) \le 7,
    \qquad\text{whereas}\qquad
    \dim_{\mathbb{R}}\!\left(\widehat{G}_{\Gamma_r} \cdot \ket{0}^n\right) = 2\left(\frac{3r-1}{2}\right) - 1 = 3r - 2.
\end{equation}

\end{proof}

\begin{rmk}
\label{rmk:strict_orbit_containment}
The Lie algebra containment $\mathfrak{g}_{\Gamma_r,\mathrm{std}} \subseteq \widehat{\mathfrak{g}}_{\Gamma_r,\mathrm{std}}$ (established in \eqref{eq:Std_DLA_containment_Gamma_r}) directly induces the Lie group inclusion $G_{\Gamma_r} \subseteq \widehat{G}_{\Gamma_r}$, which in turn yields the reachable state orbit inclusion

\begin{equation}
\label{eq:orbit_inclusion_chain}
    G_{\Gamma_r} \cdot \ket{0}^n \subseteq \widehat{G}_{\Gamma_r} \cdot \ket{0}^n.
\end{equation}

As $7 < 3r - 2$ for all $r \ge 5$, this orbit inclusion is strictly proper, confirming that the Laplacian-based QAOA possesses strictly greater state-reachability expressivity on the computational vacuum than its adjacency-based counterpart.
\end{rmk}

Next, we rigorously verify the loss function variance formulas for deep ma-QAOA circuits proposed in Theorem~\ref{thm:BP_mitigation}.

\begin{proof}[Proof of Theorem~\ref{thm:BP_mitigation}]
Since the circuit ensemble forms an exact unitary \(2\)-design on
the associated dynamical group, its first two moments coincide with
the corresponding Haar moments. We may therefore compute the
parameter-averaged loss variance by integrating over
\(G\cong\mathrm{U}(W_{\mathcal F})\).

For convenience, set
\[
    H:=H_P\big|_{W_{\mathcal F}}.
\]
Since \(iH\in\mathfrak g\big|_{W_{\mathcal F}}\),
Theorem~1 of \cite{RBSKMLC} applies and gives
\begin{equation}
\label{eq:VarAndDim1}
    \operatorname{Var}_{G}
    \bigl[\ell_U(\rho,H)\bigr]
    =
    \frac{
        \mathcal P_{\mathfrak{su}(W_{\mathcal F})}(\rho)\,
        \mathcal P_{\mathfrak{su}(W_{\mathcal F})}(H)
    }{
        \dim(\mathfrak{su}(W_{\mathcal F}))
    }
    =
    \frac{
        \mathcal P_{\mathfrak{su}(W_{\mathcal F})}(\rho)\,
        \mathcal P_{\mathfrak{su}(W_{\mathcal F})}(H)
    }{
        d^2-1
    }.
\end{equation}

We identify the Lie algebra $\mathfrak{su}(W_{\mathcal{F}})$, up to a factor of $i$, with the real vector space of traceless Hermitian operators on $W_{\mathcal{F}}$. Accordingly, for any Hermitian operator $A \in \operatorname{End}(W_{\mathcal{F}})$, we define its traceless component $A_0$ and generalized purity $\mathcal{P}_{\mathfrak{su}(W_{\mathcal{F}})}(A)$ by
\[
    A_0 := A - \frac{\operatorname{Tr}(A)}{d} I_{W_{\mathcal{F}}}, \qquad \mathcal{P}_{\mathfrak{su}(W_{\mathcal{F}})}(A) := \operatorname{Tr}(A_0^2).
\]
For any pure state $\rho = \ket{\psi}\bra{\psi}$ corresponding to a unit vector $\ket{\psi} \in W_{\mathcal{F}}$, $\rho$ is a rank-one projection satisfying $\rho^2 = \rho$ and $\operatorname{Tr}(\rho) = 1$. It follows directly that
\begin{equation}
\label{eq:RhoProjection}
    \mathcal{P}_{\mathfrak{su}(W_{\mathcal{F}})}(\rho) = \operatorname{Tr}\left[ \left( \rho - \frac{1}{d} I_{W_{\mathcal{F}}} \right)^2 \right] = 1 - \frac{1}{d}.
\end{equation}
Substituting identity~\eqref{eq:RhoProjection} into~\eqref{eq:VarAndDim1} and using the simplification
\[
    \frac{1 - \frac{1}{d}}{d^2 - 1} = \frac{1}{d(d+1)},
\]
we obtain the simplified variance expression:
\begin{equation}
\label{eq:HaarVarianceSimplified}
    \operatorname{Var}_{G} \bigl[ \ell_U(\rho, H) \bigr] = \frac{\operatorname{Tr}(H_0^2)}{d(d+1)}.
\end{equation}

Recall that
\(
    H_P=\sum_{v\in V}Z_v.
\)
For every \(x\in\mathcal F\), we therefore have
\[
    H\ket{x}
    =
    \bigl(n-2F(x)\bigr)\ket{x}
    =
    \bigl(n-2|x|\bigr)\ket{x}.
\]
Let $S \sim \operatorname{Unif}(\mathcal{F})$ be a discrete random variable uniformly distributed over the feasible configuration set $\mathcal{F}$, so that $|S| = F(S)$ represents the cardinality of a uniformly chosen independent set.

The  projection of $H_P$ onto the traceless Lie algebra $\mathfrak{su}(W_{\mathcal{F}})$ can be evaluated via pairwise differences:
\begin{align}
\label{eq:ProblemHamiltonianProjectionVariance}
    \mathcal{P}_{\mathfrak{su}(W_{\mathcal{F}})}(H) 
    &= \frac{1}{2d} \sum_{x, y \in \mathcal{F}} (h_x - h_y)^2 \nonumber \\
    &= \frac{2}{d} \sum_{x, y \in \mathcal{F}} \bigl(F(x) - F(y)\bigr)^2 \nonumber \\
    &= \frac{2}{d} \left( 2d^2 \operatorname{Var}(|S|) \right) = 4d \operatorname{Var}(|S|).
\end{align}

Combining Equation~\eqref{eq:HaarVarianceSimplified} with Equation~\eqref{eq:ProblemHamiltonianProjectionVariance} yields the exact variance identity
\[
    \operatorname{Var}_{G}\bigl[ \ell_U(\rho, H_P) \bigr] = \frac{4\operatorname{Var}(|S|)}{d+1},
\]
completing the proof of \eqref{eq:HaarVarianceExactIdentity}.

It remains to bound the classical variance of $|S|$. The pairwise variance identity gives
\begin{equation}
\label{eq:IndependentSetPairwiseVariance}
    \operatorname{Var}(|S|) = \frac{1}{2d^2} \sum_{L, T \in \mathcal{F}} \bigl(|L| - |T|\bigr)^2.
\end{equation}

To establish a lower bound, let $m := \alpha(\Gamma)$ denote the independence number of $\Gamma$. Retaining the terms with \(T=\varnothing\) or \(L=\varnothing\) in \eqref{eq:IndependentSetPairwiseVariance} yields 
\[
\operatorname{Var}(|S|) \ge \frac{1}{d^2} \sum_{A \in \mathcal{F} \setminus \{\varnothing\}} |A|^2.
\]

Since every subset of an independent set is itself independent, we may select a chain of independent sets $A_1, \ldots, A_m \in \mathcal{F}$ such that $|A_j| = j$ for each $j \in \{1, \ldots, m\}$. The remaining $d - m - 1$ nonempty independent sets in $\mathcal{F}$ each have cardinality at least one. Consequently,
\begin{align}
    \operatorname{Var}(|S|) 
    &\ge \frac{1}{d^2} \left( \sum_{j=1}^{m} j^2 + d - m - 1 \right) \notag \\
    &= \frac{1}{d^2} \left( \frac{m(m+1)(2m+1)}{6} + d - m - 1 \right).
\label{eq:IndependentSetVarianceLowerBound}
\end{align}

For the upper bound, notice that $0 \le |S| \le m$. Setting $\mu := \mathbb{E}[|S|]$, the bound $|S|^2 \le m|S|$ implies
\begin{equation}
\label{eq:IndependentSetVarianceUpperBound}
    \operatorname{Var}(|S|) = \mathbb{E}(|S|^2) - \mu^2 \le m\mu - \mu^2 = \frac{m^2}{4} - \left(\mu - \frac{m}{2}\right)^2 \le \frac{m^2}{4}.
\end{equation}

Substituting the lower bound \eqref{eq:IndependentSetVarianceLowerBound} (using $m \ge 1 \implies \frac{m(m+1)(2m+1)}{6} - m \ge 0$) and upper bound \eqref{eq:IndependentSetVarianceUpperBound} into the exact variance identity \eqref{eq:HaarVarianceExactIdentity} yields
\[
    \frac{4}{d^2(d+1)} \left( \frac{m(m+1)(2m+1)}{6} + d - m - 1 \right) \le \operatorname{Var}_{G}\bigl[\ell_U(\rho, H)\bigr] \le \frac{m^2}{d+1}.
\]
For $d \ge 2$, these bounds establish the asymptotic scaling
\[
    \operatorname{Var}_{G}\bigl[\ell_U(\rho, H)\bigr] \in \Omega\left(\frac{1}{d^{2}}\right) \cap\mathcal{O}\left(\frac{m^2}{d}\right),
\]
which proves \eqref{eq:HaarVarianceAsymptoticScaling}.
\end{proof}

We conclude with the proof of Theorem~\ref{thm:equal_standard_orbits}, providing formal reachability and state-preparation guarantees for the standard Flip-or-Void and Flip-or-Stay ans\"atze.

\begin{proof}[Proof of Theorem \ref{thm:equal_standard_orbits}]
For every feasible string \(x\in\mathcal F\),
\begin{equation}
\label{eq:HP_weight_eigenvalue_orbit_proof}
    H_P\ket{x}
    =
    \bigl(n-2|x|\bigr)\ket{x}.
\end{equation}
Therefore, \(\ket{0}^n\) is the unique highest-energy eigenstate of
\(H_P\big|_{W_{\mathcal F}}\). Equivalently,
\(\ket{\zeta_\Gamma}\), \(\ket{\xi_{\mathcal F}}\), and
\(\ket{0}^n\) are the unique ground states of
\[
    -H_{CX}\big|_{W_{\mathcal F}},
    \qquad
    -\widehat H_{CX}\big|_{W_{\mathcal F}},
    \qquad
    -H_P\big|_{W_{\mathcal F}},
\]
respectively.

We first consider the Flip-or-Void mixer. Applying the QAOA
convergence theorem \cite[Theorem~12]{BKZS} with mixer \(H_{CX}\),
phase separator \(H_P\), and initial state
\(\ket{\zeta_\Gamma}\), we obtain a sequence of finite-depth QAOA
unitaries \(U_p\in G_{\Gamma,\mathrm{std}}\) such that
\begin{equation}
\label{eq:flip_void_energy_convergence_to_vacuum}
    \lim_{p\to\infty}
    \bra{\zeta_\Gamma}
        U_p^\dagger H_PU_p
    \ket{\zeta_\Gamma}
    =
    n.
\end{equation}

Expressing $U_p \ket{\zeta_\Gamma}$ in the feasible computational basis as
\begin{equation}
    U_p \ket{\zeta_\Gamma} = \sum_{x \in \mathcal{F}} a_x^{(p)} \ket{x},
\end{equation}
equation \eqref{eq:HP_weight_eigenvalue_orbit_proof} yields
\begin{equation}
\label{eq:energy_controls_vacuum_overlap}
    n - \bra{\zeta_\Gamma} U_p^\dagger H_P U_p \ket{\zeta_\Gamma}
    = 2 \sum_{x \in \mathcal{F}} |x| \left| a_x^{(p)} \right|^2
    \ge 2 \left( 1 - \left| a_{0^n}^{(p)} \right|^2 \right).
\end{equation}

Consequently, \eqref{eq:flip_void_energy_convergence_to_vacuum} implies that along this sequence of circuits,

\begin{equation}
    \lim_{p \to \infty} \left| \bra{0^n} U_{p} \ket{\zeta_\Gamma} \right|^2 = 1,
\end{equation}

or equivalently,

\begin{equation}
\label{eq:vacuum_in_flip_void_orbit_closure}
    [\ket{0}^n] \in \overline{G_{\Gamma,\mathrm{std}} \cdot [\ket{\zeta_\Gamma}]},
\end{equation}

where $[v] \in \mathbb{P}(W)$ denotes the ray associated with a non-zero vector $v \in W$ in the projective Hilbert space $\mathbb{P}(W) = (W \setminus \{0\}) / \mathbb{C}^*$.

We now show that the projective orbit appearing in
\eqref{eq:vacuum_in_flip_void_orbit_closure} is closed. Let
\[
    \mathcal K_\Gamma
    :=
    W_{\mathfrak g_{\Gamma,\mathrm{std}},
      \ket{\zeta_\Gamma}}
\]
be the cyclic subspace generated by \(\ket{\zeta_\Gamma}\). The
representation of \(\mathfrak g_{\Gamma,\mathrm{std}}\) on
\(\mathcal K_\Gamma\) is irreducible. Indeed, if
\(\mathcal K_\Gamma\) admitted a nontrivial orthogonal decomposition into invariant subspaces, then the corresponding orthogonal projections would commute with \(H_{CX}\). Applying these projections to \(\ket{\zeta_\Gamma}\) would produce mutually orthogonal eigenvectors belonging to the largest eigenvalue of \(H_{CX}\). Since that eigenvalue is simple, \(\ket{\zeta_\Gamma}\) can have a nonzero component in only one invariant summand. Its cyclicity then forces all other summands to vanish.

By Schur's lemma, the center of the restricted dynamical Lie algebra $\mathfrak{g}_{\Gamma,\mathrm{std},\mathcal{F}}$ acts on $\mathcal{K}_\Gamma$ via scalar multiples of the identity and thus acts trivially upon projectivization. As  $\mathfrak{g}_{\Gamma,\mathrm{std},\mathcal{F}}$ is reductive, the action on $\mathbb{P}(\mathcal{K}_\Gamma)$ factors through a semisimple Lie algebra whose associated connected Lie group is compact~\cite[Proposition~7.9]{Knapp}. As a result, the corresponding group orbit in projective space is compact, and hence closed.

Therefore, Eq.~\eqref{eq:vacuum_in_flip_void_orbit_closure} yields
\begin{equation}
\label{eq:vacuum_in_flip_void_orbit}
    [\ket{0}^{\otimes n}] \in G_{\Gamma,\mathrm{std}} \cdot [\ket{\zeta_\Gamma}].
\end{equation}

By orbit symmetry under group inversion ($g \mapsto g^{-1}$), \eqref{eq:vacuum_in_flip_void_orbit} is equivalent to
\begin{equation}
\label{eq:projective_orbit_containment}
    [\ket{\zeta_\Gamma}] \in G_{\Gamma,\mathrm{std}} \cdot [\ket{0}^n],
\end{equation}
which implies the existence of an element $g \in G_{\Gamma,\mathrm{std}}$ and a global phase $\phi \in \mathbb{R}$ such that
\begin{equation}
\label{eq:orbit_action_with_phase}
    g \ket{0}^n = e^{i\phi} \ket{\zeta_\Gamma}.
\end{equation}

Since $i H_P \in \mathfrak{g}_{\Gamma,\mathrm{std}}$ and $H_P \ket{0}^n = n \ket{0}^n$, the global phase can be absorbed directly into the group element. Defining
\begin{equation}
\label{eq:phase_adjusted_group_element}
    g' := g e^{-it H_P} \in G_{\Gamma,\mathrm{std}} \text{ with } t = \frac{\phi}{n},
\end{equation}
we obtain exact state preparation:
\begin{equation}
\label{eq:exact_state_preparation}
    g' \ket{0}^n = g \left( e^{-i t n} \ket{0}^n \right) = e^{-i\phi} g \ket{0}^n = \ket{\zeta_\Gamma}.
\end{equation}

The argument for the Flip-or-Stay mixer proceeds identically upon replacing $H_{\mathrm{CX}}$ and $\ket{\zeta_\Gamma}$ with $\widehat{H}_{\mathrm{CX}}$ and $\ket{\xi_{\mathcal{F}}}$, respectively.

\end{proof}

The finite-depth exact-preparation result in
Theorem~\ref{thm:qaoa_perron_frobenius} is a direct consequence of
the fact that each of the respective states
\(\ket{\zeta_\Gamma}\) and \(\ket{\xi_{\mathcal F}}\) can, up to a
global phase, be obtained from \(\ket{0}^n\) through the action of an
element of the corresponding connected analytic dynamical Lie group.
Since these states are the ground states of the respective mixer
Hamiltonians, the main convergence result of \cite{BKZS} implies that
the closure of each corresponding projective orbit intersects the
projectivization of the problem ground-state subspace:
\[
    \overline{
        G_{\Gamma,\mathrm{std}}
        \cdot[\ket{\zeta_\Gamma}]
    }
    \cap
    \mathbb P(W_{P,\min})
    \neq\varnothing,
    \qquad
    \overline{
        \widehat G_{\Gamma,\mathrm{std}}
        \cdot[\ket{\xi_{\mathcal F}}]
    }
    \cap
    \mathbb P(W_{P,\min})
    \neq\varnothing.
\]
The closedness of these projective orbits, established in the proof
of Theorem~\ref{thm:equal_standard_orbits}, promotes these orbit-closure
statements to actual orbit intersections. Every element of either
dynamical group is a finite product of evolutions generated by the
corresponding mixer and problem Hamiltonians and can therefore be
written, after inserting zero parameters if necessary, as a
finite-depth QAOA circuit. Finally,
Theorem~\ref{thm:equal_standard_orbits} shows that the orbit of the
appropriate mixer ground state coincides with the orbit of
\(\ket{0}^n\), allowing the same finite-depth exact-preparation
conclusion for both initial states. This establishes
Theorem~\ref{thm:qaoa_perron_frobenius}.

\end{document}